\providecommand\creflastconjunction{, and }
\newcommand{\Crefor}[1]{\renewcommand\crefpairconjunction{ or }\renewcommand\crefmiddleconjunction{, }\renewcommand\creflastconjunction{, or }\Cref{#1}\renewcommand\crefpairconjunction{ and }\renewcommand\crefmiddleconjunction{, }\renewcommand\creflastconjunction{, and }}
\crefname{claim}{Claim}{Claims}
\crefname{observation}{Observation}{Observations}
\definecolor{CoalitionColor}{HTML}{1D42A6}
\definecolor{OxfordRed}{HTML}{AA1A2D}
\tikzstyle{agent}=[fill=white, draw=black, shape=circle, minimum size=30pt]
\tikzstyle{agent_part_A}=[fill=red, draw=black, shape=circle, minimum size=30pt]
\tikzstyle{agent_part_B}=[fill=cyan, draw=black, shape=circle, minimum size=30pt]
\tikzstyle{utility}=[-triangle 45, draw=black, fill=none]
\tikzstyle{dashed-utility}=[fill=none, -triangle 45, dashed]
\tikzstyle{dashed_undirected}=[fill=none, triangle 45 - triangle 45, dashed]
\tikzstyle{text-node}=[fill=white, draw=none, shape=rectangle]
\tikzstyle{utility-symmetric}=[fill=none, draw=black, triangle 45 - triangle 45]
\tikzstyle{agent_part_C}=[fill={rgb,255: red,191; green,255; blue,0}, draw=black, shape=circle, minimum size=30pt]
\tikzstyle{dotted-utility}=[fill=none, -triangle 45, dotted, thick]
\tikzstyle{agent_group}=[fill=white, draw=black, shape=circle, ultra thick, minimum size=30pt]
\tikzstyle{large_agent}=[fill=white, draw=black, shape=circle, minimum size=37pt]
\tikzstyle{large_group}=[fill=white, draw=black, shape=circle, minimum size=37pt, ultra thick]
\tikzstyle{very_large_agent}=[fill=white, draw=black, shape=circle, minimum size=55pt]
\tikzstyle{very_large_group}=[fill=white, draw=black, shape=circle, minimum size=55pt, ultra thick]
\tikzstyle{negative_incentive}=[fill=none, draw=red, -triangle 45]
\tikzstyle{utility_intermediate}=[fill=none, draw=black]
\tikzstyle{new style 0}=[fill=none, draw=red]
\tikzstyle{utilityy}=[->]
\tikzstyle{dashed_undirectedd}=[-, fill={rgb,255: red,227; green,233; blue,245}, draw={rgb,255: red,30; green,98; blue,186}]
\tikzstyle{utility-symmetricc}=[-]
\tikzstyle{dotted-utilityy}=[->, dotted]
\tikzstyle{dashed-utilityy}=[-, dashed]
\tikzstyle{ddashed}=[-, dashed]
\newcommand{\convexpath}[2]{
  [   
  create hullcoords/.code={
    \global\edef\namelist{#1}
    \foreach [count=\counter] \nodename in \namelist {
      \global\edef\numberofnodes{\counter}
      \coordinate (hullcoord\counter) at (\nodename);
    }
    \coordinate (hullcoord0) at (hullcoord\numberofnodes);
    \pgfmathtruncatemacro\lastnumber{\numberofnodes+1}
    \coordinate (hullcoord\lastnumber) at (hullcoord1);
  },
  create hullcoords
  ]
  ($(hullcoord1)!#2!-90:(hullcoord0)$)
  \foreach [
  evaluate=\currentnode as \previousnode using \currentnode-1,
  evaluate=\currentnode as \nextnode using \currentnode+1
  ] \currentnode in {1,...,\numberofnodes} {
    let \p1 = ($(hullcoord\currentnode) - (hullcoord\previousnode)$),
    \n1 = {atan2(\y1,\x1) + 90},
    \p2 = ($(hullcoord\nextnode) - (hullcoord\currentnode)$),
    \n2 = {atan2(\y2,\x2) + 90},
    \n{delta} = {Mod(\n2-\n1,360) - 360}
    in 
    {arc [start angle=\n1, delta angle=\n{delta}, radius=#2]}
    -- ($(hullcoord\nextnode)!#2!-90:(hullcoord\currentnode)$) 
  }
}
\newtheorem{theorem}{Theorem}
\newtheorem{corollary}[theorem]{Corollary}
\newtheorem{lemma}[theorem]{Lemma}
\newtheorem{observation}[theorem]{Observation}
\newtheorem{claim}[theorem]{Claim}
\newtheorem{example}[theorem]{Example}
\newcommand{\NN}{\mathbb{N}\xspace} 
\newcommand{\QQ}{\mathbb{Q}\xspace} 
\newenvironment{claimproof}
{
    
    \proof
}
{
    \endproof
    
}
\def\renewtheorem#1{%
  \expandafter\let\csname#1\endcsname\relax
  \expandafter\let\csname c@#1\endcsname\relax
  \gdef\renewtheorem@envname{#1}
  \renewtheorem@secpar
}
\def\renewtheorem@secpar{\@ifnextchar[{\renewtheorem@numberedlike}{\renewtheorem@nonumberedlike}}
\def\renewtheorem@numberedlike[#1]#2{\newtheorem{\renewtheorem@envname}[#1]{#2}}
\def\renewtheorem@nonumberedlike#1{  
\def\renewtheorem@caption{#1}
\edef\renewtheorem@nowithin{\noexpand\newtheorem{\renewtheorem@envname}{\renewtheorem@caption}}
\renewtheorem@thirdpar
}
\def\renewtheorem@thirdpar{\@ifnextchar[{\renewtheorem@within}{\renewtheorem@nowithin}}
\def\renewtheorem@within[#1]{\renewtheorem@nowithin[#1]}
\newcommand{\problemdefijcai}[3]{
  \begin{center}
    \begin{minipage}{0.95\columnwidth}
      \noindent
      \textsc{#1}
      
      \vspace{2pt}
      \setlength{\tabcolsep}{3pt}
      \begin{tabularx}{\columnwidth}{@{}lX@{}}
        \textbf{Input:} 		& #2 \\
        \textbf{Question:} 	& #3
      \end{tabularx}
    \end{minipage}
  \end{center}
}
\newcommand{\bigO}{\mathcal{O}\xspace} 
\newcommand{\problemname}[1]{\textsc{#1}\xspace}
\newcommand{\RXC}{\problemname{RX3C}}
\newcommand{\XTC}{\problemname{X3C}}
\newcommand{\INDSET}{\problemname{IndSet}}
\renewcommand{\PCD}{\problemname{PCD}}
\newcommand{\NCD}{\problemname{NCD}}
\newcommand{\NS}{\ensuremath{\text{NS}}\xspace}
\newcommand{\IS}{\ensuremath{\text{IS}}\xspace}
\newcommand{\CNS}{\ensuremath{\text{CNS}}\xspace}
\newcommand{\CIS}{\ensuremath{\text{CIS}}\xspace}
\newcommand{\SMS}{\ensuremath{\text{SMS}}\xspace}
\newcommand{\VIS}{\ensuremath{\text{VIS}}\xspace}
\newcommand{\VOS}{\ensuremath{\text{VOS}}\xspace}
\newcommand{\SVS}{\ensuremath{\text{VS}}\xspace} 
\newcommand{\ASHG}{\ensuremath{\text{ASHG}}\xspace}
\newcommand{\ASHGs}{\ensuremath{\text{ASHGs}}\xspace}
\newcommand{\FHG}{\ensuremath{\text{FHG}}\xspace}
\newcommand{\FHGs}{\ensuremath{\text{FHGs}}\xspace}
\newcommand{\MFHG}{\ensuremath{\text{MFHG}}\xspace}
\newcommand{\MFHGs}{\ensuremath{\text{MFHGs}}\xspace}
\newcommand{\Fin}{\ensuremath{F_\mathrm{in}}\xspace}
\newcommand{\Fout}{\ensuremath{F_\mathrm{out}}\xspace}
\newcommand{\qin}{\ensuremath{q_{\mathrm{in}}}\xspace}
\newcommand{\qout}{\ensuremath{q_{\mathrm{out}}}\xspace}
\newcommand{\tin}{\ensuremath{t_{\mathrm{in}}}\xspace}
\newcommand{\tout}{\ensuremath{t_{\mathrm{out}}}\xspace}
\newcommand{\startpart}{\ensuremath{\pi_0}} 
\newcommand{\vf}{\ensuremath{v}\xspace} 
\newcommand{\uf}{\ensuremath{u}\xspace} 
\newcommand{\devimplies}{
    \ensuremath{
    \smash{
    \raisebox{-2.5pt}{
        \stackengine{3.5pt}{\hspace{1pt}\tiny\ensuremath{\rightharpoondown}}{\ensuremath{\subset}}{O}{c}{F}{F}{L}
    }}}
}
\title{Deviation Dynamics in Cardinal Hedonic Games\thanks{We dedicate this work to Dr. Jochen Zech. Rest in peace.}}
\author[1]{Valentin Zech}
\author[2]{Martin Bullinger}
\affil[1]{ \small Department of Computer Science, University of Oxford, UK}
\affil[2]{ \small School of Engineering Mathematics and Technology, University of Bristol, UK\protect\\ \vspace*{0.05cm} zech@vzech.de, martin.bullinger@bristol.ac.uk}
\date{}
\begin{document}

\maketitle

\begin{abstract}
Computing stable partitions in hedonic games is a challenging task because there exist games in which stable outcomes do not exist.
    Even more, these No-instances can often be leveraged to prove computational hardness results.
    We make this impression rigorous in a dynamic model of cardinal hedonic games by providing meta theorems.
    These imply hardness of deciding about the possible or necessary convergence of deviation dynamics based on the mere existence of No-instances.
    Our results hold for additively separable, fractional, and modified fractional hedonic games (\ASHGs, \FHGs, and \MFHGs).
    Moreover, they encompass essentially all reasonable stability notions based on single-agent deviations. 
    In addition, we propose dynamics as a method to find individually rational and contractually individual stable (\CIS) partitions in \ASHGs.  
    In particular, we find that \CIS dynamics from the singleton partition possibly converge after a linear number of deviations but may require an exponential number of deviations in the worst case.
\end{abstract}

\section{Introduction}\label{sec:intro}

The field of Computational Social Choice (COMSOC) is concerned with aggregating potentially conflicting individual preferences of different agents into a compromise solution \citep{BCE+14a}. 
With various applications to, among others, politics, multi-agent systems, and economic processes, coalition formation is among the primary areas of interest within COMSOC \citep{RaVo15a}. 
Here, a group of agents must be divided into distinct coalitions, with each agent having preferences for these divisions.

A common restriction on agents' preferences is that their utility depends only on which agents are present in their own coalition. 
This restriction describes the model of so-called \emph{hedonic games} \citep{DrGr80a}.
Since their introduction, they have been a constant area of interest in the literature on artificial intelligence and multi-agent systems 
\citep{AzSa15a,BER24a}.
Hedonic games have been successfully utilized to model many interesting real-world settings, such as research team formation \citep{AlRe04a}, allocation of indivisible goods \citep{Pete16b}, task allocation for wireless agents \citep{SHB+11a}, and community detection in social networks \citep{ABB+17a}. Further, they have proven to be a powerful theoretical model in the context of clustering \citep{FLN15a,AAK+22a,CLMP22a}, one of the central research topics in the realm of machine learning.

A prominent measure for the desirability of outcomes in hedonic games is stability, defined as the absence of beneficial deviations by agents to join other coalitions \citep{BoJa02a}. 
In certain scenarios, it is sensible to additionally require partial or unanimous consent of the otherwise affected agents, which give rise to a wide landscape of notions of stability \citep{AzSa15a}. 
We will focus on those defined by deviations of single agents. 

Some stability notions guarantee a stable outcome in any hedonic game, e.g., contractual individual stability where a deviation requires unanimous consent of all involved agents. 
For most stability notions, however, stable partitions are not guaranteed to exist, even in fairly restricted game classes.
This gives rise to the problem of deciding whether a given hedonic game admits a stable partition.
A common observation is that No-instances, i.e., games without a stable partition, can be used as gadgets to prove computational boundaries of the existence problem \citep[see, e.g.,][]{SuDi10a,ABS11c,PeEl15a,BBT23a}.\footnote{Two notable exceptions are aversion-to-enemies games and the locally egalitarian variant of hedonic games.
In both cases, the core is nonempty but an outcome in the core is \NP-hard to compute \citep{DBHS06a,BuKo21a}.}

The work discussed so far is only concerned with whether an outcome is stable or not, while it matters less how this outcome is obtained.
One natural way to model the process of obtaining stable outcomes are deviation dynamics, where the agents start in some initial state and then iteratively perform deviations as long as they have an incentive to do so, see, e.g., \citep{BFFMM18a,GaSa19a,BBW21b,BBT23a}.
Such dynamics have previously been utilized successfully, e.g., to show that partitions satisfying a specific stability notion always exist in a particular game class \citep{BoJa02a, BoEl20a, BBT23a,CaKi24a}, to study the complexity of computing stable outcomes \citep{GaSa19a} or to place an upper bound on the price of stability in terms of achieving high social welfare \citep{BFFMM18a,MMV20a}.
While dynamics are, therefore, a powerful general tool, scenarios in which  dynamics are guaranteed to converge offer a decentralized approach to reaching desirable partitions. Thus, they give rise to interesting questions in their own right. 
Specifically, \citet{BBW21b} ask whether, given a hedonic game and a starting partition, dynamics possibly or necessarily converge, i.e., reach a stable partition.

\subsection{Contribution}
We will make the intuition that No-instances lead to computational intractabilities explicit.
In contrast to previous work that explicitly constructs No-instances and uses them to prove individual hardness results \citep[see, e.g.,][]{SuDi10a,BBT23a}, we present meta-theorems that treat No-instances as a black box. 
This approach enables future hardness results to be derived by identifying a single suitable instance.
The meta theorems concern the intractability of possible and necessary convergence of dynamics, and apply to three prominent classes of hedonic games: additively separable \citep{BoJa02a}, fractional \citep{ABB+17a}, and modified fractional \citep{Olse12a} hedonic games.
They hold for most reasonable stability notions based on deviations between Nash deviations (which simply need to make the deviator better off) and contractual individual deviations (which additionally require the consent of all other agents).
We demonstrate the generality of our meta theorems by applying them for a general class of voting-based stability notions that encompass a wide range of known and new stability notions.

Finally, we zoom in on a special case of dynamics that necessarily converge, namely those based on contractual individual deviations for additively separable hedonic games.
When starting from the singleton partition, the resulting partition additionally is individually rational, i.e., at least as good for each agent as being on her own. 
We show that fast convergence is always possible. 
It is, however, unclear how to efficiently identify the associated deviations.
Simply running any sequence of deviations may take an exponential number of steps.
Nonetheless, we identify the structural reason behind this result, leading to a fixed-parameter tractability result based on the number of certain valuation pairs.

\subsection{Related Work}

Hedonic games were first introduced by \citet{DrGr80a}, and later popularized by \citet{BoJa02a}, \citet{BKS01a}, and \citet{CeRo01a}.
An overview is provided in the book chapters by \citet{AzSa15a} and \citet{BER24a}.

The axiomatic and computational properties of stability have been studied extensively in cardinal hedonic games \citep[see, e.g.,][]{DBHS06a,SuDi10a,ABS11c,Woeg13a,BFFMM18a,ABB+17a,BoEl20a,BBT23a}. 
\citet{SuDi10a}, specifically, provide a detailed overview of stability based on single-agent deviations in additively separable hedonic games. 
Related to our efforts to study the computational complexity of finding a partition that is individually rational and contractually individually stable,
\citet{ABS11c} provide an algorithm for computing a (not necessarily individually rational) partition that is contractually individually stable in additively separable hedonic games.
Further, \citet{PeEl15a} utilize a meta approach to show hardness for several game classes and stability notions simultaneously, similar to our unified theory.
In contrast to our investigation of deviation dynamics, their paper concerns the general existence of stable outcomes.

In this light, a recent trend has been to study the dynamic aspects of coalition formation based on beneficial deviations, which offer a decentralized approach to finding stable outcomes and can thus model specific real-world scenarios more realistically. 
Most related is the work by \citet{BBW21b} that studies the computational complexity of possible and necessary convergence of dynamics in a variety of game classes.
The only overlap with our work is the consideration of fractional hedonic games.
While \citet{BBW21b} only study individual stability, our meta theorems work for a much larger set of stability notions and additionally concerns other classes of cardinal hedonic games. 
Subsequently, \citet{BuSu24a} study possible and necessary convergence for the equivalent of Nash stability in a generalization of additively separable hedonic games. 

Further, \citet{BFFMM18a} study Nash stability in fractional hedonic games and, for instance, utilize dynamics to design an algorithm that approximates the maximum social welfare of a Nash stable outcome in polynomial time. 
\citet{GaSa19a} settle the complexity of deciding whether a stable partition exists in symmetric additively separable hedonic games by treating this question as local search problems. 
\citet{BBT23a} also study computational questions related to the existence of stable partitions, where all their positive results are obtained by proving convergence of dynamics. 
\citet{BBK23a} propose a version of hedonic games specifically adapted to a dynamic setting, where utilities change after a deviation takes place.
Their work also has implications for a fixed-utility setting:
In particular, they consider the computational complexity of convergence in a given time limit and prove hardness results for additively separable hedonic games.
In addition, \citet{HVW18a}, \citet{BuKo21a}, and \citet{FMM21a} study dynamics in hedonic games based on group deviations.
Finally, we note that similar dynamic approaches to finding stable solutions have been studied in the context of stable matchings \citep{AbRo95a,HVW18a,BrWi20a}.

Further, the study of dynamic processes has recently received increased interest in the research community of computational social choice and collective decision-making at large \citep[see, e.g.,][]{ZBET24a, EOT24a, ILNN24a, CaNa24a}.

\section{Preliminaries}
In this section, we introduce preliminaries.
We use the convention that $\NN$ is the set of nonnegative integers, including~$0$.
For 
$i \in \NN$, $i \ge 1$,
we denote $[i]:=\{1,\dots, i\}$.

\subsection{Hedonic Games}
We consider a finite set $N$ of $n := |N|$ agents. 
A nonempty subset of agents is called a \emph{coalition}.
We aim to partition the agents in $N$ into disjoint coalitions. 
A \emph{coalition structure} (or \emph{partition}) of $N$ is a subset $\pi \subseteq 2^N$ with $\bigcup_{C \in \pi} C = N$, where, for all $C, D \in \pi$, it holds that $C = D$, or $C \cap D = \emptyset$. 
Given an agent $a \in N$, we denote by $\pi(a)$ the coalition in $\pi$ that contains $a$. 
Let $\mathcal{N}_a = \{C \subseteq N \mid a \in C\}$ denote the set of all coalitions that $a$ can belong to. 
We refer to the partition $\pi = \{ \{a\} \mid a \in N\}$ as the \emph{singleton partition}, and to $\pi = \{N\}$ as the \emph{grand coalition}. Further, for each agent $a \in N$, we call $\{a\}$ the \emph{singleton coalition} of $a$.

A \emph{hedonic game} $G = (N, \succsim)$ consists of a set $N$ of agents, and a \emph{preference profile} $\succsim = (\succsim_a)_{a \in N}$ where $\succsim_a \subseteq \mathcal{N}_a \times \mathcal{N}_a$ is a complete, reflexive, and transitive binary relation called agent $a$'s \emph{preference relation} \citep{DrGr80a}.
Given two coalitions $C, D \in \mathcal{N}_a$, we write $C \succ_a D$ if $C \succsim_a D$ but not $D \succsim_a C$ (i.e., $a$ \emph{strictly prefers} $C$ over $D$).
We say that a partition $\pi$ is \emph{individually rational} if $\pi(a) \succsim_a \{a\}$ for each agent $a \in N$, i.e., no agent would strictly prefer to be in her respective singleton coalition.

Agents have preferences over partitions based on preferences over coalitions. 
Given two partitions $\pi, \pi'$ of $N$, we say that $\pi \succsim_a \pi'$ if and only if $\pi(a) \succsim_a \pi'(a)$. 
Further, we denote by $G - a$ the game with agent set $N \setminus \{a\}$ that is induced by $G$ by removing agent $a$. We write $\pi - a$ to mean the partition of $N \setminus \{a\}$ that resulted from~$\pi$ by removing $a$ from her coalition, formally, $\pi - a := \{ C \setminus \{a\} \mid C \in \pi, C\neq \{a\}\}$.

We consider classes of hedonic games in which preference relations evolve from cardinal utility functions, i.e., agents have numeric value for each coalition and preferences are based on comparing these values. 
Formally, a \emph{cardinal hedonic game} is given by the pair $(N,u)$ where $N$ is the agent set and $\uf = (\uf_a\colon \mathcal{N}_a\to\QQ)_{a\in N}$ a profile of \emph{utility functions}.
Then, $(N,u)$ induces the hedonic game $(N,\succsim)$ where, for every agent $a\in N$ and coalitions $C, D\in \mathcal N_a$, we define $C\succsim_a D$ if and only if $\uf_a(C)\ge \uf_a(D)$.
We say that $\uf_a(C)$ is $a$'s utility for coalition $C$ and extend this to utilities for partitions by setting $\uf_a(\pi) := \uf_a(\pi(a))$.

Cardinal hedonic games generally require to specify a utility for an exponentially large set of coalitions.
To avoid listing these all explicitly, several classes of cardinal hedonic games have been proposed where utility functions are represented succinctly by merely specifying valuations for single agents. 
Let $G = (N, \uf)$ be a cardinal hedonic game and let $(\vf_a\colon N\to \QQ)_{a \in N}$ be a collection of valuation functions.

\iftrue
Following \citet{BoJa02a}, $G$ is called an \textit{additively separable hedonic game} (\ASHG) if for all $a \in N, C\in \mathcal N_a$ it holds that $\uf_a(C) = \sum_{b\in C\setminus \{a\}}\vf_a(b)$. 
Following \citet{ABB+17a}, $G$ is called a \textit{fractional hedonic game} (\FHG) if for all $a \in N, C\in \mathcal N_a$ it holds that $\uf_a(C) = \sum_{b\in C\setminus \{a\}}\frac{\vf_a(b)}{|C|}$. 
Following \citet{Olse12a}, $G$ is called a \textit{modified fractional hedonic game} (\MFHG) if for all $a \in N$, it holds that $\uf_a(\{a\}) = 0$ and for all $C\in \mathcal N_a$, $C\neq \{a\}$ it holds that $\uf_a(C) = \sum_{b\in C\setminus \{a\}}\frac{\vf_a(b)}{|C|-1}$.
\fi

In other words, the utility in an \ASHG is the sum of valuations for agents in the considered coalition, and the utility in an \FHG and \MFHG is the average valuation, where \FHGs include the consideration of the agent herself. 
All three game classes are fully specified by the valuation functions and we therefore also represent an \ASHG, \FHG, or \MFHG $G$ by the pair $(N,\vf)$, where $\vf = (\vf_a\colon N\to \QQ)_{a\in N}$ is a profile of \textit{valuation functions}.

Note that the valuation functions of \ASHGs, \FHGs, and \MFHGs can be represented as a weighted directed graph, where the vertices are agents, and, given two agents $a, b \in N$, there is an edge from $a$ to $b$ with weight $\vf_a(b)$.

\subsection{Single-Agent Stability}\label{sec:stability}

We now formalize how to capture stability based on beneficial deviations by single agents.
Given a hedonic game $G = (N, \succsim)$, a \emph{single-agent deviation} of an agent $a \in N$ transforms a partition $\pi$ of $N$ into a partition $\pi'$ of $N$, where $\pi(a) \neq \pi'(a)$, and, for all agents $b \in N \setminus \{a\}$, it holds that $\pi(b) \setminus \{a\} = \pi'(b) \setminus \{a\}$. 
We denote such a deviation by $\pi \overset{a}{\rightarrow} \pi'$. 
Intuitively, agent $a$ deviates away from coalition $\pi(a)$, to join coalition $\pi'(a)$ (importantly, $\pi'(a)$ can be $a$'s singleton coalition), while all other coalitions remain unchanged.

A minimum requirement for the desirability of a deviation is whether the deviator is better off by performing this deviation.
A \emph{Nash deviation} is a single-agent deviation $\pi \overset{a}{\rightarrow} \pi'$ of an agent $a \in N$ such that $\pi'(a) \succ_a \pi(a)$. 
A partition $\pi$ which does not admit a Nash deviation is said to be \emph{Nash stable} (\NS), and $\pi$ is called an \NS partition. 

While Nash stability offers a very strong and desirable solution concept, \NS deviations completely disregard the opinion of members in the abandoned and welcoming coalition. 
In this light, several stability notions enforce additional requirements to be satisfied for a deviation to be valid.
We introduce a general class of such stability notions based on voting among the involved agents.

Let $C \subseteq N$ be a coalition and $a \in N$ an agent. Following \citet{BBT23a}, we define the \emph{favour-in set} $\Fin(C, a)$ and \emph{favour-out set} $\Fout(C, a)$ of $C$ with respect to $a$ as 
\begin{align*}
    \Fin(C, a) := \{ b \in C \setminus \{a\} \mid C \cup \{a\} \succ_b C \setminus \{a\} \} \text{ and}\\
    \Fout(C, a) := \{ b \in C \setminus \{a\} \mid C \setminus \{a\} \succ_b C \cup \{a\} \}\text.
\end{align*}
These capture the agents in $C$ that prefer $a$ inside or outside the coalition $C$.
Note that the definition is valid regardless of whether $a$ is part of $C$.

Let $\qout, \qin \in [0, 1]$ be two real numbers interpreted as quotas.
A Nash deviation $\pi \overset{a}{\rightarrow} \pi'$ of an agent $a \in N$ is called a \emph{$(\qout,\qin)$-vote deviation} if
\begin{enumerate}[1.]
    \item $|\Fout(\pi(a), a)|\ge \qout(|\Fin(\pi(a), a)| + |\Fout(\pi(a), a)|)$ and 
    \item $|\Fin(\pi'(a), a)| \ge \qin(|\Fin(\pi'(a), a)| + |\Fout(\pi'(a), a)|)$.
\end{enumerate}
Hence, such a deviation requires that at least a $\qout$-fraction of the nonindifferent members of the abandoned coalition and a $\qin$-fraction of the nonindifferent members of the welcoming coalition are strictly in favor of the deviation.
Now, a partition is said to be \emph{$(\qout, \qin)$-voting-stable} ($(\qout, \qin)$-\SVS) if it does not admit a $(\qout,\qin)$-vote deviation.

Our stability framework captures most single-deviation stability notions commonly considered in the literature.
If $\qout, \qin \in \{0, 1\}$, we obtain stability notions based on unanimous consent whenever consent is required.
Specifically, $(0,0)$-\SVS is \NS, $(0,1)$-\SVS is called \emph{individual stability} (\IS), $(1,0)$-\SVS is called \emph{contractual Nash stability} (\CNS), and $(1,1)$-\SVS is called \emph{contractual individual stability} (\CIS) \citep{BoJa02a,SuDi07b}. 
In addition, $(\qout, \qin)$-\SVS generalizes previously studied voting-based stability concepts: 
\citet{GaSa19a} consider $(0, \qin)$-\SVS and $(\qout, 0)$-\SVS under the names of \emph{vote-in} and \emph{vote-out stability} (\VIS and \VOS), and \citet{BBT23a} consider $(0, \frac 12)$-\SVS, $(\frac 12,0)$-\SVS, and $(\frac 12,\frac 12)$-\SVS.
\citet{BBT23a} call the latter \emph{separate-majorities stability} (\SMS).
Among all of these, only \CIS guarantees the existence of stable partitions.

Given a stability notion $\chi$, we refer to the corresponding deviations and stable partitions as $\chi$ deviations and $\chi$ partitions, respectively.

Given two stability notions $\chi$ and $\chi'$, we write $\chi \devimplies \chi'$ if every $\chi$ deviation is also a $\chi'$ deviation. 
For instance, for every $\qout, \qin \in [0, 1]$, it holds that $(\qout, \qin)\text{-\SVS} \devimplies \NS$ and $\CIS \devimplies (\qout, \qin)$-\SVS.

\subsection{Standard Stability Notions}

In the last section, we introduced a class of specific stability notions based on voting.
To state our meta theorems, we propose a novel condition to capture an even more general class of stability notions between \NS and \CIS. 
These are defined for cardinal hedonic games and should satisfy two properties:
\begin{enumerate}[1.]
    \item the feasibility of deviations only depends on the utility changes of the involved agents, not their identities, i.e., deviations are anonymously hedonic,
    \item deviations that are stronger than feasible deviations are also feasible, i.e., deviations are monotonic. 
\end{enumerate}
We formalize this in the following.

Let $G = (N, \uf)$ be a cardinal hedonic game, let $a \in N$ be an agent, and let $\pi, \pi'$ be two partitions of $N$. 
We refer to $\mathit{uc}_G(a, \pi, \pi') := (\uf_a(\pi), \uf_a(\pi'))$ as the \emph{utility-change tuple} of $a$ with respect to $G$, $\pi$ and $\pi'$. 
Further, we refer to the multisets
    $\mathit{UC}_G^{\mathrm{out}}(a, \pi, \pi') := \{\mathit{uc}_G(b, \pi, \pi') \mid b \in \pi(a) \setminus \{a\} \}$
and
    $\mathit{UC}_G^{\mathrm{in}}(a, \pi, \pi') := \{\mathit{uc}_G(b, \pi, \pi') \mid b \in \pi'(a) \setminus \{a\} \}$
as the \emph{utility-change-out multiset} and \emph{utility-change-in multiset} of $a$ with respect to $G$, $\pi$ and $\pi'$, respectively. 
We denote by $\mathcal{UC}$ the set of all utility-change multisets (that is, both utility-change-out and utility-change-in multisets). 
For all functions $uc_G$, $\mathit{UC}_G^{\mathrm{out}}$, and $\mathit{UC}_G^{\mathrm{in}}$, we will omit the game $G$ whenever it is clear from the context. 

We say that a stability notion $\chi$ is \emph{anonymously hedonic} if there exists a polynomial-time computable function $f_\chi: \mathcal{UC} \times \mathcal{UC} \times (\QQ \times \QQ) \rightarrow \{0, 1\}$, such that for all single-agent deviations $\pi \overset{a}{\rightarrow} \pi'$, it holds that

\begin{equation*}
    f_\chi(\mathit{UC}^{\mathrm{out}}(a, \pi, \pi'), \mathit{UC}^{\mathrm{in}}(a, \pi, \pi'), \mathit{uc}(a, \pi, \pi'))
    = 
    \begin{cases}
        & 1, \text{ if } \pi \overset{a}{\rightarrow} \pi' \text{ is a } \chi\text{ deviation},\\
        & 0, \text{ otherwise}.
    \end{cases}
\end{equation*}

Simply put, the validity of a deviation with respect to an anonymously hedonic stability notion solely depends on the changes in the utility of abandoned and welcoming coalitions and that of the deviator. In particular, this captures all stability notions that are implied by \NS, and further only depend on the sizes of the favor-in and favor-out sets of the abandoned and welcoming coalitions. However, the class of anonymously hedonic stability notions allows for more nuanced requirements. 
For example, a deviation may be allowed if it is an \NS deviation, and increases the utilitarian welfare, defined as $\sum_{a \in N} \uf_a(\pi)$ for partition $\pi$.

Next, given two multisets $X, Y \in \mathcal{UC}$, we say that $X$ \emph{dominates} $Y$, written $Y \trianglelefteq X$, if it holds that $|Y| \leq |X|$ and:
$$
    \forall (y, y') \in Y, (x, x') \in X: y'-y \leq x'-x.
$$

Now, an anonymously hedonic stability notion $\chi$ is \emph{monotone} if, for all $X, X', Y, Y' \in \mathcal{UC}$, and $z, z' \in \QQ \times \QQ$, where $X \trianglelefteq X'$, $Y \trianglelefteq Y'$, and $\{z\} \trianglelefteq \{z'\}$, it holds that:
$$
     f_\chi(X, Y, z) \leq f_\chi(X', Y', z'), 
$$
i.e., whenever a deviation is allowed with parameters $X, Y, z$, then it must also be allowed with parameters $X', Y', z'$. 
We will refer to anonymously hedonic monotone stability notions as \emph{standard stability notions}.

We illustrate standard stability with an example.

\begin{example}
    \begin{figure}[t]
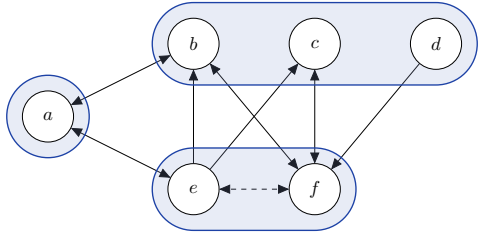

        \centering
            \resizebox{.4\textwidth}{!}{
                \tikzfig{figures/standard_stability_example}
                }
        
        \caption{Illustration of an \ASHG. Blue boxes indicate the initial partition. 
        A straight arrow from an agent $x$ to an agent $y$ indicates $\vf_x(y) = 1$ while a dashed arrow indicates $\vf_x(y) = -1$.
        Missing arrows indicate a valuation of~$0$.\label{fig:standard_stability_example}}
    \end{figure}

    Consider the \ASHG depicted in \Cref{fig:standard_stability_example} with agents acting according to a standard stability notion. 
    First, assume that agent $b$ has a deviation to join $\{a\}$. 
    Then one can verify that both $b$ and $c$ must also be allowed to deviate to join $\{e, f\}$.
    However, we cannot infer any further deviations. In particular, despite the fact that the abandoned agent of a deviation of an agent in $\{e, f\}$ strictly increases her utility, we cannot infer whether, e.g., $e$ can deviate to join $\{a\}$, since $|\{e, f\}| < |\{b, c, d\}|$. 

    Next, assume that $f$ cannot deviate to join $\{b, c, d\}$. 
    Then, it holds that $e$ can also not deviate to join $\{b, c, d\}$ or $\{a\}$. However, perhaps somewhat surprisingly, we cannot say whether, e.g., agent $d$ can deviate to join $\{e, f\}$, again since $|\{e, f\}| < |\{b, c, d\}|$. 
\end{example}

We note that our voting-based stability notions are standard stability notions, as the relevant favor-in and favor-out sets can be reconstructed with the information captured in the utility-change multisets.
We defer the formal proof to \Cref{app:VSstandard}.

\begin{restatable}{proposition}{VSstandard}\label{prop:VSstandard}
    Let $\qout, \qin \in [0,1]$.
    Then, $(\qout,\qin)$-\SVS is a standard stability notion.
\end{restatable}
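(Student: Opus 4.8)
The plan is to verify, for $\chi = (\qout,\qin)$-\SVS, the two defining properties of a standard stability notion in turn: anonymous hedonicity (exhibiting a suitable polynomial-time function $f_\chi$) and monotonicity of that function.

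For anonymous hedonicity the key observation is that the cardinalities of all four favour sets appearing in the two quota conditions are encoded in the \emph{signs} of the utility changes stored in the multisets. I would first record a reconstruction lemma. For a single-agent deviation $\pi \overset{a}{\rightarrow} \pi'$, an abandoned agent $b \in \pi(a)\setminus\{a\}$ keeps its coalition except for losing $a$, so its tuple $(s,t) = \mathit{uc}(b,\pi,\pi')$ satisfies $t>s$ exactly when $b \in \Fout(\pi(a),a)$ and $t<s$ exactly when $b \in \Fin(\pi(a),a)$; symmetrically, a welcoming agent $b \in \pi'(a)\setminus\{a\}$ has $t>s$ iff $b \in \Fin(\pi'(a),a)$ and $t<s$ iff $b \in \Fout(\pi'(a),a)$. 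Writing $p(M)$ and $q(M)$ for the number of tuples $(s,t)$ in a multiset $M$ with $t>s$ and with $t<s$, we thus get $|\Fout(\pi(a),a)| = p(X)$, $|\Fin(\pi(a),a)| = q(X)$ for the out-multiset $X$, and $|\Fin(\pi'(a),a)| = p(Y)$, $|\Fout(\pi'(a),a)| = q(Y)$ for the in-multiset $Y$. I would then define $f_\chi(X,Y,(s,t))$ to output $1$ iff $t>s$ (the Nash condition, read off the deviator tuple), $p(X) \ge \qout(p(X)+q(X))$, and $p(Y) \ge \qin(p(Y)+q(Y))$. This is computable by counting, and by the reconstruction lemma it agrees with the definition of a $(\qout,\qin)$-vote deviation on every single-agent deviation.

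For monotonicity I would show that each of the three conjuncts of $f_\chi$ is preserved when passing from $(X,Y,z)$ to a dominating triple $(X',Y',z')$. The deviator conjunct is immediate: $\{z\}\trianglelefteq\{z'\}$ means the gain $t-s$ does not decrease, so $t>s$ is inherited. The two quota conjuncts are symmetric, so consider the welcoming one. Here $Y \trianglelefteq Y'$ forces both $\max\{t-s : (s,t)\in Y\} \le \min\{t-s : (s,t)\in Y'\}$ and $|Y| \le |Y'|$. I would split on the sign of the maximal gain in $Y$: if it is positive, then every tuple of $Y'$ has positive gain, so $q(Y')=0$ and the conjunct holds trivially since $\qin \le 1$; if it is nonpositive, then $p(Y)=0$, so either all of $Y$ is indifferent (whence $\min$ over $Y'$ is nonnegative, $q(Y')=0$, and the conjunct holds) or $q(Y)>0$, in which case validity of the original deviation already forces $\qin=0$ and the conjunct holds for every $Y'$. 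The analogous case analysis on $X$ handles the abandoned coalition and $\qout$.

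The main obstacle I anticipate is the interaction between the cardinality relaxation $|Y|\le|Y'|$ permitted by the domination order and the fractional form of the quotas, which is most delicate precisely in the degenerate cases where a coalition is a singleton, i.e. where the abandoned or welcoming multiset is empty and the corresponding quota condition is satisfied only vacuously. The crux is to argue that the extra tuples admitted by a strict cardinality increase cannot convert a vacuously satisfied quota condition into a violated one; the all-pairs form of $\trianglelefteq$ (forcing every new tuple to dominate every old one) is exactly what is needed to control this, and I would isolate these boundary cases as a separate lemma before combining the three conjuncts into the monotonicity statement.
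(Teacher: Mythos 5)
Your route is, in substance, the paper's own: the paper defines $f_\chi$ by counting tuples with positive and negative utility change, identifies these counts with $|\Fout|$ and $|\Fin|$ of the abandoned and welcoming coalitions exactly as your reconstruction lemma does (your $p(\cdot)$ and $q(\cdot)$), and proves monotonicity conjunct by conjunct. Your case analysis for nonempty multisets is correct and equivalent to the paper's: the paper argues that for $\qout > 0$ the quota condition forces $\max\{\Delta x \colon x \in X\} \ge 0$, whence by the all-pairs condition $\min\{\Delta x' \colon x' \in X'\} \ge 0$ and the count of objectors in $X'$ is zero; your three-way split (positive maximal gain; all indifferent; $q(Y) > 0$ forcing $\qin = 0$) covers the same ground.

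The genuine problem is the boundary case you flag and defer, and your claim that ``the all-pairs form of $\trianglelefteq$ is exactly what is needed to control this'' is backwards: when the dominated multiset is \emph{empty}, the all-pairs condition is vacuous, so the dominating multiset may contain tuples of strictly negative gain. Concretely, $\emptyset \trianglelefteq \{(1,0)\}$ holds under the paper's definition, and taking $\qout = 1$, $X = \emptyset$, $X' = \{(1,0)\}$, $Y = Y' = \{(0,5)\}$, $z = z' = (0,1)$, both triples are realizable by actual single-agent deviations in an \ASHG (an agent leaving her singleton coalition to join a welcoming agent, versus leaving a pair whose other member objects); any function $f_\chi$ agreeing with the deviation semantics must output $1$ on the first triple and $0$ on the second, so the lemma you propose to isolate is false as stated and no proof of it can succeed. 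You should be aware that the paper's printed proof silently has the same hole: the step ``$\max\{\Delta x\colon x\in X\}\ge 0$'' presupposes $X \neq \emptyset$, and nothing in the argument handles $X = \emptyset$ with $\qout > 0$ (nor, symmetrically, $Y = \emptyset$ with $\qin > 0$). So you have correctly isolated the delicate point --- more explicitly than the paper does --- but closing it requires amending the definitions (e.g., stipulating that an empty multiset is dominated only by multisets all of whose tuples have nonnegative gain, or restricting the monotonicity requirement away from the empty multiset), not a cleverer argument within the definitions as given.
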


We remark that our notion of monotonicity does not capture all stability notions between NS and CIS, e.g., it fails to capture some notions that rely on 
the egalitarian welfare.

\subsection{Deviation Dynamics}

We are ready to introduce the central concept of this paper, which we will utilize to formulate our decision problems.

Stability notions naturally induce \emph{dynamics}, where, given a hedonic game and a starting partition of the agents, we iteratively obtain successor partitions by letting agents perform deviations from the current partition in alignment with the stability notion. 

Formally, let $\chi$ be a stability notion, let $G = (N, \succsim)$ be a hedonic game with a set $N$ of agents, and let $\startpart$ be a partition of $N$. 
Then, an \emph{execution of the $\chi$ dynamics} of $(G, \startpart)$ is a finite or infinite sequence $(\pi_i)_{0 \leq i \leq t}$ of partitions, i.e., $t \in \NN \cup \{+\infty\}$, together with a corresponding sequence $(a_i)_{1 \leq i \leq t}$ of deviating agents, such that for every $1 \leq i \leq t$, it holds that $\pi_{i-1} \overset{a_i\ }{\rightarrow}_\chi \pi_{i}$, i.e., $\pi_i$ evolves from $\pi_{i-1}$ by a $\chi$ deviation of~$a_i$. 
We say that an execution of the $\chi$ dynamics of $(G, \startpart)$ \emph{converges} if $\pi_t$ is a $\chi$ partition.

We say that the $\chi$ dynamics of $(G, \startpart)$ \emph{possibly converges} if some execution of $(G, \startpart)$ converges.
Moreover, we say that the $\chi$ dynamics of $(G, \startpart)$ 
\emph{necessarily converges} if every execution of the $\chi$ dynamics of $(G, \startpart)$ is finite.
This means that we necessarily reach a $\chi$ partition if we continue applying $\chi$ deviations.
By contrast, if the $\chi$ dynamics of $(G, \startpart)$ 
does not converge necessarily, there have to be executions where the same partition is reached infinitely often.
In this case, we say that the dynamics \emph{cycles}.

As computational decision problems, possible and necessary convergence can be captured as follows.

\problemdefijcai
{Possible Convergence of Dynamics ($\chi$-PCD)}
{A hedonic game $G$ and a starting partition $\startpart$.}
{Is there a sequence of $\chi$ deviations on $G$ that results in a $\chi$ partition when starting from~$\startpart$?}

\problemdefijcai
{Necessary Convergence of Dynamics ($\chi$-NCD)}
{A hedonic game $G$ and a starting partition $\startpart$.}
{Is every sequence of $\chi$ deviations on $G$ finite when starting from $\startpart$?
}

Typically, we consider $\chi$-\PCD and $\chi$-\NCD for a specific class of hedonic games, such as \ASHGs.

We conclude with the simple observation that \CIS dynamics necessarily converge.
This follows immediately because we operate on a finite game and every \CIS deviation increases the utilitarian welfare $\sum_{a \in N} \uf_a(\pi)$ \citep{ABS11c}.

\begin{observation}\label{obs:CISconvergence}
    Every execution of the \CIS dynamics converges necessarily.
\end{observation}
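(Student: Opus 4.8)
The plan is to show that the \CIS dynamics admits a strictly increasing potential function, which immediately precludes infinite executions on a finite game. The natural candidate, as hinted in the excerpt, is the utilitarian welfare $\Phi(\pi) := \sum_{a \in N} \uf_a(\pi)$. Since $N$ is finite and every utility function $\uf_a$ takes finitely many rational values (there are only finitely many coalitions, hence finitely many partitions of $N$), the function $\Phi$ ranges over a finite set of values. Therefore, if I can show that $\Phi$ strictly increases along every \CIS deviation, no partition can recur, and every execution must terminate after finitely many steps.

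First I would fix a single \CIS deviation $\pi \overset{a}{\rightarrow} \pi'$ and analyze how $\Phi$ changes. Only agents in $\pi(a) \cup \pi'(a)$ can have their utility affected, since all other coalitions are unchanged by the definition of a single-agent deviation. The change $\Phi(\pi') - \Phi(\pi)$ therefore decomposes into three contributions: the deviator $a$, the abandoned agents in $\pi(a) \setminus \{a\}$, and the welcoming agents in $\pi'(a) \setminus \{a\}$. Recalling that \CIS is $(1,1)$-\SVS, the deviation requires (i) $a$ strictly improves (the Nash condition $\uf_a(\pi') > \uf_a(\pi)$), (ii) \emph{every} nonindifferent agent in the abandoned coalition favors $a$ leaving, equivalently $\Fin(\pi(a), a) = \emptyset$, so each abandoned agent weakly gains, and (iii) \emph{every} nonindifferent agent in the welcoming coalition favors $a$ joining, equivalently $\Fout(\pi'(a), a) = \emptyset$, so each welcoming agent weakly gains. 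Summing these, the deviator contributes a strict increase while all other affected agents contribute a nonnegative change, giving $\Phi(\pi') > \Phi(\pi)$ strictly.

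The one point requiring care is translating the quota conditions of $(1,1)$-\SVS into the claim that abandoned and welcoming agents are weakly better off. The $\qout = 1$ condition forces $|\Fout(\pi(a),a)| \ge |\Fin(\pi(a),a)| + |\Fout(\pi(a),a)|$, which holds only if $\Fin(\pi(a),a) = \emptyset$; that is, no remaining member of the abandoned coalition strictly prefers $a$ to stay, so each such member weakly prefers the smaller coalition $\pi(a) \setminus \{a\}$, i.e., weakly gains. Symmetrically, $\qin = 1$ forces $\Fout(\pi'(a), a) = \emptyset$, so every pre-existing member of the welcoming coalition weakly prefers $a$ joining and hence weakly gains. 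I would state these two equivalences explicitly, as they are the crux of the argument.

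To finish, I combine the monotone potential with finiteness. Because $\Phi$ strictly increases at each step and takes values in the finite set $\{\Phi(\sigma) : \sigma \text{ a partition of } N\}$, no execution can revisit a partition and no execution can be infinite; thus every execution is finite and terminates at a partition admitting no further \CIS deviation, which by definition is a \CIS partition. Hence every execution of the \CIS dynamics converges necessarily. The argument is essentially self-contained given the $(1,1)$-\SVS characterization of \CIS already established in the preliminaries; the main (minor) obstacle is the careful handling of indifferent agents in the favor sets, ensuring that weak—rather than strict—inequalities suffice for the non-deviating agents while the deviator's strict gain drives the potential strictly upward.
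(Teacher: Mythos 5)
Your proof is correct and follows exactly the paper's argument: the paper justifies \Cref{obs:CISconvergence} in one line by noting that the game is finite and every \CIS deviation strictly increases the utilitarian welfare $\sum_{a \in N} \uf_a(\pi)$ (citing \citealp{ABS11c} for the latter fact). You merely make the cited welfare-increase step explicit via the $(1,1)$-\SVS characterization ($\Fin(\pi(a),a)=\emptyset$ and $\Fout(\pi'(a),a)=\emptyset$), which is a faithful and correct elaboration of the same potential-function argument.
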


\section{Presentation of Meta Theorems}

We now present our meta theorems.
A proof sketch can be found in \Cref{sec:proofsketch} and the full proof is provided in \Cref{app:proofs}.
Our first theorem states that the existence of a cycling dynamics implies hardness of deciding about possible convergence of dynamics.

\begin{restatable}{theorem}{unifiedPCD}\label{thm:unified_PCD}
    Let $\chi$ be a standard stability notion with $\chi \devimplies \NS$ and $\CIS \devimplies \chi$.
    Assume that there exists an \ASHG, \FHG, or \MFHG $G_{\chi}$ and partition $\pi_\chi$ such that the $\chi$ dynamics of $(G_\chi, \pi_\chi)$ must cycle. Then $\chi$-\PCD is \NP-hard for the game class of $G_{\chi}$ (e.g., for \ASHGs if $G_{\chi}$ is an \ASHG).
\end{restatable}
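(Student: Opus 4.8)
The plan is to reduce from an \NP-complete problem with a ``selection'' flavour, such as \RXC, so that a Yes-instance corresponds exactly to the reachability of a $\chi$-stable partition. Given an instance $I$, I would build a game $G$ in the same class as $G_\chi$ (\ASHG, \FHG, or \MFHG) together with a starting partition $\startpart$, consisting of two interacting parts: a \emph{control gadget} built from fresh agents whose reachable ``solved'' configurations encode solutions of $I$, and an embedded copy of the cycling instance $(G_\chi, \pi_\chi)$. The two parts are linked by dedicated \emph{anchor} agents, and the construction is engineered so that some $\chi$-stable partition is reachable from $\startpart$ if and only if $I$ is a Yes-instance. Crucially, the copy of $G_\chi$ is kept isolated from the rest of $G$ by giving every cross-pair of (non-anchor) agents a strongly negative valuation, so that---because $\chi \devimplies \NS$---no $\chi$ deviation ever mixes the copy with the control gadget; the copy therefore only ever reaches partitions reachable within $G_\chi$ from $\pi_\chi$, all of which are non-$\chi$-stable by assumption, while its internal $\chi$ deviations remain available.

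The technical heart is a black-box \emph{stabilisation lemma}: adding to the copy one anchor $s_b$ per agent $b$ of $G_\chi$, with $\vf_b(s_b) = \vf_{s_b}(b) = M$ for a sufficiently large $M$ and all remaining new valuations set to $0$, yields an augmented copy whose ``anchored'' partition $\pi^\ast$ (each coalition of $\pi_\chi$ enlarged by the anchors of its members) is Nash-stable, hence $\chi$-stable since $\chi \devimplies \NS$. Moreover, $\pi^\ast$ is reachable from the configuration in which the copy sits in $\pi_\chi$ and the anchors are singletons: moving each $s_b$ into the current coalition of $b$ is a \CIS deviation (the deviator strictly gains $M$, the abandoned singleton is empty, and every welcoming member is indifferent to $s_b$ except $b$, who gains), and hence a $\chi$ deviation because $\CIS \devimplies \chi$. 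For large $M$ no agent ever wishes to separate from its anchor, which is what pins down stability. I expect this lemma to transfer to \FHG and \MFHG with a sufficiently large $M$, but the averaging denominators require extra care, since adding a neutral anchor dilutes the utilities of existing coalition members; ensuring that these dilutions never create an objection (so that the moves remain \CIS, and hence $\chi$, deviations) is the most delicate calculation.

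With the lemma in hand, the two reduction directions read as follows. If $I$ is a Yes-instance, I would first let the control gadget perform the deviations that realise a solution---valid $\chi$ deviations because $\chi$ is monotone and these moves dominate suitable \CIS deviations, to which monotonicity applies---which frees all anchors, and then invoke the stabilisation lemma to anchor the copy, reaching the globally $\chi$-stable partition $\pi^\ast$; hence the $\chi$ dynamics of $(G, \startpart)$ possibly converges. Conversely, if $I$ is a No-instance, the gating is designed so that no anchor is ever delivered to the copy; by isolation the copy then remains in its cycling region, where it always admits a $\chi$ deviation, so no reachable partition is $\chi$-stable and possible convergence fails. Since $\chi$-\PCD asks exactly whether some execution converges, this establishes the equivalence and thus \NP-hardness for the class of $G_\chi$.

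The main obstacle is the design of the control--anchor interface. I need anchor availability to be genuinely \emph{all-or-nothing}---tied to the existence of a solution---because a partially anchored copy need not be active, which would break soundness; coupling every anchor to the same solution condition (for instance by trapping the anchors in ``lock'' coalitions that only a solution configuration can open) is where the combinatorial content of $I$ must be injected. Two further points need uniform, identity-blind treatment: every deviation used in the forward direction, both in the control gadget and for the anchors, must be certified as a $\chi$ deviation using only $\chi \devimplies \NS$, $\CIS \devimplies \chi$, and the monotonicity and anonymity of $\chi$ (never its internal definition); and the quantitative estimates behind isolation and stabilisation must be made to hold for whichever of \ASHG, \FHG, or \MFHG the instance $G_\chi$ inhabits.
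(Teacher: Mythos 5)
Your architecture matches the paper's quite closely: both reduce from \RXC, isolate a copy of the cycling instance via strongly negative cross-valuations, and gate the release of ``anchor'' agents on the existence of an exact cover, so that anchoring freezes the copy in a Yes-instance while a No-instance forces perpetual cycling. However, your stabilisation lemma has a genuine gap for \FHGs and \MFHGs, and it is not the ``delicate calculation'' you hope for. With one anchor $s_b$ per agent, $\vf_b(s_b)=\vf_{s_b}(b)=M$ and \emph{all other new valuations $0$}, any member $c$ of $b$'s current coalition with $\vf_c(s_b)=0$ and strictly positive utility sees her utility strictly drop when $s_b$ joins (the numerator is unchanged, the denominator grows), so $c$ lies in the favour-out set of the welcoming coalition and the move is \emph{not} a \CIS deviation. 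Crucially, $M$ does not occur in $c$'s utility at all, so no choice of $M$ repairs this; and monotonicity offers no obvious rescue, since the only deviations you can certify as $\chi$ deviations are \CIS deviations, in which no welcoming member strictly loses. In a cycling instance you cannot assume incumbents have nonpositive utilities, so the objection is unavoidable. The paper's design sidesteps this by inverting your valuation pattern: one anchor $\gamma^i$ per \emph{coalition} $C_i \in \pi_\chi$ (not per agent), with every member of $C_i$ valuing $\gamma^i$ at a huge $T_\chi$ and $\gamma^i$ valuing members at $1$; then every involved agent strictly gains, the anchoring move is \CIS in all three game classes, and afterwards the $T_\chi$ bonds pin each coalition against any further \NS deviation. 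Your per-agent, zero-valuation anchors essentially have to be redesigned into this form.

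The second gap is that the ``control--anchor interface,'' which you leave as a design task (``lock coalitions that only a solution configuration can open''), is the technical bulk of the paper's proof: the $\alpha$--$\beta$--$\gamma$ machinery with restricting agents, and in particular the treatment of the one deviation that cannot in general be made \CIS, namely $\beta$ abandoning the lock pair $\{\beta,\gamma\}$ to join $\alpha$'s coalition. This is the one place where standard stability is genuinely exercised: the paper checks, via the anonymity of $f_\chi$, whether that deviation is $\chi$-permitted in the base construction, and if not, flips the $\beta$--$\gamma$ edge so that the move becomes a \CIS deviation, then uses monotonicity to show that $\gamma$ still cannot escape in the flipped game. Your sketch instead asserts that all forward moves ``dominate suitable \CIS deviations,'' but for this release move no such domination argument is available in general --- which is exactly why the paper needs the two-case edge-flip. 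Relatedly, your No-direction requires that in \emph{every} execution no anchor is ever freed; the paper carries this burden with dedicated lemmas ($\gamma$ can never leave $\beta$; $\beta$ can leave if and only if a cover exists; a full classification of reachable coalition types), none of which your lock sketch yet provides. In short: right skeleton and correct high-level equivalence, but the stabilisation lemma as stated fails outside \ASHGs, and the lock construction carrying the reduction's combinatorial content --- including the single non-\CIS deviation that motivates the standard-stability hypothesis --- is missing.
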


Moreover, if there exists an instance in which the dynamics can cycle but necessarily converge after the removal of a \emph{singleton} coalition, we obtain hardness of deciding about necessary convergence of dynamics.

\begin{restatable}{theorem}{unifiedNCD}\label{thm:unified_NCD}
    Let $\chi$ be a standard stability notion with $\chi \devimplies \NS$ and $\CIS \devimplies \chi$.
    Assume that there exists an \ASHG, \FHG, or \MFHG $G_{\chi}$ and partition $\pi_\chi$ that contains a singleton coalition $\{a\} \in \pi_\chi$, such that the $\chi$ dynamics can cycle on $(G_\chi, \pi_\chi)$, but necessarily converge on $(G_\chi - a, \pi_\chi - a)$. 
    Then $\chi$-\NCD is \coNP-hard for the game class of $G_{\chi}$.
\end{restatable}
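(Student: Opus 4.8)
The plan is to establish \coNP-hardness by reducing an \NP-complete problem $L$ (for concreteness, say \RXC) to the complement of $\chi$-\NCD: from an instance $I$ of $L$ I would construct, in polynomial time, a game $\hat G_I$ of the same class as $G_\chi$ together with a starting partition $\hat\pi_I$, such that $I$ is a yes-instance if and only if the $\chi$ dynamics of $(\hat G_I,\hat\pi_I)$ admits an infinite execution. Since $\hat G_I$ has polynomial size and $\chi$ deviations are recognizable in polynomial time (the witness $f_\chi$ is polynomial-time computable), this reduction shows that deciding the existence of an infinite execution is \NP-hard, so $\chi$-\NCD, which asks its negation, is \coNP-hard.

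The game $\hat G_I$ would couple two parts through the distinguished agent $a$. The first part is a \emph{gated} copy $G_\chi'$ of the cycling gadget: I augment the valuations around $a$ by an interface so that $a$'s cycle-enabling deviations are beneficial --- and hence, since $\chi\devimplies\NS$, potentially $\chi$-valid --- only while a ``gate'' is open, and are utility-decreasing (thus blocked for every $\chi$ down to \NS) while it is closed. When the gate is permanently closed, $a$ has no beneficial move, so $G_\chi'$ behaves exactly like $G_\chi-a$, which necessarily converges by hypothesis; when the gate is open, $G_\chi'$ recovers the cycling execution guaranteed for $(G_\chi,\pi_\chi)$, where monotonicity of $\chi$ ensures that each deviation of the original cycle, being at least as strong under the augmented utilities, remains a $\chi$ deviation. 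The second part is a \emph{solution gadget} $H_I$ encoding $I$, built from symmetric, sign-controlled valuations so that every beneficial single-agent deviation inside $H_I$ is unanimously supported; then $\CIS\devimplies\chi\devimplies\NS$ pins the internally available $\chi$ deviations down to its \CIS deviations, so $H_I$ necessarily converges by the welfare argument of \Cref{obs:CISconvergence}, and I would arrange that a reachable, internally $\chi$-stable configuration opens the gate precisely when $I$ admits a solution.

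For the easy direction, if $I$ is a yes-instance I would exhibit one infinite execution directly: drive $H_I$ by its unanimous deviations into a solution configuration that opens the gate and is $\chi$-stable inside $H_I$, and then replay the guaranteed cycle of $G_\chi'$ forever; stability of the reached $H_I$-configuration guarantees that no interfering $\chi$ deviation becomes available during the replay.

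The delicate direction is to show that if $I$ is a no-instance then every execution is finite, and here the main obstacle is to rule out not only pure cycles inside $G_\chi'$ or inside $H_I$, but also ``mixed'' cycles arising from adversarial interleavings. The central claim is that, without a solution, the gate never opens in any \emph{reachable} configuration, so $a$ has no beneficial deviation and stays frozen in its singleton; by additionally installing strong mutual aversion between the two parts (except along the designated interface agent), no agent ever profits from forming a mixed coalition, which keeps all reachable coalitions pure and, in particular, decouples the size-based averaging of \FHGs and \MFHGs. The reachable dynamics then decompose into the disjoint dynamics of $G_\chi-a$ and of $H_I$, both of which necessarily converge, so an infinite execution of $\hat G_I$ would induce an infinite sub-execution of one of them, a contradiction. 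I expect the crux to be making the gate airtight --- proving that in the absence of a solution no reachable (not merely stable) configuration can ever render a gate-opening deviation beneficial, and that $a$ cannot open the gate itself --- uniformly across \ASHGs, \FHGs, and \MFHGs despite the averaging effects; the singleton-removal hypothesis (that $G_\chi-a$ converges) together with the monotonicity and anonymity of $\chi$ are exactly the levers engineered to make this argument go through.
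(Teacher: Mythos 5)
Your high-level architecture matches the paper's (reduce \RXC{} to the complement of $\chi$-\NCD; a solution gadget driven by unanimously supported, i.e.\ \CIS, deviations that ``releases'' the cycling gadget exactly on Yes-instances), but there are two genuine gaps at precisely the point you yourself flag as the crux, and the paper's proof shows both require real machinery. First, your No-instance decomposition fails: you freeze $a$ in her \emph{singleton} and conclude the remaining dynamics is that of $(G_\chi - a, \pi_\chi - a)$. But the hypothesis only guarantees convergence when $a$ is \emph{removed}; a present-but-immobile $a$ in a singleton reproduces exactly the configuration $(G_\chi,\pi_\chi)$, which by assumption \emph{can} cycle --- and the hypothesized cycle need not contain any deviation by $a$ at all (it may consist of other agents joining and leaving $a$'s coalition, none of which your gate blocks, and whose valuations toward $a$ you must keep intact for the Yes-case replay; in \FHGs{}/\MFHGs{} even a passive $a$ alters denominators). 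The paper avoids this by parking $a=\gamma$ in the two-agent coalition $\{\beta,\gamma\}$ with an \emph{outside} agent $\beta$, shielded by sub-game restricting valuations so that no third agent can ever join (\Cref{obs:unified_reduction_no_connecting_coalition}); then on No-instances $a$ never shares a coalition with $N_\chi\setminus\{a\}$ and the internal dynamics is literally that of $G_\chi - a$.

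Second, your gate mechanism --- make $a$'s exits ``utility-decreasing while closed,'' blocked already at the \NS{} level --- cannot be implemented uniformly across all standard $\chi$ between \NS{} and \CIS{} without the paper's adaptive trick, and you offer no substitute. To block $a$'s exit by \NS{} alone, $a$ must derive high utility from the gate (the paper's $\vf_\gamma(\beta)=T_A$); but then the gate-opening deviation ($\beta$ leaving) strictly harms $a$, so for consent-requiring notions $\chi$ it may simply not be a $\chi$ deviation, and the gate never opens even on Yes-instances. The paper resolves this with a case distinction computed via $f_\chi$ (using anonymity): if $\beta$'s exit is $\chi$-valid as constructed, keep the edge; otherwise \emph{flip} it ($\vf_\gamma(\beta)=0$, $\vf_\beta(\gamma)=T_A$), which makes $\beta$'s exit a \CIS{} deviation --- but now $a$'s own exit becomes \NS-beneficial and must be blocked by the monotonicity argument of \Cref{lem:unified_reduction_gamma_doesnt_deviate}: $a$'s deviation data is dominated by $\beta$'s \emph{rejected} deviation in the unflipped game, with dummy agents $D$ inserted solely to satisfy the cardinality condition $|Y|\geq|Y'|$ of domination. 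Your proposal invokes monotonicity only to preserve the original cycle under augmented utilities (itself risky, since modifying valuations around $a$ inside $G_\chi$ can create new deviations and new cycles among the other agents in the No-case --- the paper never alters any valuation within $N_\chi$), whereas its essential role in the actual proof is this domination argument pinning $a$ in place. Without these two ingredients the plan does not go through as stated.
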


The precondition for the required game in \Cref{thm:unified_NCD} may seem intricate, but it is quite weak.
For instance, it is satisfied whenever there exists a game in which the dynamics starting from the singleton partition can cycle.
Indeed, in this case, one can obtain the desired game by iteratively removing agents until the dynamics from the singleton coalition necessarily converges.
Then, the penultimate game in this procedure satisfies the prerequisites of \Cref{thm:unified_NCD}.
Moreover, both theorems hold whenever there exists an instance without a stable partition.
In this case, the dynamics from any starting partition (e.g., the singleton partition) must cycle.
We state the latter observation in the following corollary.

\begin{corollary}
    Let $\chi$ be a standard stability notion with $\chi \devimplies \NS$ and $\CIS \devimplies \chi$.
    Assume that there exists an \ASHG, \FHG, or \MFHG $G_{\chi}$ without a $\chi$ partition.
    Then, $\chi$-\PCD{} is \NP-hard and $\chi$-\NCD is \coNP-hard for the game class of~$G_{\chi}$.
\end{corollary}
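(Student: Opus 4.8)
The plan is to derive both claims directly from \Cref{thm:unified_PCD} and \Cref{thm:unified_NCD} by extracting, from the single No-instance $G_\chi$, starting configurations that meet the respective hypotheses. The driving observation is that a game without a $\chi$ partition can never reach stability: by definition, a $\chi$ partition is exactly one admitting no $\chi$ deviation, so if $G_\chi$ has no $\chi$ partition, then \emph{every} partition of its agent set admits a $\chi$ deviation. Consequently, from any starting partition, an execution can always be extended, and since the set of partitions is finite, repeatedly extending yields an infinite run while no run can terminate at a $\chi$ partition. Thus the $\chi$ dynamics of $(G_\chi, \pi)$ cannot converge from any $\pi$, which is precisely the ``must cycle'' premise of \Cref{thm:unified_PCD}. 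Taking $\pi_\chi$ to be, say, the singleton partition and invoking \Cref{thm:unified_PCD} immediately gives \NP-hardness of $\chi$-\PCD{} for the game class of $G_\chi$.

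For the \coNP-hardness of $\chi$-\NCD, the plan is to reshape $G_\chi$ into an instance fitting the more delicate hypothesis of \Cref{thm:unified_NCD}, which asks for a single agent whose removal converts cycling into necessary convergence. Starting from $G_0 := G_\chi$ with its singleton partition, which cycles by the argument above, I would delete agents one at a time in any fixed order, obtaining a chain $G_0, G_1, G_2, \dots$. Each deleted-agent subgame stays within the same game class (\ASHG, \FHG, or \MFHG), since restricting the valuation functions to the surviving agents preserves the defining utility form. As the agent set shrinks, the dynamics from the singleton partition must eventually converge necessarily --- at the latest for the one-agent game, whose unique partition admits no deviation and whose only execution has length $0$. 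Hence there is a first index $k \ge 1$ at which the $\chi$ dynamics from the singleton partition of $G_k$ converges necessarily.

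The penultimate game $G_{k-1}$ is then the instance I want. By minimality of $k$, its singleton partition $\pi$ does not converge necessarily, i.e., the dynamics can cycle on $(G_{k-1}, \pi)$; the partition $\pi$ contains the singleton coalition $\{a\}$ of the next agent $a$ scheduled for deletion; and deleting $a$ yields $G_{k-1} - a = G_k$ together with $\pi - a$ equal to the singleton partition of $G_k$, on which the dynamics converges necessarily. This matches the premise of \Cref{thm:unified_NCD} exactly, so $\chi$-\NCD is \coNP-hard for the game class of $G_{k-1}$, which coincides with that of $G_\chi$ by the class-preservation noted above.

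Every step is routine once the two meta theorems are available; I expect the genuine content to be entirely inside those theorems, with the corollary amounting to bookkeeping that packages the ``No-instance $\Rightarrow$ cycling'' intuition into the precise interfaces the theorems expose. The only points needing care are verifying that the base of the deletion chain (the trivial small game) indeed converges necessarily while $G_0$ does not, and that agent removal keeps the game in its class, so that $G_{k-1}$ is a legitimate input for \Cref{thm:unified_NCD}; both are straightforward, so I anticipate no substantive obstacle here.
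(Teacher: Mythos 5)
Your proof is correct and takes essentially the same route as the paper: the paper likewise observes that a game without a $\chi$ partition forces the dynamics to cycle from every starting partition (so \Cref{thm:unified_PCD} applies immediately), and obtains the instance for \Cref{thm:unified_NCD} by iteratively removing agents until the dynamics from the singleton partition necessarily converge, then taking the penultimate game in this chain. Your extra bookkeeping---class preservation under agent removal, the one-agent base case guaranteeing the chain terminates, and identifying the next-deleted agent's singleton coalition as the required $\{a\}$ with $\pi - a$ equal to the singleton partition of $G_k$---fills in exactly the routine details the paper leaves implicit.
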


We can directly apply our corollary for established stability notions of which it is known that instances without stable partitions exist.
For instance, there exist ASHGs without an IS or CNS (and, therefore, no NS) partition (\citealp{BoJa02a}, Example~5; \citealp{SuDi07b}, Example~2).
Our meta theorems (\Cref{thm:unified_PCD,,thm:unified_NCD}) apply uniformly to all standard stability notions, including \NS, \IS, \CNS, \VIS, \VOS, and \SMS in \ASHGs, \FHGs, and \MFHGs.
All of these also follow from \Cref{thm:consequences} below.

In fact, we now 
demonstrate the applicability of our meta theorems for any 
deviation concept between \NS deviations and voting-based notions weaker than \CIS deviations.
More precisely, consider $(\qout, \qin)$-\SVS for any $\qout, \qin \in [0, 1]$.
In case that $\qout = \qin = 1$, this is \CIS, for which dynamics necessarily converge (\Cref{obs:CISconvergence}).
In all other cases, we show that \Cref{thm:unified_NCD,thm:unified_PCD} can be applied for all three game classes.
We thus obtain a dichotomy that separates \CIS from other voting-based stability notions.

\begin{restatable}{theorem}{consequences}\label{thm:consequences}
    Let $\chi$ be a standard stability notion such that $\chi \devimplies \NS$ and $(\qout, \qin)$-\SVS $\devimplies \chi$ for some $\qout,\qin \in [0, 1]$.
    Then, $\chi$-\PCD{} is \NP-hard and $\chi$-\NCD is \coNP-hard for \ASHGs, \FHGs, and \MFHGs if $\qout < 1$ or $\qin < 1$.
\end{restatable}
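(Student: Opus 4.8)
The plan is to reduce the statement to the mere existence of games without stable partitions and then to construct such games in each class. Since $\CIS \devimplies (\qout,\qin)$-\SVS (recorded in the preliminaries) and $(\qout,\qin)$-\SVS $\devimplies \chi$ by assumption, transitivity gives $\CIS \devimplies \chi$; together with $\chi \devimplies \NS$ this shows $\chi$ meets the hypotheses of \Cref{thm:unified_PCD,thm:unified_NCD}. Moreover, as every $(\qout,\qin)$-vote deviation is a $\chi$ deviation, every $\chi$-stable partition is $(\qout,\qin)$-\SVS; contrapositively, a game with no $(\qout,\qin)$-\SVS partition has no $\chi$ partition at all. In such a game every partition admits a $\chi$ deviation, so no execution can ever halt and the $\chi$ dynamics must cycle from any starting partition. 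Feeding this game into \Cref{thm:unified_PCD} yields \NP-hardness of $\chi$-\PCD. For $\chi$-\NCD I would invoke the remark after the meta theorems: from the singleton partition the dynamics must---hence can---cycle, so deleting agents one at a time until the dynamics from the singletons necessarily converge produces a penultimate game satisfying the hypotheses of \Cref{thm:unified_NCD}, giving \coNP-hardness. It therefore suffices to exhibit, in each of \ASHGs, \FHGs, and \MFHGs, one game with no $(\qout,\qin)$-\SVS partition.

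Next I would shrink the quota regime. Increasing a quota only strengthens the corresponding vote condition, so a vote deviation for a larger quota pair remains one for any componentwise-smaller pair; hence a game with no vote-stable partition for the larger pair has none for the smaller. As the hypothesis $\qout<1$ or $\qin<1$ means the given pair is dominated componentwise by $(1,\qin)$ with $\qin<1$ or by $(\qout,1)$ with $\qout<1$, it is enough to construct no-instances for these two boundary families.

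For the construction I would begin from the known additively separable games without a \CNS partition (the case $\qin=0$ of the first family) and without an \IS partition (the case $\qout=0$ of the second), and make them robust to positive quotas by \emph{dilution}. For the family $(1,\qin)$ the out-condition ($\qout=1$) is inherited from the no-\CNS base, while each joining move must now be endorsed by a $\qin$-fraction of the nonindifferent members of the target coalition; I would pad the welcoming coalitions of the cycle with blocks of auxiliary agents that strictly value the deviator, sized so that the pro-deviator fraction exceeds $\qin$---a finite block because $\qin<1$. The family $(\qout,1)$ is treated dually by diluting the abandoned side. The same blueprint transfers to \FHGs and \MFHGs, but there the averaging couples each agent's favour-in/favour-out status to the coalition size, so I would calibrate valuations so that an auxiliary agent values the deviator above, respectively below, its running average utility precisely when the intended vote is to be cast.

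The main obstacle is to verify that \emph{no} partition of the constructed game is $(\qout,\qin)$-\SVS---a global claim over exponentially many partitions---while the quota slack becomes arbitrarily tight as the quota tends to $1$. The tension is that the auxiliary agents introduced to clear the in-quota of one deviation must not, at another partition of the cycle, block the out-quota of a different deviation; reconciling these opposing roles uniformly in the real parameters $\qout$ and $\qin$ is delicate. For \FHGs and \MFHGs this is compounded because adding padding shifts denominators and hence favour sets throughout the instance, so matching the dilution counts to the size-dependent utilities is where the bulk of the computation will lie.
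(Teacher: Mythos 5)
Your scaffolding is correct and largely coincides with the paper's: transitivity gives $\CIS \devimplies \chi$, a game without a $\chi$ partition forces cycling from every starting partition, the quota-monotonicity reduction to the boundary families $(\qout, 1)$-\SVS with $\qout < 1$ and $(1, \qin)$-\SVS with $\qin < 1$ is exactly the case distinction the paper makes, and the iterative-removal argument for feeding \Cref{thm:unified_NCD} is the paper's own remark. But the heart of your proof---exhibiting, for every quota arbitrarily close to $1$ and in each of \ASHGs, \FHGs, and \MFHGs, a game with \emph{no} $(\qout,1)$- resp.\ $(1,\qin)$-vote-stable partition---is precisely what you do not prove, and you say so yourself (``a global claim over exponentially many partitions,'' ``where the bulk of the computation will lie''). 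Nonexistence is universally quantified over all partitions, and your dilution blueprint offers no mechanism for controlling partitions far from the intended cycle: the padding agents you add to clear one quota create new coalitions and new potential stable configurations whose exclusion you never address, and you never verify that the no-\CNS/no-\IS base games inherit nonexistence after padding, let alone in the \FHG/\MFHG setting where denominators shift all favor sets. As it stands, the proposal is a reduction framework with its central lemma missing.

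The gap is also avoidable, because you are proving strictly more than the meta theorems require. \Cref{thm:unified_PCD} needs only a game together with a \emph{designated} partition from which the $\chi$ dynamics must cycle---the game is free to possess stable partitions elsewhere, which are simply never reached. The paper exploits exactly this (\Cref{lem:SVS_quota_out_no_convergence,lem:SVS_quota_in_no_convergence}): it builds a cyclic orbit of partitions $\pi_0, \pi_1, \dots$ such that at each $\pi_i$ precisely one agent has any \NS deviation at all, that deviation is a $(\qout,1)$- resp.\ $(1,\qin)$-vote deviation (hence the unique $\chi$ deviation, squeezing $\chi$ between $\NS$ and the voting notion), and performing it yields a shifted copy of the same partition. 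Verification thus reduces to analyzing a single partition up to symmetry, sidestepping your global obstacle entirely; the quota-dependent padding you envisage appears there too, via the threshold $\tout = \min\{t'\in\NN : t'\ge \qout(1+t')\}$, finite exactly because $\qout<1$, but it only needs to be calibrated against this one orbit. For $\chi$-\NCD the paper is moreover constructive where you are not: \Cref{lem:SVS_quota_out_no_convergence_adapted,lem:SVS_quota_in_no_convergence_adapted} pull the first deviator $a_0$ into a singleton coalition and check directly that the dynamics can still cycle on $(G, \pi')$ but necessarily converge on $(G - a_0, \pi' - a_0)$. Your route could in principle be completed, but you would have to actually establish nonexistence for your padded instances in all three game classes---a substantially harder task than the theorem demands, and one for which no proof is currently on the table.
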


The full proof of \Cref{thm:consequences} is presented in \Cref{app:voting}.
It relies on constructing two games for which we apply \Cref{thm:unified_PCD,thm:unified_NCD} once each.
We further distinguish whether for the relevant stability notion $\chi$ it holds that $(\qout, 1)$-\SVS $\devimplies \chi$ or $(1, \qin)$-\SVS $\devimplies \chi$.
All constructed games consist of a large set of deviating agents and a small set of gadget agents that never perform deviations (and, in fact, their valuation function is the $0$-function, under which all coalitions yield an identical utility). 
Starting from a predetermined partition, there always exists precisely one deviating agent that can perform a permissible $\chi$ deviation, while no other deviation is possible that is even an \NS deviation. 
Performing this deviation yields a partition that is identical up to a permutation of agents.
Hence, we establish inevitable cycling, and, therefore, games suitable to apply \Cref{thm:unified_PCD}.
The starting partitions can then be turned into partitions satisfying the preconditions of \Cref{thm:unified_NCD} by removing the first deviator from her coalition and placing her in a singleton coalition.

\section{Proof Sketch of Meta Theorems}\label{sec:proofsketch}

\begin{figure}[t]
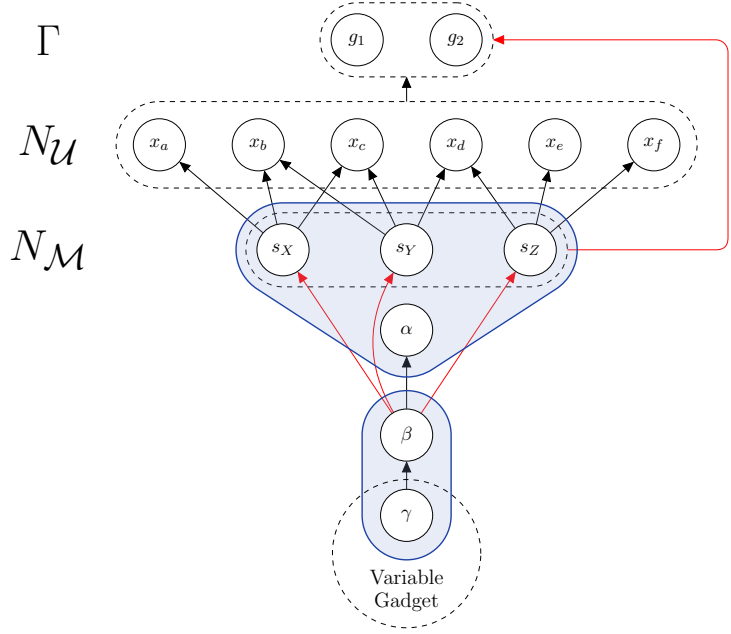

        \centering
            \resizebox{.6\textwidth}{!}{
                \tikzfig{figures/unified_reduction_sketch}
                }
        
        \caption{Illustration of the reduction. 
        A covering instance $(\mathcal{U}, \mathcal{M})$ is represented by agents $N_{\mathcal U}$ and $N_{\mathcal{M}}$.
        Here, we have $\mathcal{U} = \{a, \ldots, f\}$ and $\mathcal{M} = \{X, Y, Z\}$ with $X = \{a, b, c\}$, $Y = \{b, c, d\}$, and $Z = \{d, e, f\}$.
        Black and red arrows indicate potential utility increases and decreases, respectively.
        Important coalitions of the starting partition are indicated in blue. 
        In Yes-instances, dynamics can lead to agent $\gamma$ ending up in a singleton coalition.\label{fig:meta-sketch}}
\end{figure}

In this section, we outline the proofs of \Cref{thm:unified_PCD,thm:unified_NCD}.
Both rely on a reduction from \textsc{Restricted Exact Cover By $3$-Sets} (\RXC).
An instance of \RXC consists of a finite set of elements $\mathcal{U} = \{e_1, \ldots, e_{3 h}\}$ 
and a family $\mathcal{M} = \{M_1, \ldots, M_{3 h}\}$ subsets of $\mathcal{U}$ of size~$3$ such that every element of $\mathcal{U}$ belongs to exactly three sets in $\mathcal{M}$.
An instance is a Yes-instance if and only if there is a selection of exactly $h$ sets from $\mathcal{M}$ whose union is $\mathcal{U}$. 
\RXC is known to be \NP-complete \citep{Karp72a,Gonz85a}.

Both proofs are performed in two steps: first, we encode the combinatorial structure of an \RXC instance as deviation dynamics, then we use the games assumed by the respective theorem as a gadget.
The first step is the same for both theorems and is outlined in \Cref{fig:meta-sketch}.
Given an instance $(\mathcal{U}, \mathcal{M})$ of \RXC, we introduce sets $N_{\mathcal U}$ and $N_{\mathcal{M}}$ of element agents and set agents representing $\mathcal{U}$ and $\mathcal{M}$, respectively.
Set agents receive a positive utility from the element agents corresponding to their contained elements.
At the top, there is a set $\Gamma$ of grouping agents, identical in size to the number of sets in an exact cover, e.g., $2$ agents if $|\mathcal U| = 6$.
Further down, there are special agents $\alpha$ and $\beta$.
The latter has a very high valuation for~$\alpha$ but dislikes set agents. 
At the bottom, there is a variable gadget containing a dedicated agent $\gamma$ who is the only agent that can interact with the other gadget agents through deviations.

In \Cref{fig:meta-sketch}, black arrows indicate deviation incentives, while red arrows represent deviation obstacles.
The two important coalitions of the starting partition are indicated in blue.
Generally, agents perform deviations ``upwards.'' 

Element agents can freely join the coalitions of grouping agents, which can in principle lead to coalitions containing any grouping agent and any subset of element agents.
However, set agents can only join the coalition of a grouping agent if it contains exactly the agents corresponding to its contained elements.\footnote{Initially, coalitions of element agent contain an additional restricting agent that prevents set agents from joining.
These are omitted from the figure for simplicity.}
Once this happens, a coalition is created from and towards which no more deviations happen.

Over time, the coalition of $\alpha$ contains less and less set agents.
This allows $\beta$ to join this coalition
if and only if the deviated set agents correspond to an exact cover of $\mathcal U$. 
This in turn allows the abandoned $\gamma$ to engage in deviations within the gadget.
In the deviation sequence up to this step, almost all performed deviations are \CIS deviations and, therefore, $\chi$ deviations.
The only deviation that is possibly not a \CIS deviation is when $\beta$ joins $\alpha$.
When performing this deviation, it is the only time in the proof that we use that we need a standard stability notion.

By specifying the variable gadget, we can leverage this general reduction to prove \Cref{thm:unified_PCD,thm:unified_NCD}.
For possible convergence, we use 
the game in which cycling dynamics must happen.
For each of the coalitions of the starting partition causing necessary cycling, we append a copy of the construction in \Cref{fig:meta-sketch}.
If the source instance was a No-instance, then agents of type~$\gamma$ (in the multiple copies) never end up in singleton coalitions.
Hence, the gadget agents have to cycle inevitably.
If, however, the source instance was a Yes-instance, then agents of type~$\gamma$ can join the coalitions from the gadget with \CIS deviations, leading to a stable partition.
Hence, dynamics possibly converge if and only if the source instance was a Yes-instance.

We now turn to necessary convergence. 
Note that \coNP-hardness for necessary convergence is identical to \NP-hardness of the question whether dynamics possibly cycle.
We now use the possibly cycling game with its dedicated agent~$a$ as a variable gadget and 
identify $\gamma$ with~$a$.
Hence, if the source instance was a No-instance, dynamics can never change the coalition of $a$, and, therefore, dynamics have to converge in the variable gadget.
Otherwise, if the source instances was a Yes-instance, agent $a$ can initiate cycling once she is in a singleton coalition.

\section{Contractual Individual Stability}\label{sec:CIS}

As \CIS dynamics necessarily converge in any hedonic game (cf.~\Cref{obs:CISconvergence}), 
\CIS-\PCD and \CIS-\NCD are trivially polynomial-time solvable.
Moreover, \citet{ABS11c} provide an algorithm to compute \emph{some} \CIS partition in polynomial time for \ASHGs.\footnote{\citet{BDEG25b} correct an inaccuracy in this algorithm.}
Unfortunately, their algorithm 
fails to produce partitions that satisfy individual rationality, i.e., some agents might have a large negative utility. 
Notably, as \CIS deviations preserve individual rationality, \CIS dynamics from the singleton coalition guarantee the existence of individually rational \CIS partitions.

\begin{observation}\label{obs:CIS_dyn_maintain_IR}
    Let $G$ be a hedonic game together with an individually rational partition  $\startpart$. 
    Then, any execution of the \CIS dynamics of $(G,\startpart)$ converges to an individually rational \CIS partition.
\end{observation}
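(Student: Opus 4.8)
The plan is to split the claim into two parts, convergence and the preservation of individual rationality, deriving the former directly from \Cref{obs:CISconvergence} and the latter from an invariance argument carried along the execution.

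For convergence, \Cref{obs:CISconvergence} already guarantees that every execution of the \CIS dynamics is finite. Hence a given execution $\startpart = \pi_0 \overset{a_1}{\rightarrow} \cdots \overset{a_t}{\rightarrow} \pi_t$ terminates, and since no \CIS deviation is available from $\pi_t$, the partition $\pi_t$ is a \CIS partition by definition of convergence. It therefore only remains to show that $\pi_t$ is individually rational.

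The heart of the proof is the following invariance claim: if $\pi$ is individually rational and $\pi \overset{a}{\rightarrow} \pi'$ is a \CIS deviation, then $\pi'$ is individually rational. I would verify $\pi'(b) \succsim_b \{b\}$ for every agent $b$ by splitting into four cases. For the deviator $a$, the Nash condition $\pi'(a) \succ_a \pi(a)$ together with $\pi(a) \succsim_a \{a\}$ (individual rationality of $\pi$) and transitivity yields $\pi'(a) \succsim_a \{a\}$. For a member $b$ of the abandoned coalition, I first translate the out-quota $\qout = 1$ of \CIS into $\Fin(\pi(a), a) = \emptyset$, which by completeness of $\succsim_b$ gives $\pi(a) \setminus \{a\} \succsim_b \pi(a)$; since the definition of a single-agent deviation forces $\pi'(b) = \pi(a) \setminus \{a\}$ while $\pi(b) = \pi(a) \succsim_b \{b\}$, transitivity delivers $\pi'(b) \succsim_b \{b\}$. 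For a member $b$ of the welcoming coalition, the in-quota $\qin = 1$ gives $\Fout(\pi'(a), a) = \emptyset$, hence $\pi'(a) \succsim_b \pi'(a) \setminus \{a\}$; using $\pi(b) = \pi'(a) \setminus \{a\} \succsim_b \{b\}$ and $\pi'(b) = \pi'(a)$ concludes this case. Finally, every unaffected agent $b$ satisfies $\pi'(b) = \pi(b)$, so individual rationality is inherited verbatim.

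I would then finish by induction along the execution: $\pi_0 = \startpart$ is individually rational by hypothesis, and each deviation preserves individual rationality by the invariance claim, so $\pi_t$ is individually rational as well. Combined with the preceding paragraph, $\pi_t$ is an individually rational \CIS partition, which is exactly the assertion. I expect the only delicate point to be the bookkeeping in the two coalition cases, namely correctly rewriting the \CIS quota conditions as $\Fin(\pi(a), a) = \emptyset$ and $\Fout(\pi'(a), a) = \emptyset$ (exploiting $a \in \pi(a)$ and $a \in \pi'(a)$ so that $C \cup \{a\}$ and $C \setminus \{a\}$ simplify) and then reading off the correct coalitions $\pi(b)$ and $\pi'(b)$ from the definition of a single-agent deviation; everything else is routine transitivity.
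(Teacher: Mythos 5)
Your proposal is correct and follows exactly the route the paper intends: the paper treats this as an observation whose justification is the combination of \Cref{obs:CISconvergence} (necessary convergence of \CIS dynamics) with the inline remark that \CIS deviations preserve individual rationality, which is precisely your invariance claim. Your case analysis merely makes explicit the translation of the quotas $\qout = \qin = 1$ into $\Fin(\pi(a),a) = \emptyset$ and $\Fout(\pi'(a),a) = \emptyset$, and it is carried out correctly.
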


By contrast, it is \NP-hard to decide whether \CIS dynamics lead to individually rational outcomes, when starting from a general partition.
This result holds even for fairly restricted valuations, e.g., to $\{-1,1\}$.
We defer all missing proofs in this section to \Cref{app:CIS}.

\begin{restatable}{theorem}{CisIrNp}\label{thm:CIS_IR_NP}
    Let $f^+: \NN \rightarrow \QQ^+$ and $f^- : \NN \rightarrow \QQ^-$ be two functions with $f^+(n) \geq |f^-(n)|$ for all $n \in \NN$. It is \NP-hard to decide whether the \CIS dynamics in an \ASHG, \FHG, or \MFHG 
    can converge to an individually rational partition from a given starting partition $\pi$, even when valuations are restricted to $\{f^-(n), f^+(n)\}$ for games with $n$ agents.
\end{restatable}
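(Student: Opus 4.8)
The plan is to reduce from \RXC. Before building anything, I would record a reformulation of the decision question. Since \CIS deviations preserve individual rationality, \Cref{obs:CIS_dyn_maintain_IR} implies that the \CIS dynamics of $(G,\pi)$ can converge to an individually rational partition \emph{if and only if} some \CIS-reachable partition is individually rational: once such a partition is reached, every continuation stays individually rational and, by \Cref{obs:CISconvergence}, terminates. Note that the singleton partition is always individually rational, so the content of the problem is purely one of \emph{reachability} -- whether the \CIS dynamics can untangle the non--individually-rational starting partition $\pi$ at all. The hardness must therefore be engineered so that a designated agent is trapped in a coalition in which she has strictly negative utility, and can only be released along a deviation path that exists precisely for Yes-instances.

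Concretely, given an \RXC instance $(\mathcal U,\mathcal M)$, I would reuse the exact-cover encoding underlying the meta theorems (cf.\ \Cref{fig:meta-sketch}), whose defining property is that a single \emph{bottleneck} deviation -- agent $\beta$ joining $\alpha$ -- becomes available exactly when the selected set agents form an exact cover of $\mathcal U$. To this I would attach an \emph{individual-rationality trap}: a distinguished agent $\gamma$ that starts in $\beta$'s coalition with $\vf_\gamma(\beta)=f^-(n)$ and $\vf_\beta(\gamma)=f^+(n)$, and that values no reachable agent positively. Because only the two values $f^-(n),f^+(n)$ occur, every edge out of $\gamma$ is negative, so $\gamma$ has strictly negative utility in any non-singleton coalition and is individually rational only when isolated. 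Moreover $\gamma$ cannot leave on her own, since $\beta$ strictly prefers $\gamma$ present and therefore vetoes $\gamma$'s departure under \CIS; the only route to isolating $\gamma$ is for $\beta$ to depart for $\alpha$. The inequality $f^+(n)\ge|f^-(n)|$ is exactly what I would use both to certify that the intended deviations (element agents grouping, set agents resolving, and finally $\beta\to\alpha$) are genuine \CIS deviations -- each involved agent's gain of an $f^+$-edge dominating the simultaneous loss of an $f^-$-edge -- and to calibrate, via the multiplicities of positive and negative edges incident to $\beta$, that $\beta$ strictly prefers $\alpha$ only once $\alpha$'s coalition has been cleared of set agents.

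For correctness I would argue two directions. If $(\mathcal U,\mathcal M)$ is a Yes-instance, I would exhibit one \CIS execution that first drives the element and set agents into the exact-cover configuration, then lets $\beta$ join $\alpha$ (a \CIS deviation, as $\alpha$ welcomes $\beta$ and $\gamma$ gladly consents to $\beta$ leaving), and finally lets the now-isolated $\gamma$ move to her singleton; the resulting partition is individually rational for every agent, so by \Cref{obs:CIS_dyn_maintain_IR} the dynamics converge to an individually rational \CIS partition. For the converse, I would establish the invariant that in every \CIS-reachable partition of a No-instance, $\beta$ still shares a coalition with $\gamma$: $\beta$'s only improving move is to $\alpha$, and $\beta$ strictly prefers $\alpha$ to her current coalition only once $\alpha$ is rid of set agents, which by the exact-cover analysis never happens without an exact cover. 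Consequently $\gamma$ can never be isolated and is non--individually-rational in every reachable partition, so no \CIS execution can converge to an individually rational partition.

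Finally, I would check that the construction transfers to \FHGs and \MFHGs: since the gadget depends only on sign patterns together with one dominant bundle of positive edges, the relevant strict preferences and consent conditions survive the averaging after rescaling by coalition sizes, exactly as in the proof of the meta theorems, so only the numerical thresholds change. I expect the main obstacle to be the No-instance direction -- proving the reachability invariant against \emph{every} interleaving of \CIS deviations rather than a single intended one, ruling out that some unintended sequence in the element/set part or in the trap ever lets $\beta$ improve by moving and thereby frees $\gamma$. Making this trap inescapable under all orderings, while simultaneously calibrating the bottleneck threshold uniformly over the whole family $f^+(n)\ge|f^-(n)|$ (in particular when $f^+(n)\gg|f^-(n)|$, where a single blocking agent no longer deters $\beta$ and the number of blocking set agents must be made to scale) and for all of \ASHGs, \FHGs, and \MFHGs, is the delicate part; the Yes-instance direction and the consent bookkeeping enabled by $f^+(n)\ge|f^-(n)|$ are comparatively routine.
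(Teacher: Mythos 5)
Your high-level architecture---an individual-rationality trap (an agent held against her will by a positive-valuation veto, freed only by a bottleneck deviation of her captor) combined with a reachability bottleneck---is exactly the paper's mechanism: there, agent $x$ is trapped in $D\cup\{x,y,p\}$ by $y$'s veto, and $y$ can only leave once $k$ vertex agents have joined the grouping coalition. But there is a genuine gap in your plan, and you half-name it yourself without resolving it: the \RXC gadget from the meta theorems \emph{cannot} be implemented under the restriction that all valuations lie in $\{f^-(n),f^+(n)\}$ uniformly over all pairs with $f^+(n)\ge|f^-(n)|$. That gadget depends essentially on a hierarchy of magnitudes ($\vf_\beta(\alpha)=(2h+1)\cdot nT_A$ against $\vf_\beta(s_M)=-nT_A$, the factor-$2$ in $\vf_{s_M}(g)=-2nT_A$, etc.) and on many \emph{zero} valuations (the grouping and restricting agents are indifferent over all coalitions, which is what lets them never deviate and never veto). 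With only two admissible values, neither survives: indifference is unavailable, and your proposed fix for the bottleneck---``the number of blocking set agents must be made to scale''---fails outright when the ratio $f^+(n)/|f^-(n)|$ is superpolynomial (say $f^+(n)=2^{2^n}$, $f^-(n)=-1$), since no polynomial number of $f^-$-edges can offset a single $f^+$-edge. So the calibration you defer to the end is not a delicate detail; it is the crux of the theorem, and the exact-cover gadget has no ratio-free implementation.

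The paper's proof sidesteps this by reducing from \INDSET with a construction engineered so that every decisive utility comparison has \emph{equally many} $f^-$-edges on both sides, making the comparison reduce to counting $f^+$-edges and hence independent of the ratio: agent $y$'s current coalition contributes $k\cdot f^+(n)+f^-(n)$ (one disliked agent $p$) versus $(m{+}1)\cdot f^+(n)+f^-(n)$ for joining the grouping coalition (one disliked agent $g^-$), so $y$ moves exactly when $m\ge k$ vertex agents---an independent set, enforced by mutual $f^-$-edges between adjacent vertex agents---have joined. Lost indifference is replaced by paired restricting agents $r_v^+,r_v^-$ whose mutual $f^+$-edges keep them individually rational and immobile, using $f^+(n)\ge|f^-(n)|$ precisely where you would use it for consent bookkeeping. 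If you want to salvage your reduction-from-\RXC plan, you would have to redesign the bottleneck so that $\beta$'s decision is triggered by agents \emph{arriving} with $f^+$-edges under matched $f^-$-counts rather than by set agents departing against a weighted threshold---at which point you have effectively rediscovered the paper's counting construction.
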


Hence, for each hedonic game, 
one can compute an individually rational \CIS partition by running \CIS dynamics from the singleton partition. 
However, it is not clear whether one can efficiently find a short converging sequence of \CIS deviations, i.e., a sequence that consists of polynomially many steps. 
We, therefore, dedicate the remainder of this section to this question, and focus our attention on \ASHGs.

First, we show that short converging sequences taking a linear number of \CIS deviations always exist.

\begin{theorem}\label{thm:CIS_dyn_shortcut_singleton}
    Let $G$ be an \ASHG and let $\pi$ be a \CIS partition that was reached through an execution of the \CIS dynamics on $G$ when starting from the singleton partition. 
    Then $\pi$ can be reached 
    from the singleton partition after exactly $|N| - |\pi|$ \CIS deviations. 
\end{theorem}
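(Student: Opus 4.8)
The plan is to isolate a structural property of coalitions that is preserved along \CIS dynamics from the singleton partition and that immediately certifies cheap reconstruction. Call an ordering $a_1,\dots,a_k$ of a coalition $C$ \emph{layered} if $\vf_{a_p}(a_q)\ge 0$ for all $p<q$ and $\sum_{p<q}\vf_{a_q}(a_p)>0$ for all $q\ge 2$. The point of this definition is that, when every agent outside $C$ sits in a singleton, letting $a_q$ join $\{a_1,\dots,a_{q-1}\}$ is precisely a \CIS deviation: the abandoned singleton imposes no constraint, the first condition gives $\Fout(\{a_1,\dots,a_{q-1}\},a_q)=\emptyset$ (nobody present objects), and the second gives $\uf_{a_q}(\{a_1,\dots,a_q\})>0=\uf_{a_q}(\{a_q\})$ (the deviator strictly gains). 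Hence a coalition of size $k$ with a layered ordering can be assembled from singletons in exactly $k-1$ \CIS deviations, and these deviations never reference agents outside $C$, so coalitions can be built one after another without interference.

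First I would prove the key invariant: along every execution of the \CIS dynamics started from the singleton partition, each coalition of the current partition admits a layered ordering. The base case holds vacuously. For the inductive step, consider a \CIS deviation $\pi\overset{a}{\to}\pi'$, where $a$ leaves $A:=\pi(a)$ and joins $B:=\pi'(a)\setminus\{a\}$; only these two coalitions change. I first note that a layered ordering implies individual rationality---each member's backward valuations sum to at least $0$ and its forward valuations are nonnegative---so in particular $\uf_a(A)\ge 0$. On the welcoming side I append $a$ to a layered ordering of $B$: the entry part of the \CIS condition ($\Fout(\pi'(a),a)=\emptyset$) yields $\vf_b(a)\ge 0$ for all $b\in B$, and strict improvement together with $\uf_a(A)\ge 0$ gives $\uf_a(B\cup\{a\})=\sum_{b\in B}\vf_a(b)>0$, so the extended ordering is layered.

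The subtle case, and the main obstacle, is the abandoned coalition. Deleting $a$ from a layered ordering of $A$ preserves the first condition trivially, but it could a priori destroy a backward sum by removing a positive term $\vf_{a_i}(a)$ of some later agent $a_i$. Here the contractual (consent-to-leave) part of the \CIS deviation is decisive: $\Fin(A,a)=\emptyset$ forces $\vf_b(a)\le 0$ for every remaining $b\in A$, so deleting $a$ only increases each backward sum and the induced ordering stays layered. This interplay---using the ``vote-out'' constraint to protect the reconstruction of the abandoned coalition, and individual rationality to turn the deviator's \emph{relative} gain into \emph{absolute} positivity on the welcoming side---is the technical heart of the argument.

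Finally I would assemble the theorem. Since $\pi$ is reachable by \CIS dynamics from the singleton partition, the invariant gives each $C\in\pi$ a layered ordering; building the coalitions one at a time along these orderings reaches $\pi$ using $\sum_{C\in\pi}(|C|-1)=|N|-|\pi|$ \CIS deviations. This count is also optimal, since a single-agent deviation lowers the number of coalitions by at most one, so passing from $|N|$ singletons to the $|\pi|$ coalitions of $\pi$ requires at least $|N|-|\pi|$ deviations.
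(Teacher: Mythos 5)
Your proof is correct, and it takes a genuinely different route from the paper's. The paper argues through the history of the given execution: it first shows, by induction over deviations, that every coalition of $\pi$ contains exactly one agent who never deviated (its ``owner''), and then shortcuts the original, possibly exponential, sequence by letting each non-owner perform only her \emph{last} deviation, ordered by last-deviation time; correctness rests on the observation that the shortened welcoming coalition $C_{\mathrm{new}}$ is a subset of the original one $C_{\mathrm{ori}}$, and that any $b \in C_{\mathrm{ori}} \setminus C_{\mathrm{new}}$ must satisfy $\vf_a(b) \le 0$, since $b$ later left a coalition containing $a$ via a \CIS deviation. You instead isolate a history-free certificate---the layered ordering---and prove it is an invariant of the dynamics: entry consent ($\vf_b(a) \ge 0$ on the welcoming side) lets you append the deviator, exit consent ($\vf_b(a) \le 0$ on the abandoned side) protects the backward sums under deletion, and individual rationality, itself implied by layeredness, upgrades the deviator's relative gain to the absolute positivity the backward condition needs (this also rules out the degenerate deviation to a singleton, so that case never troubles the induction). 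In effect, your layered ordering is precisely the structure the paper's shortcut implicitly builds---owner first, then joiners in order of last deviation---but your formulation buys two things: it dispenses entirely with the owner claim and with any timing information from the original execution, and it furnishes a reconstruction certificate for \emph{every} partition reachable by \CIS dynamics from singletons, not only the final \CIS partition. Your closing observation that $|N| - |\pi|$ is also a lower bound, since a single-agent deviation decreases the number of coalitions by at most one, is a small strengthening the paper leaves implicit.
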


\begin{proof}
    Consider an execution of the \CIS dynamics on $G$ when starting from the singleton partition. 
    Our proof relies on the following claim which is proved in the appendix. 

    \begin{restatable}{claim}{CISshortcut}\label{claim:CIS_dyn_shortcut_singleton_exactly_one}
        Every coalition $C$ in $\pi$ contains exactly one agent that never deviated in the execution 
        of the \CIS dynamics.
    \end{restatable}

    We denote the agents that never deviate to reach $\pi$ as per \Cref{claim:CIS_dyn_shortcut_singleton_exactly_one} as the \emph{owners} of their respective coalitions in $\pi$.
    Moreover, given an arbitrary agent $a \in N$, we denote by $o_a$ the owner of the coalition $\pi(a)$.
    Now, given the original (possibly exponential length) sequence of \CIS deviations that resulted in $\pi$, consider the last deviation of each agent. 
    We construct a new, shortened sequence of $|N| - |\pi|$ deviations, where each agent $a$ that is not the owner of a coalition performs exactly one deviation from her singleton coalition to join $o_a$. 
    We order this new deviation sequence by when the agents performed their last deviation in the original sequence.
    It is clear that this new deviation sequence results in the same partition $\pi$ after exactly $|N| - |\pi|$ steps.
    
    It remains to show that the new sequence consists only of \CIS deviations. 
    As each agent deviates from her singleton coalition, no agent will ever be blocked from leaving. 
    Now, given a nonowner agent $a$, let $C_{\mathrm{new}}$ be the coalition that she joins in the new sequence and let $C_{\mathrm{ori}}$ be the coalition that she joins in the original sequence.
    Observe that $C_{\mathrm{new}} \subseteq C_{\mathrm{ori}}$ must hold. 
    Then, $a$ not being blocked from joining $C_{\mathrm{new}}$ directly follows from the fact that the original sequence consists only of \CIS deviations. 
    Further, in case there exists an agent $b \in C_{\mathrm{ori}} \setminus C_{\mathrm{new}}$ with $\vf_a(b) > 0$, then $b$ must have deviated from a coalition that contains $a$ in the original sequence, which cannot have been a \CIS deviation. 
    Hence, $\vf_a(b)\le 0$, and thus $\uf_a(C_{\mathrm{new}})\ge \uf_a(C_{\mathrm{ori}}) > 0$ must hold, where the strict inequality follows because $C_{\mathrm{ori}}$ was reached in a \CIS dynamics starting from the singleton partition by a deviation of $a$.
    Therefore, the deviation of $a$ is a \CIS deviation.
    Since $a$ was chosen arbitrarily, this concludes the proof. 
\end{proof}

An additional observation from the last theorem is that in the constructed dynamics
every agent deviates at most once.
However, finding this sequence needed knowledge of a possibly much longer sequence.
This raises the question whether all \CIS dynamics starting from the singleton partition are short.
We answer this question negatively by 
constructing a family of instance where \CIS dynamics can have exponential length
with respect to the game size.

\begin{restatable}{theorem}{CisExp}\label{thm:CIS_EXP}
    Let $\chi$ be a stability notion with $\CIS \devimplies \chi$. 
    Then the $\chi$ dynamics starting from the singleton partition may take an exponential number of steps
    with respect to the game's input size. 
\end{restatable}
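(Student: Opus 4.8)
The plan is to reduce the statement to the \CIS case and then realize a binary counter. Since $\CIS \devimplies \chi$, every \CIS deviation is in particular a $\chi$ deviation, so any execution of the \CIS dynamics is simultaneously an execution of the $\chi$ dynamics. It therefore suffices to exhibit, for each $k$, an \ASHG on $\mathcal{O}(k)$ agents together with a single execution of the \CIS dynamics from the singleton partition that performs $2^{\Omega(k)}$ deviations, while the valuations have description length $\mathrm{poly}(k)$; this execution is then a valid $\chi$ execution of the required length.

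Before building the gadget, I would record the structural constraints that \CIS imposes, since they both guide and bound the construction. If $\pi \overset{a}{\rightarrow}_{\CIS} \pi'$, then the deviator $a$ strictly improves, every agent remaining in the abandoned coalition weakly prefers $a$ gone (hence has nonpositive valuation for $a$), and every agent in the welcoming coalition weakly prefers $a$ present (hence has nonnegative valuation for $a$). In particular every agent's utility is nondecreasing along a \CIS execution and the utilitarian welfare strictly increases at each step. Consequently, with integer valuations of magnitude at most $W$ the welfare lies in $[-n^2 W, n^2 W]$ and grows by at least $1$ per step, so every execution has length $\mathcal{O}(n^2 W)$. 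This observation pinpoints what is needed: exponentially long executions are impossible with polynomially bounded valuations, so the gadget must use geometric weights $2^0, 2^1, \dots, 2^{k}$, which have $\mathcal{O}(k)$ bits and keep the input size polynomial.

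The construction itself is a binary counter driven by $k$ mutually interacting \emph{bit agents}, each able to sit in a ``low'' or ``high'' landmark coalition, so that the global coalition structure encodes a $k$-bit integer. I would choose the geometric weights so that flipping bit $i$ changes the moving agent's own utility by an amount dominating the combined effect of all lower bits; the intended execution then cycles the gadget through all $2^k$ counter values, and summing the flips over all increments gives $2^{\Omega(k)}$ deviations. The essential point, and the reason \CIS (rather than merely Nash) deviations can be used, is that \ASHG valuations are \emph{directed}: the outgoing valuations $\vf_{b_i}(\cdot)$ that create bit~$i$'s incentives are decoupled from the incoming valuations $\vf_{\cdot}(b_i)$ that govern whether others consent to $b_i$'s moves. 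I would therefore set the incoming sign pattern so that, along the scheduled execution, each mover is consistently unwanted in the coalition it abandons and welcome in the coalition it enters, making both consent conditions automatic and leaving only strict improvement of the mover to verify.

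The main obstacle is the carry. Incrementing a binary counter turns trailing $1$s back into $0$s, i.e.\ it moves bit agents ``downward''; under \CIS such a downward move is still a deviation whose mover must \emph{strictly} gain. The heart of the argument is to arrange the landmark coalitions and the geometric weights so that resetting a low bit becomes strictly profitable for that bit's agent exactly once the next-higher bit has been set, yielding a self-propagating cascade of strictly improving \CIS deviations. Thus the only technical content is to check, for the three move types (setting a $0$-bit, resetting a bit during a carry, and the intermediate enabling moves), that each step is simultaneously strictly improving for its mover and unanimously consented to under the chosen sign pattern; once this is established, the $2^{\Omega(k)}$ deviation count is immediate, completing the proof.
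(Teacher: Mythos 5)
Your framing is right and matches the paper's: since $\CIS \devimplies \chi$, it suffices to exhibit a single exponentially long \CIS execution from the singleton partition in an \ASHG, and your welfare-potential observation correctly explains why exponentially large (but polynomially representable) weights are unavoidable. The paper's proof, too, is at heart a disguised binary counter: it uses counting-agent groups $B_1,\dots,B_k$ with weights $f(j)=j^{kj}$ and ``track'' coalitions, and group $B_i$ moves $2^{k-1-i}$ times via a recursive two-fold collapse (each level-$i$ call spawns two level-$(i-1)$ calls), which is exactly the flip-count profile of your counter bits, with the collapse playing the role of the carry.

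However, your proposal has a genuine gap: the gadget is never constructed, and what you defer as ``the only technical content'' is in fact the entire proof. Two specific problems stand out. First, your consent mechanism cannot work as stated: you want each mover to be ``unwanted in the coalition it abandons and welcome in the coalition it enters,'' but in a counter the same bit agent must repeatedly both leave and re-enter the same landmark coalitions, and in an \ASHG the incoming valuations $\vf_c(b)$ are fixed. An agent $c$ co-present at both a leave and a later re-entry must satisfy $\vf_c(b)\le 0$ and $\vf_c(b)\ge 0$, i.e., $\vf_c(b)=0$ exactly. So strict sign patterns are impossible where mobility is needed; the paper resolves this with one-directional valuations ($\vf_a(b)>0$, $\vf_b(a)=0$), which is precisely the structural feature it isolates afterwards via the parameter $s(G)$ in \Cref{thm:CIS_FFPT_asymm_0}. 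Second, and harder, the carry requires each reset move to be \emph{strictly} improving for its mover even though the mover returns to a ``low'' position it previously abandoned; since along \CIS dynamics every agent's utility is nondecreasing, the low landmark's composition must itself change so that $b_1$'s utility strictly increases across all $2^{k-1}$ of its visits. This forces the higher-bit agents to be co-located in shared coalitions whose composition encodes the counter value — essentially the paper's track architecture, in which condition II.b and the inequality $f(i) > k+1 + k^2 f(i-1)$ certify the strict gains. You assert this arrangement exists but give neither the coalition structure nor the weight calibration, so the claimed ``three move types'' cannot be checked; as written, the proposal is a plausible plan rather than a proof.
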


It remains an interesting open problem to determine the complexity of computing an individually rational \CIS partition (even without using dynamics).
We make first progress towards this question by identifying the structural reason behind \Cref{thm:CIS_EXP}.
The games constructed in its proof
heavily rely on valuations that are positive in one direction but~$0$ in the other. 
If we bound the number of agents with such valuations, we can efficiently compute individually rational \CIS partitions.
To this end, for an \ASHG $G = (N,\uf)$, define $s(G) := |\{a \in N \mid \exists b \in N : \vf_a(b) > 0 \land \vf_b(a) = 0\}|$.

The proof idea is as follows.
We construct the desired \CIS dynamics in three phases.
Define $X := \{a \in N \mid \exists b \in N : \vf_a(b) > 0 \land \vf_b(a) = 0\}$, i.e., $|X| = s(G)$.
In the first phase, the agents not in $X$ deviate.
After at most one deviation each, a partition is reached in which these agents cannot deviate again.
In the second phase, agents in $X$ deviate at most once, joining best coalitions containing agents not in $X$.
The first two phases comprise at most $n$ deviations.
In the third phase, arbitrary \CIS deviations are performed.
It can be shown that, after the second phase, deviations can only be performed by agents in $X$, joining other agents in $X$.
Hence, this can lead to at most $s(G)^{s(G)}$ unique partitions.

\begin{restatable}{theorem}{CISffpt}\label{thm:CIS_FFPT_asymm_0}
    An execution of the \CIS dynamics starting from the singleton partition taking at most $s(G)^{s(G)}+n$ deviations can be computed in polynomial time with respect to the game's input size.
\end{restatable}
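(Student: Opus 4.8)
The plan is to exhibit a single \CIS execution from the singleton partition in three phases and to bound its total length, exploiting two facts: every \CIS deviation strictly increases utilitarian welfare (so no partition can ever repeat along a \CIS execution, as noted before \Cref{obs:CISconvergence}), and all ``long'' activity can be confined to the set $X := \{a \in N \mid \exists b : \vf_a(b) > 0 \wedge \vf_b(a) = 0\}$ of size $s(G)$. I will use the local characterization of a \CIS deviation of $a$ out of $C = \pi(a)$ into $D \cup \{a\}$ in an \ASHG: $a$ strictly improves, every $b \in C \setminus \{a\}$ has $\vf_b(a) \le 0$ (the abandoned coalition consents to $a$ leaving), and every $b \in D$ has $\vf_b(a) \ge 0$ (the welcoming coalition consents to $a$ joining). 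Two consequences drive the argument. First, a \emph{monotonicity invariant}: whenever an agent joins or leaves the coalition of some agent $x$ via a \CIS deviation, the relevant consent condition forces the valuation $\vf_x(\cdot)$ to have exactly the sign that keeps $x$'s utility from decreasing; hence every agent's utility is non-decreasing along the whole execution. Second, a property specific to $a \notin X$: if $a$ \CIS-joins $D$ and $\vf_a(b) > 0$ for some $b \in D$, then $\vf_b(a) \ge 0$ and, since $a \notin X$ forbids $\vf_b(a) = 0$, in fact $\vf_b(a) > 0$; so a non-$X$ agent gains utility only from \emph{mutually positive} partners, and moreover such a partner, once in $a$'s coalition, can never leave it (as $a$ would object) and blocks $a$ from leaving as well.

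In Phase~1 each agent of $N \setminus X$ deviates at most once, in a suitable order, to a utility-maximal coalition it can \CIS-join, costing at most $n - s(G)$ deviations. In Phase~2 each agent of $X$ deviates at most once to a best coalition that still contains a non-$X$ agent, costing at most $s(G)$ deviations, so Phases~1 and~2 together take at most $n$ steps. In Phase~3 we simply keep applying arbitrary \CIS deviations until none remains; termination at a \CIS partition is guaranteed by \Cref{obs:CISconvergence}. Finding any admissible \CIS deviation takes polynomial time (scan all agents and the at most $n$ current coalitions and check the three local conditions), so the cost is dominated by the number of steps.

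The counting rests on the structural claim that \emph{after Phase~2 every remaining \CIS deviation is performed by an agent of $X$ and targets a coalition consisting only of agents of $X$}. Granting this, the non-$X$ agents are frozen for the rest of the execution, and the partition is determined by the arrangement of the $s(G)$ agents of $X$ alone; there are at most $s(G)^{s(G)}$ such arrangements (bounded, e.g., by the number of set partitions of $X$), and strict welfare monotonicity forbids repeats, so Phase~3 contributes at most $s(G)^{s(G)}$ deviations. Summing the phases yields the bound $s(G)^{s(G)} + n$, and multiplying by the per-step cost gives a running time of $\mathrm{poly}(n) \cdot (s(G)^{s(G)} + n)$, i.e., fixed-parameter tractable in $s(G)$ and polynomial once $s(G)$ is bounded.

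The main obstacle is proving the structural claim, which splits into two halves. First, that non-$X$ agents never deviate after Phase~1: here I would use the mutual-positivity consequence above to argue that Phase~1 already groups each non-$X$ agent with a utility-maximal achievable set of its mutually positive partners (which, if they lie in $X$, sit still during Phase~1 and can be joined), that such partners can never subsequently separate (the non-$X$ agent objects), and that the monotonicity invariant prevents later movements of $X$-agents from ever creating a strictly better consenting coalition. Second, that $X$-agents no longer profit from any coalition containing a non-$X$ agent: this is precisely what Phase~2 is designed to exhaust, and I would show that once each $X$-agent has taken its best ``mixed'' move, every further improvement must come from regrouping purely within $X$, since the non-$X$ cores of mixed coalitions are fixed and only lose $X$-members over time. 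The delicate point is to pin down the orderings in Phases~1 and~2 tightly enough that ``at most one deviation each'' provably holds while the post-Phase-2 configuration remains robust against \emph{all} later rearrangements of $X$; in particular one must rule out a non-$X$ agent being tempted to follow a mutually positive partner into a newly formed pure-$X$ cluster, which is exactly the scenario that the Phase~1 grouping and the trapping observation are meant to foreclose.
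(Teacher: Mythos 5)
Your plan coincides with the paper's proof in its skeleton: the same set $X$ with $|X| = s(G)$, the same three phases (one pass over $N \setminus X$, one utility-maximal pass over $X$ restricted to coalitions meeting $N \setminus X$, then arbitrary \CIS deviations), the same $n + s(G)^{s(G)}$ accounting via acyclicity of \CIS dynamics, and the same polynomial per-step cost. However, the pivotal structural claim you rely on is both left unproven and, as stated, false. You assert that after Phase~2 every remaining \CIS deviation ``targets a coalition consisting only of agents of $X$,'' and correspondingly that mixed coalitions ``only lose $X$-members over time.'' Consider the \ASHG with $N = \{y, x_1, x_2\}$, $\vf_{x_1}(y) = 1$, $\vf_{x_2}(x_1) = 1$, and all other valuations $0$. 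Then $x_1, x_2 \in X$ (witnessed by $y$ and $x_1$, respectively) and $y \notin X$. In Phase~1, $y$ has no improving move; in Phase~2, $x_2$ has no improving move to a coalition containing $y$ (it would yield utility $0$), while $x_1$ joins $\{y\}$ (utility $1$; $y$ is indifferent). In Phase~3 the unique available \CIS deviation is $x_2$ joining the \emph{mixed} coalition $\{y, x_1\}$: she gains utility $1$ and both members are indifferent. So Phase-3 targets need not be pure-$X$ clusters, mixed coalitions can \emph{gain} $X$-members, and your enumeration of Phase-3 configurations ``by the number of set partitions of $X$'' rests on a false premise.

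What the paper actually proves is weaker and suffices. First, after Phase~1 no agent of $Y := N \setminus X$ can ever deviate again: for $Y$-agents in nonsingleton coalitions this is your mutual-positivity locking, but for $Y$-agents left in singletons a separate contradiction argument is needed (if such a $y$ could later join a coalition $C$, then the members of $C$ already present at $y$'s Phase-1 turn gave her utility at most $0$, so some later arrival $a \in C$ has $\vf_y(a) > 0$; such an $a$ must have been singleton throughout Phase~1, and $\vf_a(y) > 0$ follows from $y \notin X$ together with $a$'s consent to $y$ joining $C$ --- so $y$ could already have joined $\{a\}$ at her turn). Second, no $Y$-agent left in a singleton after Phase~2 is ever joined, because utilities weakly increase along \CIS dynamics, so a later profitable join would contradict the joiner having taken a utility-\emph{maximal} deviation at her Phase-2 turn. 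Together these confine Phase~3 to $X$-agents moving among the at most $s(G)$ coalitions that contained an $X$-agent at the end of Phase~2 --- coalitions that may well contain $Y$-agents, as in the example above --- which gives at most $s(G)^{s(G)}$ reachable partitions and hence at most $s(G)^{s(G)}$ further deviations. Your sketched route for the first half (that Phase~1 groups each $Y$-agent with ``a utility-maximal achievable set of its mutually positive partners'') is not the invariant that makes this work and is not established; since you yourself identify the structural claim as the main obstacle, the proposal has a genuine gap exactly where the proof's work lies, and the strengthened pure-$X$ form you propose to close it with cannot be proved.
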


\section{Conclusion}

We presented a meta approach to determine the computational complexity of deciding whether the deviation dynamics possibly or necessarily converge in a hedonic game based on the mere existence of simple No-instances.
Our results encompass all standard stability notions based on deviations between \NS and \CIS deviations. 
Moreover, they hold for the prominent game classes of additively separable, fractional, and modified fractional hedonic games.
We also investigated the computational complexity of finding an individually rational \CIS partition in an \ASHG. Here, dynamics may converge in a linear number of steps, but we can only efficiently extract the deviations for fast convergence when restricting the number of certain valuation pairs.

Natural directions for future work include reevaluating our hardness results for restricted domains of valuations, such as, utilities based on \emph{friend-and-enemy} evaluations \citep{DBHS06a}, different classes of hedonic games, including ordinal models, 
or stability notions that rely on group deviations. Further, while \citet{BBK23a} discuss the structure of outcomes and running time of simulations for \NS dynamics, an interesting direction would be a comprehensive experimental evaluation for a broader set of stability notions.
Finally, an intriguing open question is the computational complexity of computing an individually rational \CIS partition, and the applicability of our established results to game classes other than \ASHGs.

\section*{Acknowledgments}

Most of this work was done when Martin Bullinger was at the University of Oxford.
Martin Bullinger was supported by the AI Programme of The Alan Turing Institute.

\appendix

\section*{Appendix}

In the appendix, we provide additional material, such as missing proofs.

\section{Proof of Proposition~\ref{prop:VSstandard}}\label{app:VSstandard}

In this appendix, we provide the proof that voting-based stability notions are standard stability notions. 

\VSstandard* 

\begin{proof}
    Let $\qout, \qin \in [0,1]$.
    Consider an agent $a\in N$ and a single-agent deviation $\pi \overset{a}{\rightarrow} \pi'$.
    For $b\in N$, define $\Delta\mathit{uc}_G(b, \pi, \pi') := \uf_b(\pi') - \uf_b(\pi)$, which only depends on $\mathit{uc}_G(b, \pi, \pi')$.
    
    Define $f: \mathcal{UC} \times \mathcal{UC} \times (\QQ \times \QQ) \rightarrow \{0, 1\}$ such that for all $X, Y \in \mathcal{UC}$, and $z \in \QQ \times \QQ$, we have that 
    $f(X,Y,z) = 1$ if and only if
    \begin{itemize}
        \item $\Delta z > 0$,
        \item $|\{x\in X\colon \Delta x > 0\}|\ge \qout(|\{x\in X\colon \Delta x < 0\}| + |\{x\in X\colon \Delta x > 0\}|)$, and 
        \item $|\{y\in Y \colon \Delta y > 0\}| \ge \qin (|\{y\in Y\colon \Delta y < 0\}| + |\{y\in Y\colon \Delta y > 0\}|)$.
    \end{itemize}

    We first show that $f$ precisely encapsulates $(\qout,\qin)$-\SVS.
    Note that it holds that 
    \begin{itemize}
        \item $\Fout(\pi(a), a) = |\{x\in \mathit{UC}_G^{\mathrm{out}}(a, \pi, \pi')\colon \Delta x > 0\}|$,
        \item $\Fin(\pi(a), a) = |\{x\in \mathit{UC}_G^{\mathrm{out}}(a, \pi, \pi')\colon \Delta x < 0\}|$,
        \item $\Fout(\pi'(a), a) = |\{x\in \mathit{UC}_G^{\mathrm{in}}(a, \pi, \pi')\colon \Delta x < 0\}|$, and 
        \item $\Fin(\pi'(a), a) = |\{x\in \mathit{UC}_G^{\mathrm{in}}(a, \pi, \pi')\colon \Delta x > 0\}|$.
    \end{itemize}
    Hence, $\pi \overset{a}{\rightarrow} \pi'$ is a $(\qout,\qin)$-\SVS deviation if and only if $f(\Fout(\pi(a), a),\Fin(\pi(a), a),\mathit{uc}_G(a, \pi, \pi')) = 1$.
    It follows that $(\qout,\qin)$-\SVS is an anonymously hedonic stability notion.

    Moreover, consider $X, X', Y, Y' \in \mathcal{UC}$, and $z, z' \in \QQ \times \QQ$ such that $X \trianglelefteq X'$, $Y \trianglelefteq Y'$, and $\{z\} \trianglelefteq \{z'\}$. 
    If $f(X,Y,z) = 0$, then $f(X,Y,z) \le f(X',Y',z')$ is immediate.
    Assume, therefore, that $f(X,Y,z) = 1$.
    Since $\{z\} \trianglelefteq \{z'\}$, it holds that $\Delta z'\ge \Delta z > 0$.
    The second condition in the definition of $f$ holds for $X'$ if $\qout = 0$.
    If $\qout > 0$, then
    $\{x\in X\colon \Delta x < 0\} = \emptyset$ or there exists $\hat x\in \{x\in X\colon \Delta x > 0\}$.
    Hence, $\max\{\Delta x\colon x\in X\}\ge 0$.
    Since, $X \trianglelefteq X'$, it follows that $\min\{\Delta x'\colon x'\in X'\}\ge 0$, and therefore the second condition in the definition of $f$ is satisfied for $X'$.
    
    Finally, the third condition in the definition of $f$ is satisfied for $Y'$ by an analogous argument.
    We conclude that $f(X',Y',z') = 1$.
    Hence, $f$ is monotone.
\end{proof}

\section{Proof of Theorems~\ref{thm:unified_PCD} and \ref{thm:unified_NCD}}\label{app:proofs}

In this section, we will provide the full proof of \Cref{thm:unified_PCD,thm:unified_NCD}.
Both proofs use the same overall construction, which we will introduce first, and analyze in subsequent lemmas. 
The reduction is from \problemname{Exact Cover by Three Sets (\XTC)}, which is defined as follows. 

\problemdefijcai
{Exact Cover By $3$-Sets (\XTC)}
{A finite set of elements $\mathcal{U}$ and a family $\mathcal{M}$ of subsets of $\mathcal{U}$ of size~$3$.}
{Is there a selection of exactly $\nicefrac{|\mathcal{U}|}{3}$ sets from $\mathcal{M}$ whose union is $\mathcal{U}$, i.e., is there an exact cover of $\mathcal{U}$ with sets from $\mathcal{M}$?}

It is known that \XTC is \NP-complete \citep{Karp72a}.
We use the following variant 
that assumes further restrictions on the structure of the set $\mathcal{M}$. 
This variation is known to remain \NP-complete \citep{Gonz85a}.

\problemdefijcai
{Restricted Exact Cover By $3$-Sets (\RXC)}
{A finite set of elements $\mathcal{U} = \{e_1, \ldots, e_{3 h}\}$ 
and a family $\mathcal{M} = \{M_1, \ldots, M_{3 h}\}$ of subsets of $\mathcal{U}$ of size~$3$ such that every element of $\mathcal{U}$ belongs to exactly three sets in $\mathcal{M}$.}
{Is there a selection of exactly $h$ sets from $\mathcal{M}$ whose union is $\mathcal{U}$, i.e., is there an exact cover of $\mathcal{U}$ with sets from $\mathcal{M}$?}

Throughout the remaining section, we assume that $\chi$ is a standard stability notion such that $\chi \devimplies \NS$ and $\CIS \devimplies \chi$.

\subsection{Reduction from \RXC}
\label{sec:hardnessproof:construction}

\begin{figure*}[ht!]
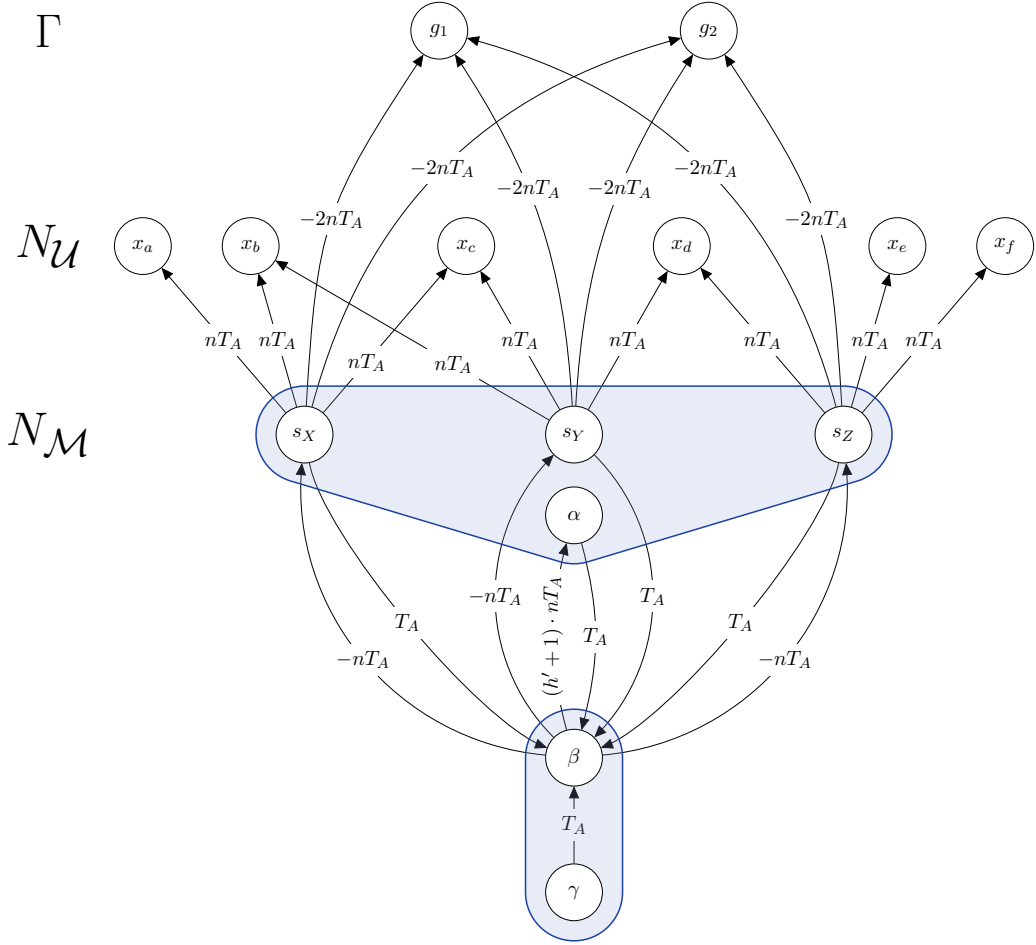

        \centering
        
        \begin{subfigure}[b]{\textwidth}
            \centering
            \resizebox{.85\textwidth}{!}{
                \tikzfig{figures/unified_reduction}
                }
            \caption{Schematic of the overall construction, where we omit the agents in $R \cup D \cup (A \setminus \{\gamma\})$ and the corresponding valuations. Further, we omit all outgoing valuations of agents in $N_\mathcal{U}$, and the negative valuations of set agents for element agents that do not belong to the corresponding sets. 
            Technically, $\vf_\beta(\alpha)$ is only well-defined for \RXC source instances, which we indicate by replacing $2h$ by $h'$. 
            One can assume $h' = |\mathcal{M}| - \nicefrac{|\mathcal{U}|}{3}$. We display the nonadapted instance, where the valuation between agents $\beta$ and $\gamma$ is not flipped.
            }
            \label{subfig:unified_reduction_a}
        \end{subfigure}
        
        \caption{Illustration of the reduction. 
        For the sake of simplicity, the depicted reduction is from \XTC instead of \RXC. However, the schematic is analogous apart from a small change to the valuation $\vf_\beta(\alpha)$. The reduced instance for the source instance $(\mathcal{U}, \mathcal{M})$ is displayed, where $\mathcal{U} = \{a, \ldots, f\}$, and $\mathcal{M} = \{X, Y, Z\}$ with $X = \{a, b, c\}$, $Y = \{b, c, d\}$, and $Z = \{d, e, f\}$. A directed edge from agent $p$ to agent $d$ represents the valuation $\vf_p(d)$. Whenever two or more displayed agents belong to the same coalition in the starting partition~$\startpart$, we indicate this by blue boxes.}
    \end{figure*}
    \begin{figure*}[ht!]\ContinuedFloat
        \centering
        \begin{subfigure}[b]{.55\textwidth}
            \centering
            \resizebox{1\textwidth}{!}{
                \tikzfig{figures/unified_reduction_sub}
                }
            \caption{Schematic of the construction with focus only on the agents in $\{g_1, r_a, r_d, x_a, x_b, s_X, \alpha\}$, where $a \in X$ but $d \not\in M$. Unlabeled edges represent a valuation of $1$, and dashed edges represent a valuation of $-1$.}
            \label{subfig:unified_reduction_b}
        \end{subfigure}
        
        \caption{Illustration of the reduction. For the sake of simplicity, the depicted reduction is from \XTC instead of \RXC. However, the schematic is analogous apart from a small change to the valuation $\vf_\beta(\alpha)$. The reduced instance for the source instance $(\mathcal{U}, \mathcal{M})$ is displayed, where $\mathcal{U} = \{a, \ldots, f\}$, and $\mathcal{M} = \{X, Y, Z\}$ with $X = \{a, b, c\}$, $Y = \{b, c, d\}$, and $Z = \{d, e, f\}$. A directed edge from agent $p$ to agent $d$ represents the valuation $\vf_p(d)$. Whenever two or more displayed agents belong to the same coalition in the starting partition~$\startpart$, we indicate this by blue boxes.}
        \label{fig:unified_reduction}
    \end{figure*}

    Consider an \RXC instance $\mathcal{I} = (\mathcal{U}, \mathcal{M})$, where $|\mathcal U| = 3h$. 
    The reduction is illustrated in \Cref{fig:unified_reduction}.
    We construct an \ASHG, \FHG, or \MFHG $G = (N, \vf)$ as follows. 
    We define the set $N$ of agents as $N_{\mathcal{M}} \cup N_{\mathcal{U}} \cup R \cup \Gamma \cup D \cup A \cup \{\alpha, \beta\}$, where 
    \begin{itemize}
        \item $N_{\mathcal{M}} = \{ s_M \}_{M \in \mathcal{M}}$ is a set of \emph{set agents}, 
        \item $N_{\mathcal{U}} = \{ x_e \}_{e \in \mathcal{U}}$ is a set of \emph{element agents},
        \item $R = \{r_e\}_{e \in \mathcal{U}}$ is a set of \emph{restricting agents},
        \item $\Gamma = \{g_i\}_{i \in [h]}$ is a set of \emph{grouping agents}, 
        \item $A$ is a set of \emph{gadget agents} which contains a dedicated agent $\gamma$, and
        \item $D$ is a (possibly empty) set of $\max(0, |A| - 2h)$ \emph{dummy agents}.
    \end{itemize}

    We assume that valuations among agents in $A$ are already defined. 
    Based on this, define $$t_{A} := 1 + \sum_{p, d \in A} |\vf_p(d)|\quad \text{and}\quad T_A := nt_A\text.$$ 
    There, $n$ is, as usual, the number of agents.
    
    We set $\vf_\gamma(\beta) = T_A$ and $\vf_\beta(\gamma) = 0$. 
    Moreover, we define valuations among the agents in $N \setminus A$ as follows:  
    \begin{enumerate}[1.]
        \item Let $\vf_\beta(\alpha) = (2h + 1) \cdot nT_A$, and $\vf_\beta(s_M) = -nT_A$ for every $M \in \mathcal{M}$.
        \item Let $\vf_\alpha(\beta) = T_A$.
        \item For every $d \in D$, let $\vf_d(\beta) = T_A$.
        \item For every $M \in \mathcal{M}$, $e \in M$, $e' \in \mathcal{U} \setminus M$, and $g \in \Gamma$, let 
        \begin{enumerate}[(a)]
            \item $\vf_{s_M}(x_e) = nT_A$, 
            \item $\vf_{s_M}(r_e) = \vf_{s_M}(r_{e'}) = \vf_{s_M}(x_{e'}) = -nT_A$,
            \item $\vf_{s_M}(g) = -2nT_A$, and
            \item $\vf_{s_M}(\beta) = T_A$.
        \end{enumerate}
        \item For every $e, e' \in \mathcal{U}$ with $e \neq e'$, $M, M' \in \mathcal{M}$ with $e \in M$ and $e \not\in M'$, and $g \in \Gamma$, let 
        \begin{enumerate}[(a)]
            \item $\vf_{x_e}(x_{e'}) = \vf_{x_e}(g) = 1$, 
            \item $\vf_{x_e}(r_{e'}) = -1$,
            \item $\vf_{x_e}(\alpha) = -n^2T_A$,
            \item $\vf_{x_e}(s_M) = nT_A$, and
            \item $\vf_{x_e}(s_{M'}) = -nT_A$.
        \end{enumerate}
    \end{enumerate}

    Based on the construction so far, we define $$t_{N \setminus A} := 1 + \sum_{p, d \in N \setminus A} |\vf_p(d)|\quad \text{and}\quad T_{N \setminus A} := nt_{N \setminus A}\text.$$ 
    
    We define valuations such that, with the exception of coalition $\{\beta, \gamma\}$, no two agents $(p, d) \in A \times (N \setminus A)$ can ever be part of a joint individually rational coalition, and we refer to these valuations by \emph{sub-game restricting valuations}. Specifically, for all $(p, d) \in A \times (N \setminus A)$ with $(p, d) \neq (\beta, \gamma)$, let $\vf_p(d) = -T_{A}$, and $\vf_d(p) = -T_{N \setminus A}$.

    Finally, we set all valuations that have not yet been specified to~$0$.

    For the reader's convenience, valuations are often chosen much larger than necessary for our construction to work as intended, simplifying many of our following arguments.

    We set the starting partition $\startpart$ to 
    \begin{align}
        & \{ \{g\} \mid g \in \Gamma\} \cup \{\{x_e, r_e\} \mid e \in \mathcal{U}\} \cup \notag\\
        & \{\{D \cup N_{\mathcal{M}} \cup \{\alpha\}\}, \{\beta, \gamma\}\} \cup \pi_A\text,\label{eq:startpart}
    \end{align}

    where $\pi_A$ is an arbitrary but fixed partition of the agents in $A \setminus \{\gamma\}$, whose exact composition is left open.
    
    Let $\Lambda \subseteq N_{\mathcal{M}}$ be an arbitrary subset of exactly $2h$ set agents, and consider the coalitions $C = \Lambda \cup D \cup \{\alpha\}$ and $\{\beta, \gamma\}$.
    It will later become crucial whether $\beta$ can perform a $\chi$ deviation from $\{\beta, \gamma\}$ to join a coalition of type $C$.
    First, note that the utility-change tuples that are relevant for whether $\beta$ can deviate towards $C$ do not depend on the composition of $C$ (they are identical regardless of the agent set~$\Lambda$).
    Hence, since we consider a standard stability notion, we can check whether such a deviation is permitted under our stability notion, by arbitrarily fixing $\Lambda$.
    
    Now, note that a deviation of $\beta$ leaving $\{\beta,\gamma\}$ and joining $C$ is an NS deviation because $\beta$ would increase her utility from $0$ to $nT_A$.
    However, the deviation might not be a $\chi$ deviation, e.g., because other involved agents might block the deviation.
    We can check this by applying the polynomial-time computable function $f_{\chi}$ associated to $\chi$ for the utility-change tuples of this deviation.
    In case that $\beta$ cannot perform a $\chi$ deviation to join $C$, we amend our construction by changing the valuations between $\beta$ and $\gamma$ and setting $\vf_\gamma(\beta) = 0$ and $\vf_\beta(\gamma) = T_A$ (we ``reverse'' the direction of the weighted edge between agents $\beta$ and $\gamma$).
    This will ensure that the deviation of $\beta$ would be a CIS deviation and hence a $\chi$ deviation.\footnote{The exact reason for this case distinction will become apparent from the proof later on, see \Cref{lem:unified_reduction_beta_can_deviate,lem:unified_reduction_gamma_doesnt_deviate}. 
    In particular, if we change the valuations, we will make use of the fact that $\beta$ was not allowed to perform the deviation to join $C$ before the alteration of valuations.}

    We remark that the reduced instance $(G, \startpart)$ can be constructed in polynomial time with respect to the source instance of \RXC, once the subgame induced by the agents in $A$ has been fixed.

\subsection{Investigation of Dynamics in the Reduced Instance}
\label{sec:hardnessproof:behavior}

We will now in detail consider dynamics in the reduced instance.
Throughout this section, we refer to $G$ as the reduced game and $\startpart$ the chosen starting partition.
We first capture the behavior induced by the sub-game restricting valuations.
Since these are sufficiently negative, no agent in $A$ performing an \NS deviation can ever join an agent in $N\setminus A$ and vice versa. 
We can use this to prove the following lemma.

\begin{lemma}\label{obs:unified_reduction_no_connecting_coalition}
    In every execution of the $\chi$ dynamics of $(G, \startpart)$, the only coalition ever containing an agent from $A$ and an agent from $N\setminus A$ 
    is $\{\beta, \gamma\}$. 
\end{lemma}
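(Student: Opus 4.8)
The plan is to exploit the two sets of \emph{sub-game restricting valuations}, which were designed precisely so that mixing an agent $p \in A$ with an agent $d \in N \setminus A$ in any coalition (except the pair $\{\beta,\gamma\}$) is catastrophically unprofitable for at least one of them. Concretely, for every such pair $(p,d) \neq (\beta,\gamma)$ we have $\vf_p(d) = -T_A$ and $\vf_d(p) = -T_{N\setminus A}$, where $T_A = n t_A$ and $T_{N\setminus A} = n t_{N\setminus A}$ dominate the total absolute weight available inside $A$ and inside $N\setminus A$, respectively. I would first record this magnitude comparison as the core quantitative fact: for any agent $p \in A$, the sum of all \emph{positive} valuations $p$ could ever collect from agents in $A$ is strictly less than $T_A$ (by definition of $t_A$), and symmetrically for agents in $N \setminus A$ with bound $T_{N\setminus A}$. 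Hence a single ``wrong'' cross-group neighbour already drives the utility below the utility of the singleton coalition, making the coalition fail individual rationality for that agent.

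The key observation is that since $\chi \devimplies \NS$, every $\chi$ deviation is in particular an \NS deviation, so the deviating agent strictly increases her utility and in particular ends up in an individually rational coalition for herself (her new coalition is weakly better than her old one, and I would argue inductively that coalitions stay individually rational for their members). I would therefore proceed by induction on the length of the execution $(\pi_i)$. The base case $\pi_0 = \startpart$ is handled by inspecting \eqref{eq:startpart}: the only coalition there containing agents from both sides is $\{\beta,\gamma\}$ (note $C = D \cup N_{\mathcal M} \cup \{\alpha\}$ lies entirely in $N \setminus A$, and $\pi_A$ entirely in $A$). For the inductive step, consider a $\chi$ deviation $\pi_{i-1} \overset{a_i}{\rightarrow} \pi_i$. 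Only the coalition $a_i$ joins can acquire a new cross-group pair, so it suffices to show $a_i$ never moves into a coalition mixing the two groups, with the single allowed exception producing $\{\beta,\gamma\}$. If $a_i \in A$ tried to join a coalition containing some $d \in N\setminus A$, then (for $a_i \neq \gamma$, or $d \neq \beta$) agent $a_i$ would suffer the $-T_A$ penalty, which by the magnitude fact drops her utility below $0 \le \uf_{a_i}(\{a_i\})$, contradicting that the move is even an \NS deviation. The symmetric argument using $T_{N\setminus A}$ rules out $a_i \in N\setminus A$ joining a coalition containing an agent of $A$. The only residual case is the pair $(\beta,\gamma)$ itself, which is explicitly permitted and yields exactly the coalition $\{\beta,\gamma\}$.

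The main obstacle I anticipate is being careful about the one sanctioned cross pair: I must verify that $\gamma$ (resp.\ $\beta$) can join $\beta$ (resp.\ $\gamma$) \emph{only} to form the two-agent coalition $\{\beta,\gamma\}$ and never a larger mixed coalition. This follows because if $\gamma$ attempts to join a coalition containing $\beta$ together with some other agent $d' \in N\setminus A$ with $d' \neq \beta$, then $\gamma$ pays $\vf_\gamma(d') = -T_A$ against a gain of at most $\vf_\gamma(\beta) = T_A$, and since $t_A \ge 1$ makes $T_A$ strictly exceed any compensating positive weight from inside $A$, the net utility is again below her singleton value — so the move is not an \NS deviation. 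The symmetric check applies to $\beta$. I would also note that $\beta$'s deviation toward a coalition of the form $C = \Lambda \cup D \cup \{\alpha\}$ keeps $\beta$ entirely within $N \setminus A$ and hence introduces no cross pair, so it is consistent with the lemma. Assembling these cases closes the induction and establishes that $\{\beta,\gamma\}$ is the unique coalition ever mixing the two groups.
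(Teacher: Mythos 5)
Your overall strategy is the same as the paper's: both arguments exploit the sub-game restricting valuations to show that the deviation which \emph{first} creates a mixed coalition (other than $\{\beta,\gamma\}$) cannot even be an \NS deviation, and hence not a $\chi$ deviation; your induction over the execution and the paper's ``consider the first time this happens'' contradiction are the same argument in different clothing. However, your key quantitative step contains a genuine gap. You rule out the cross-joining deviation by arguing that the deviator's new utility drops below her \emph{singleton} utility $0$, justified by the parenthetical claims that an \NS deviator ``ends up in an individually rational coalition for herself'' and that ``coalitions stay individually rational for their members.'' Both claims are false. An \NS deviation only guarantees strict improvement over the deviator's \emph{current} coalition, which may itself be worse than her singleton coalition (improving from utility $-5$ to $-3$ is a valid \NS deviation); and individual rationality of bystanders is not preserved along the dynamics, since for a notion $\chi$ with $\chi \devimplies \NS$ (e.g., $\chi = \NS$ itself) an agent with negative valuations may join a coalition without any consent, pushing its members strictly below their singleton utility, and an agent contributing positive utility may leave. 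Moreover, the subgame on $A$ is an arbitrary game and $\pi_A$ an arbitrary partition, so agents in $A$ can be at negative utility already in $\startpart$. Consequently, ``new utility $< 0$'' does not by itself contradict \NS{}; the comparison must be with the old coalition, which is exactly the comparison the paper's proof makes (``the agent \dots does not increase her utility'').

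The repair is straightforward and uses your own inductive invariant: before the offending deviation, the deviator's coalition lies entirely within her own side (or equals $\{\beta,\gamma\}$), so her old utility is at least $-(t_A - 1)$ (resp.\ $-(t_{N\setminus A}-1)$), while a single cross-group neighbour costs $T_A = n t_A$ (resp.\ $T_{N\setminus A}$), which even after dividing by the coalition size in \FHGs and \MFHGs leaves the new utility at most $-t_A < -(t_A-1)$; hence the move strictly decreases utility relative to the \emph{old} coalition. Note also that your handling of the delicate case where $\gamma$ joins a coalition containing $\beta$ together with some $d' \in N \setminus A$ fails even against your own benchmark: by induction the welcoming coalition contains no $A$-agents to compensate, so $\gamma$'s net valuation is at most $\vf_\gamma(\beta) + \vf_\gamma(d') = T_A - T_A = 0$, i.e., it can be \emph{exactly} the singleton utility (when the welcoming coalition is precisely $\{\beta, d'\}$), not strictly below it as you assert. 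Closing this case again requires comparing with $\gamma$'s old utility (in the nonadapted game $\gamma$'s utility in $\{\beta,\gamma\}$ is at least $\nicefrac{T_A}{2}$), or arguing structurally that a coalition consisting of $\beta$ and exactly one other agent of $N\setminus A$ never arises; the paper's proof is admittedly terse on this point as well, but it at least performs the correct old-versus-new comparison, whereas your singleton benchmark is the wrong comparison in principle.
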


\begin{proof}
    Assume for contradiction that agents $p\in A$ and $d\in N\setminus A$ where $(d,p)\neq (\beta, \gamma)$ end in a joint coalition at some point.
    Recall that $\vf_p(d) = -T_{A}$, and $\vf_d(p) = -T_{N \setminus A}$.
    Hence, when this happens for the first time, the agent of $d$ and $p$ that performs the deviation does not increase her utility.
    Thus the performed deviation is not an \NS deviation.
    This is a contradiction as every $\chi$ deviation is an \NS deviation.
\end{proof}

The previous lemma implies that no agent can ever join the coalition $\{\beta,\gamma\}$.
Hence, this coalition can only change if either $\beta$ or $\gamma$ performs a deviation herself.
The next two lemmas reason about their deviations.
This is where we make use of the potential adaptation of the constructed game. 
It is important to recall that, whenever agent $\beta$ can perform a $\chi$ deviation in the original construction to leave $\{\beta, \gamma\}$ and join some coalition $C = \Lambda \cup D \cup \{\alpha\}$ with $\Lambda \subseteq N_{\mathcal{M}}$ that contains exactly $2h$ set agents, then we use this original construction. Otherwise, if this is not the case, then we adapt the original construction by ``reversing'' the direction of the weighted edge between $\beta$ and $\gamma$.
First, we consider deviations by $\beta$.

\begin{lemma}\label{lem:unified_reduction_beta_can_deviate}
    In every execution of the $\chi$ dynamics of $(G, \startpart)$, agent $\beta$ can deviate to leave a coalition that contains $\gamma$, to join a coalition $C = \Lambda \cup D \cup \{\alpha\}$, where $\Lambda \subseteq N_{\mathcal{M}}$ with $|\Lambda| = 2h$.
\end{lemma}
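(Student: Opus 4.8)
The plan is to display the explicit deviation of $\beta$ and verify that it is a $\chi$ deviation in both variants of the construction (the original one and the adapted one). First I would fix any partition $\pi$ reached in some execution of the $\chi$ dynamics in which $\beta$ shares a coalition with $\gamma$ and in which $\alpha$'s coalition has the form $C = \Lambda \cup D \cup \{\alpha\}$ with $\Lambda \subseteq N_{\mathcal{M}}$ and $|\Lambda| = 2h$. By \Cref{obs:unified_reduction_no_connecting_coalition}, the only coalition that can simultaneously contain $\beta \in N \setminus A$ and $\gamma \in A$ is $\{\beta, \gamma\}$, so necessarily $\pi(\beta) = \{\beta, \gamma\}$. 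The candidate deviation is $\pi \overset{\beta}{\rightarrow} \pi'$ with $\pi'(\beta) = C \cup \{\beta\}$, after which $\gamma$ is left alone in $\{\gamma\}$.

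The heart of the argument is to compute the utility-change tuples governing this deviation and to observe that they depend neither on the specific choice of $\Lambda$ nor on the execution, only on the fixed quantities $\vf_\gamma(\beta)$, $\vf_\beta(\gamma)$, and the fixed size $|C| = 2h + |D| + 1$. Since all cross-valuations inside $C$ vanish by default (each welcoming agent values every set agent, dummy, and $\alpha$ at~$0$), every agent $q \in C$ has $\uf_q(C) = 0$, so $\beta$'s arrival raises $q$'s utility from $0$ to the same positive value dictated by $\vf_q(\beta) = T_A$; hence $\mathit{UC}^{\mathrm{in}}(\beta, \pi, \pi')$ is a multiset of $|C|$ identical positive-change tuples. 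The single out-tuple for $\gamma$ is determined by $\vf_\gamma(\beta)$, and $\beta$'s own incoming utility is a fixed positive value because the large term $\vf_\beta(\alpha) = (2h+1)nT_A$ outweighs the $2h$ negative contributions $\vf_\beta(s_M) = -nT_A$ while the dummies contribute $0$, leaving a fixed positive sum. I would run this computation uniformly for \ASHGs, \FHGs, and \MFHGs, where only the fixed averaging denominators differ. In particular the deviation is always an \NS deviation, as $\beta$ strictly gains.

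To upgrade \NS to $\chi$ I would invoke the case distinction built into the construction. If the original valuations $\vf_\gamma(\beta) = T_A$, $\vf_\beta(\gamma) = 0$ were retained, this was done precisely because $f_\chi$ returned $1$ on exactly these tuples for one fixed admissible $\Lambda$; as $\chi$ is anonymously hedonic and the tuples coincide for every admissible $\Lambda$ and every execution, $f_\chi$ again returns~$1$, so the deviation is a $\chi$ deviation. Otherwise the construction was adapted to $\vf_\gamma(\beta) = 0$, $\vf_\beta(\gamma) = T_A$; then $\gamma$ becomes indifferent to $\beta$'s departure while every welcoming agent strictly gains, so the deviation satisfies the contractual conditions on both the abandoned and the welcoming side and is a \CIS deviation, hence a $\chi$ deviation by $\CIS \devimplies \chi$.

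I expect the main obstacle to be the bookkeeping that makes the relevant tuples genuinely execution-independent: one must confirm that the \emph{absolute} utilities of all welcoming agents, and not merely their utility changes, are the same no matter which $2h$ set agents remain, which rests on all intra-$C$ valuations being~$0$; and in the \FHG and \MFHG cases one must check that the strict improvement of $\beta$ survives the averaging even when $\vf_\beta(\gamma) = T_A$ in the adapted construction, which follows from $|C| < n$.
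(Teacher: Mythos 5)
Your proposal is correct and follows essentially the same route as the paper: reduce to the abandoned coalition $\{\beta,\gamma\}$ via \Cref{obs:unified_reduction_no_connecting_coalition}, then split on whether the construction was adapted, handling the non-adapted case through the construction-time $f_\chi$ check (justified by the $\Lambda$- and execution-independence of the utility-change tuples, which the paper establishes in the construction section and invokes here simply ``by assumption'') and the adapted case by verifying a \CIS deviation ($\gamma$ indifferent on the abandoned side, all welcoming agents strictly gain, and $\beta$ strictly improves in all three game classes). Your extra bookkeeping --- checking that all intra-$C$ valuations are $0$ so the welcoming tuples are genuinely fixed, and that $\beta$'s strict gain survives averaging via $|C \cup \{\beta\}| < n$ --- is exactly the content the paper's argument relies on, just made explicit.
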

\begin{proof}
    By \Cref{obs:unified_reduction_no_connecting_coalition}, we can assume that the abandoned coalition of the deviation of $\beta$ is $\{\beta, \gamma\}$. 
    If we did not adapt $G$, then $\beta$ can perform the deviation by assumption. 

    Otherwise, we show that such a deviation of $\beta$ is a \CIS deviation and hence a $\chi$ deviation. To see this, observe that:
    \begin{enumerate}[1.]
        \item The deviation is an \NS deviation because it holds that $\uf_\beta(\{\beta, \gamma\}) \leq T_A$ (achieved in case of an \ASHG or \MFHG) while $\uf_\gamma(C \cup \{\beta\}) \geq \frac{(2h+1)\cdot nT_a - 2hnT_A}{n-1} > T_A$ (observe that $|C \cup \{\beta\}| < n$).
        \item Agent $\gamma$ has a valuation of $0$ for $\beta$, and is, therefore, indifferent between coalitions $\{\gamma\}$ and $\{\beta, \gamma\}$.
        \item For each agent $p \in C$, it holds that $\uf_p(C) = 0$ while $\uf_p(C \cup \{\beta\}) \geq \frac{T_A}{n} > 0$. \qedhere
    \end{enumerate}
\end{proof}

Next, we consider deviations of $\gamma$ abandoning $\beta$.

\begin{lemma}\label{lem:unified_reduction_gamma_doesnt_deviate}
    In every execution of the $\chi$ dynamics of $(G, \startpart)$, agent $\gamma$ cannot deviate to leave a coalition that contains $\beta$.
\end{lemma}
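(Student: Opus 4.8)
The plan is to invoke \Cref{obs:unified_reduction_no_connecting_coalition}, which guarantees that whenever $\gamma$ shares a coalition with $\beta$, that coalition is exactly $\{\beta,\gamma\}$, and that any deviation taking $\gamma$ out of it can only move her into a coalition $C'\cup\{\gamma\}$ with $C'\subseteq A\setminus\{\gamma\}$ (or into her singleton). I would then split the proof according to whether the construction was adapted, i.e., whether $(\vf_\gamma(\beta),\vf_\beta(\gamma))=(T_A,0)$ (original) or $(\vf_\gamma(\beta),\vf_\beta(\gamma))=(0,T_A)$ (adapted).

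In the unadapted case the statement reduces to the absence of a Nash deviation, and since $\chi\devimplies\NS$ this suffices. Here $\gamma$'s utility for $\{\beta,\gamma\}$ equals $\vf_\gamma(\beta)=T_A$ in an \ASHG or \MFHG and $T_A/2$ in an \FHG. Any coalition $C'\cup\{\gamma\}$ with $C'\subseteq A$, as well as her singleton, yields $\gamma$ utility at most $\sum_{b\in A}\abs{\vf_\gamma(b)}<t_A=T_A/n$ (and strictly less after the coalition-size normalization in the two fractional variants). As $T_A/2>T_A/n$ for $n\ge 2$, no admissible target strictly improves $\gamma$, so no Nash deviation of $\gamma$ out of $\{\beta,\gamma\}$ exists.

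The crux is the adapted case, where $\gamma$'s utility for $\{\beta,\gamma\}$ is $0$, so that Nash deviations into coalitions within $A$ may well exist. Here I would exploit the very reason that triggered the adaptation: in the original game, the deviation of $\beta$ from $\{\beta,\gamma\}$ to $C=\Lambda\cup D\cup\{\alpha\}$ was \emph{not} a $\chi$ deviation, i.e.\ $f_\chi(X_\beta,Y_\beta,z_\beta)=0$, where $X_\beta,Y_\beta,z_\beta$ denote its utility-change-out multiset, utility-change-in multiset, and deviator tuple. Fix any candidate deviation of $\gamma$ in the adapted game into $C'\cup\{\gamma\}$, with corresponding parameters $X_\gamma,Y_\gamma,z_\gamma$. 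The decisive point is a symmetry of the reversed edge: the only agent abandoned by $\gamma$ is $\beta$, whose utility falls from $\vf_\beta(\gamma)=T_A$ (after the game-class normalization) to $0$---exactly the drop suffered by $\gamma$ when abandoned by $\beta$ in the original game---so that $X_\gamma=X_\beta$. For the two remaining parameters I would verify $Y_\gamma\trianglelefteq Y_\beta$ and $\{z_\gamma\}\trianglelefteq\{z_\beta\}$: all utility changes occurring among agents of $A$ (both those welcoming $\gamma$ and $\gamma$ herself) have magnitude of order $t_A$, whereas in $\beta$'s original deviation each welcoming agent and $\beta$ herself enjoy a change of order $T_A=n\,t_A$; moreover $\abs{Y_\gamma}\le\abs{A}-1<\abs{C}=\abs{Y_\beta}$, where the last inequality uses that the number $\abs{D}=\max(0,\abs{A}-2h)$ of dummy agents is chosen precisely to make $\abs{C}=\max(2h,\abs{A})+1$. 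Monotonicity of $\chi$ then yields $f_\chi(X_\gamma,Y_\gamma,z_\gamma)\le f_\chi(X_\beta,Y_\beta,z_\beta)=0$, so $\gamma$'s deviation is not a $\chi$ deviation. Since the target $C'$ was arbitrary, $\gamma$ can never leave $\{\beta,\gamma\}$.

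I expect the main obstacle to be establishing $Y_\gamma\trianglelefteq Y_\beta$ and $\{z_\gamma\}\trianglelefteq\{z_\beta\}$ uniformly across \ASHGs, \FHGs, and \MFHGs, since the fractional variants introduce coalition-size denominators into the utility differences. Because the domination order compares only differences of utilities, what has to be checked is that the largest change among $A$-agents never exceeds the smallest change among the agents welcoming $\beta$ in the original deviation; this separation is guaranteed by the deliberate choice $T_A=n\,t_A$ together with the even larger valuations surrounding $\alpha$ and $\beta$, which dominate every intra-$A$ effect by a factor polynomial in $n$. The only care needed is to confirm that the coalition sizes entering the fractional denominators are bounded by $n$, which is immediate, so the order-of-magnitude gap is preserved.
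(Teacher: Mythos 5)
Your proposal is correct and follows essentially the same route as the paper's proof: the same case split between the unadapted construction (where $\uf_\gamma(\{\beta,\gamma\}) \geq T_A/2 > t_A$ rules out any \NS deviation into $A$) and the adapted one, where you compare $\gamma$'s candidate deviation against $\beta$'s forbidden deviation in the unadapted game via the domination relation $\trianglelefteq$ and monotonicity of $f_\chi$, using exactly the paper's observations that the out-multisets coincide ($\{(T_A/\delta,0)\}$, $\delta\in\{1,2\}$), that $|C_A|\le|A|-1<|\Lambda\cup D\cup\{\alpha\}|$ by the choice of $|D|$, and that intra-$A$ utility changes (at most $t_A$) never exceed the gains of $\beta$'s welcoming agents (at least $T_A/n = t_A$). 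The only cosmetic slips --- writing $n\ge 2$ where $n>2$ is needed for $T_A/2 > t_A$, and loosely calling the welcoming agents' fractional gains ``of order $T_A$'' when they are only $\ge t_A$ --- are harmless, since your final paragraph states the precise comparison the paper actually uses.
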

\begin{proof}
    Again, because of \Cref{obs:unified_reduction_no_connecting_coalition}, for a deviation of $\gamma$ away from $\beta$, we can assume that the abandoned coalition of the deviation of $\gamma$ is $\{\beta, \gamma\}$, and the welcoming coalition is some $C_A \subseteq A$.
    
    In case we made no adaptation to $G$, it holds that $\uf_\gamma(\{\beta, \gamma\}) \geq \frac{T_A}{2} > t_A$, and, by definition of $t_A$, agent $\gamma$ can never improve her utility by deviating to coalition $C_A$.

    Otherwise, we show the statement by virtue of $\chi$ being a standard stability notion and $\beta$ not being able to deviate from $\{\beta, \gamma\}$ in the nonadapted construction to join a coalition $C = \{\Lambda \cup D \cup \{\alpha\}\}$ with $\Lambda \subseteq N_\mathcal{M}$ and $|\Lambda| = 2h$.

    Let $\bar G$ and $\bar G'$ refer to the nonadapted and adapted game, respectively. 
    Further, let $\bar \pi$ be a partition of $N$ with $\{\{\beta, \gamma\}, C, C_A\} \subseteq \bar \pi$, and let $\bar \pi^\beta_\rightarrow$ and $\bar \pi^\gamma_\rightarrow$ be the partitions that resulted from $\bar \pi$ after $\beta$ and $\gamma$ performed their deviations in $\bar G$ and $\bar G'$ to join $C$ and $C_A$, respectively. Let us compare the relevant utility-change multisets:

    \begin{enumerate}[1.]
        \item The utility-change multisets for the abandoned coalition $\{\beta, \gamma\}$ of $\beta$'s deviation in $\bar G$, namely, $X = \mathit{UC}^\mathrm{out}_{\bar G}(\beta, \bar \pi, \bar \pi^\beta_\rightarrow)$, and $\gamma$'s deviation in $\bar G'$, namely $X' = \mathit{UC}^\mathrm{out}_{\bar G'}(\gamma, \bar \pi, \bar \pi^\gamma_\rightarrow)$, are identical. Specifically, they are both $\{(\nicefrac{T_A}{\delta}, 0)\}$ with $\delta \in \{1, 2\}$ (dependent on the game type). 
        Hence $X$ dominates $X'$.
        
        \item Consider the utility-change multisets with respect to the two welcoming coalitions, namely, $Y = \mathit{UC}^\mathrm{in}_{\bar G}(\beta, \bar \pi, \bar \pi^\beta_\rightarrow)$, and $Y' = \mathit{UC}^\mathrm{in}_{\bar G'}(\gamma, \bar \pi, \bar \pi^\gamma_\rightarrow)$. Now, since $D \cup \Lambda \subseteq \Tilde{\pi}_{\overset{y}{\rightarrow}}(y)$ with $|D \cup \Lambda| \geq |A| - 2h + 2h \geq |A|$, and $C_A \subseteq A$, it must hold that $|Y| \geq |Y'|$. Further, for every agent $p \in C$, it holds that $\uf_p(C) = 0$ in all game classes, while $\uf_p(C \cup \{\beta\}) \geq \nicefrac{T_A}{n} \geq t_A$. On the other hand, any agent in $C_A$ can have a valuation of at most $t_A$ for agent $\gamma$, and hence only experience an increase of $t_A$ by $\gamma$ joining $C_a$. Thus, it must hold that $Y$ dominates $Y'$. 
        
        \item Let $z = \mathit{uc}_{\bar G}(\beta, \bar \pi, \bar \pi^\beta_\rightarrow) = (0, \uf_\beta(C \cup \{\beta\})$ with $\uf_\beta(C \cup \{\beta\}) \geq \nicefrac{nT_A}{n} = T_A$, and let $z' = \mathit{uc}_{\bar G'}(\gamma, \bar \pi, \bar \pi^\gamma_\rightarrow) = (0, \uf_\gamma(C_A \cup \{\gamma\})$ with $\uf_\gamma(C_A \cup \{\gamma\}) \leq t_A$. Then $\{z\}$ dominates $\{z'\}$.
    \end{enumerate}

    But then, by the definition of a standard stability notion, $f_\chi(X, Y, z) \geq f_\chi(X', Y', z')$ must hold for the above-defined sets. Now, since $\beta$ cannot deviate from $\{\beta, \gamma\}$ to join $C$ in $\bar G$, i.e., $f_\chi(X, Y, Z) = 0$, agent $\gamma$ cannot deviate from $\{\beta, \gamma\}$ in $\bar G'$, from which we can immediately follow the statement.
\end{proof}

In the following lemma, we examine the $\chi$ dynamics of $(G, \startpart)$ in greater detail, where we fully characterize all coalitions that can result from the dynamics.

\crefname{enumi}{Type}{Types}
\Crefname{enumi}{Type}{Type}
\begin{lemma}\label{lem:unified_reduction_coalitions_from_dynamics}
    Let $\pi'$ be a partition of $N$ that resulted from $(G, \startpart)$ through an execution of the $\chi$ dynamics. 
    Then each coalition in $\pi'$ is of one of the following types:
    \begin{enumerate}[I.]
        \item $\{r_e\}$,\label{enum:unified_reduction_1}
        \item $\{x_e, r_e\}$,\label{enum:unified_reduction_2}
        \item $\{g\} \cup N_\mathcal{U}'$ for some $g \in \Gamma$, and $N_\mathcal{U}' \subseteq N_\mathcal{U}$,\footnote{Here, $N_\mathcal{U}' = \emptyset$ is explicitly allowed.}\label{enum:unified_reduction_3}
        \item $\{g, x_e, x_f, x_g, s_M\}$ for some $g \in \Gamma$, $s_M \in N_\mathcal{M}$ with $M = \{e, f, g\}$, \label{enum:unified_reduction_4}
        \item $\{\alpha\} \cup \Lambda \cup D$, where $\Lambda \subseteq N_\mathcal{M}$ with $|\Lambda| \geq 2h$, \label{enum:unified_reduction_5}
        \item $\{\alpha, \beta\} \cup \Lambda \cup D$, where $\Lambda \subseteq N_\mathcal{M}$ with $|\Lambda| = 2h$,\label{enum:unified_reduction_6}
        \item $\{\beta, \gamma\}$, and \label{enum:unified_reduction_7}
        \item $C_A \subseteq A$. \label{typeA} \label{enum:unified_reduction_8}
    \end{enumerate}
\end{lemma}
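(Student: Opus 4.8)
The plan is to prove the characterization by induction on the number of $\chi$ deviations in the execution leading to $\pi'$, maintaining as an invariant that every coalition is of one of the eight listed types. For the base case I would observe that the starting partition $\startpart$ already decomposes into these types: each $\{g\}$ is a Type~III coalition with empty element set, each $\{x_e,r_e\}$ is Type~II, the coalition $D\cup N_{\mathcal M}\cup\{\alpha\}$ is Type~V with $\Lambda=N_{\mathcal M}$ (and $|N_{\mathcal M}|=3h\ge 2h$), $\{\beta,\gamma\}$ is Type~VII, and every coalition of $\pi_A$ is a subset of $A$, i.e., Type~VIII. For the inductive step I would take a $\chi$ deviation $\pi'\overset{a}{\rightarrow}\pi''$, which is in particular an \NS deviation since $\chi\devimplies\NS$, and argue that both the welcoming coalition $\pi''(a)$ and the shrunken abandoned coalition $\pi'(a)\setminus\{a\}$ remain among the listed types. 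The argument proceeds by a case distinction on the deviator $a$, where throughout I exploit that the valuations are chosen so large that the sign of every relevant utility difference is governed by its dominant term uniformly across \ASHGs, \FHGs, and \MFHGs.

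Several agents are easy to dispatch. Grouping agents and restricting agents have the all-zero valuation function, so they admit no \NS deviation and never move. Agents $\alpha$ and the dummies value only $\beta$ positively; since the unique coalition containing $\beta$ is either $\{\beta,\gamma\}$ (which they cannot join profitably, as it contains $\gamma\in A$ and the sub-game restricting valuations make this strongly negative) or the unique $\alpha$-coalition itself, they too never perform an \NS deviation. For $\beta$ and $\gamma$ I would invoke \Cref{obs:unified_reduction_no_connecting_coalition,lem:unified_reduction_beta_can_deviate,lem:unified_reduction_gamma_doesnt_deviate}: $\gamma$ cannot leave a coalition containing $\beta$, and while inside $A$ its deviations stay within $A$ (Type~VIII); $\beta$ can only move out of $\{\beta,\gamma\}$ (leaving behind the Type~VIII coalition $\{\gamma\}$) into the unique $\alpha$-coalition, and the computation of \Cref{lem:unified_reduction_beta_can_deviate} shows this is profitable exactly when that coalition is Type~V with $|\Lambda|=2h$, turning it into Type~VI, while $\beta$ never profitably leaves a Type~VI coalition.

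The substance lies in the element agents $x_e$ and set agents $s_M$. For an element agent I would enumerate its reachable targets: joining the $\alpha$-coalition is blocked by $\vf_{x_e}(\alpha)=-n^2T_A$, joining $A$ or $\{\beta,\gamma\}$ by the sub-game restricting valuations, and, crucially, merging into another Type~II coalition $\{x_{e'},r_{e'}\}$ yields utility $\vf_{x_e}(x_{e'})+\vf_{x_e}(r_{e'})=1-1=0$, which is not a strict improvement. This is precisely the role of the $-1$ valuations for foreign restricting agents, and it is what forbids element agents from ever forming a coalition without a grouping agent. Hence an element agent can only aggregate under a grouping agent, so its welcoming coalition is Type~III, while a Type~IV coalition yields its element and set agents utility on the order of $nT_A$ and is therefore neither abandoned nor joined from outside. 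Its abandoned coalition is $\{r_e\}$ (Type~I) when it leaves Type~II, or again Type~III when it leaves Type~III, the grouping agent remaining in place. For a set agent, the only profitable move is to leave the $\alpha$-coalition and join a Type~III coalition whose element set is \emph{exactly} $\{x_e:e\in M\}$: the grouping penalty $-2nT_A$ is outweighed only if $s_M$ collects all three bonuses $+nT_A$ and incurs no foreign-element penalty, which forces the exact match and produces precisely a Type~IV coalition.

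The main obstacle, and the one place where the proof is genuinely delicate, is preserving the cardinality constraints on $\Lambda$ in Types~V and~VI. I would handle this through a counting observation valid for any partition meeting the invariant: every set agent lies either in the $\alpha$-coalition (contributing to $\Lambda$) or as the unique set agent of a Type~IV coalition, and since each Type~IV coalition contains a distinct grouping agent while $|\Gamma|=h$, there are at most $h$ Type~IV coalitions; with $|N_{\mathcal M}|=3h$ set agents this forces $|\Lambda|\ge 2h$ automatically. When a set agent escapes from a Type~V coalition it joins a grouping agent that was previously sitting in a Type~III coalition, so the number of Type~IV coalitions stays at most $h$ and the shrunken $\alpha$-coalition is still a valid Type~V coalition. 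The same count pins down Type~VI: once $\beta$ has joined, $|\Lambda|=2h$ forces all $h$ grouping agents into Type~IV coalitions, so no Type~III coalition remains for a set agent to escape into; thus set agents cannot leave a Type~VI coalition (despite preferring the $nT_A$ of Type~IV to the $T_A$ they receive there), and nothing can join it, so $|\Lambda|=2h$ persists. Collecting all cases shows that $\pi''$ again consists only of coalitions of the listed types, completing the induction.
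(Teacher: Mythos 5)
Your proof is correct and takes essentially the same route as the paper's: induction over the deviation sequence with the same case analysis on the deviator, the same reliance on \Cref{obs:unified_reduction_no_connecting_coalition,lem:unified_reduction_beta_can_deviate,lem:unified_reduction_gamma_doesnt_deviate}, and the same key counting argument (each Type~IV coalition consumes a distinct grouping agent, so at most $h$ exist and $|\Lambda| \geq 2h$ follows from $|N_{\mathcal M}| = 3h$) to control Types~V and~VI and to show set agents cannot escape once no Type~III coalition remains. The only differences are cosmetic: you package the counting as a standalone invariant rather than the paper's contrapositive formulation inside the set-agent case, and you treat the dummy agents explicitly where the paper leaves them implicit.
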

\begin{proof}
    We show the statement by induction over the number of deviations. 
    First, the statement is true for the initial partition $\startpart$:
    Agents in $\Gamma$ are in coalitions of \cref{enum:unified_reduction_3}
    and the other agents are in coalitions of \Crefor{enum:unified_reduction_2,,enum:unified_reduction_5,,enum:unified_reduction_7,,enum:unified_reduction_8}.
    
    Next, let $\pi'$ be a partition of $N$ that resulted from $(G, \startpart)$ through an execution of the $\chi$ dynamics, and assume that each coalition in $\pi'$ is of one of the above types.
    Therefore, let $\pi''$ be a partition and let $p \in N$ be a deviator such that $\pi' \overset{p}{\rightarrow}_\chi \pi''$, i.e., we increase the execution of the dynamics by another deviation.
    We will show that both the abandoned coalition $\pi'(p) \setminus \{p\}$ and the welcoming coalition $\pi''(p)$ are of one of the above types.

    Note that $p \not \in \Gamma \cup R$, as these agents are indifferent over all coalitions they can possibly be in, and, thus, none of their deviations would be an \NS and, therefore, $\chi$ deviation. 
    Also note that there can only ever be one coalition of \cref{enum:unified_reduction_5,,enum:unified_reduction_6} and one of \cref{enum:unified_reduction_6,,enum:unified_reduction_7}.

    Now, assume that $p = x_e$ for some $e \in \mathcal{U}$. 
    Then $p$ cannot be part of a coalition of \cref{enum:unified_reduction_4} in $\pi'$, as the presence of the relevant set agent ensures that $p$ cannot increase her utility by deviating. 
    Hence, $p$ is part of a coalition of \Crefor{enum:unified_reduction_2,,enum:unified_reduction_3} in $\pi'$, and has a utility of at least $0$ in all game classes. 
    Thus, the deviation must lead to a strictly positive utility for~$p$.
    In particular, this implies that the welcoming coalition cannot be of \Crefor{enum:unified_reduction_1,,enum:unified_reduction_2,,enum:unified_reduction_7,,enum:unified_reduction_8}, as $p$ has no positive valuation for any agent in these coalitions and, therefore, no overall positive utility. 
    It can also not be of \Crefor{enum:unified_reduction_5,,enum:unified_reduction_6}, as the presence of $\alpha$ ensures that $p$ has strictly negative utility for such a coalition. 
    Next, it cannot be of \cref{enum:unified_reduction_4}, as such a coalition contains $s_M$ for some $M \in \mathcal{M}$ together with all corresponding element agents.
    Hence, for the $e\in \mathcal{U}$ with $p = x_e$, it holds that $e \not\in M$, and $p$ would obtain a negative utility from joining the coalition of $s_M$.
    Finally, if the welcoming coalition is of \cref{enum:unified_reduction_3}, then it is so after the deviation. 
    This concludes the consideration of the case $p = x_e$.

    In case $p = s_M$ for some $M \in \mathcal{M}$, then we claim that $\pi''(p)$ must be of \cref{enum:unified_reduction_4}.
    First, assume that $\pi'(p)$ is of \cref{enum:unified_reduction_4}. But then, $p$ has a utility of at least $\nicefrac{nT_A}{n} \geq T_A$ in partition $\pi'$, and it is clear that $p$ cannot gain by deviating, as the only agent outside of $\pi'(p)$ that $p$ has a positive valuation for is $\beta$, with $\vf_{p}(\beta) = T_A$. 
    Hence, $\pi'(p)$ must be of \Crefor{enum:unified_reduction_5,,enum:unified_reduction_6}, and, since there is only one coalition of \Crefor{enum:unified_reduction_5,,enum:unified_reduction_6}, she cannot join a coalition of \Crefor{enum:unified_reduction_5,,enum:unified_reduction_6}. 
    Now, $p$ cannot deviate to a coalition of \Crefor{enum:unified_reduction_1,,enum:unified_reduction_2}, as 
    the presence
    of a restricting agent would lead to a negative utility.
    Also, she cannot deviate to a coalition of \cref{enum:unified_reduction_4}, as one of the present element agents would not correspond to the set $M$, and $p$ can have a utility of at most $0$ for any such coalition.
    Moreover, she cannot deviate to a coalition of \Crefor{enum:unified_reduction_7,,enum:unified_reduction_8} due to \Cref{obs:unified_reduction_no_connecting_coalition}.
    As we have excluded all other cases, the welcoming partition $\pi''(p) \setminus \{p\}$ must be of \cref{enum:unified_reduction_3}.
    Further, it has to exactly contain those element agents corresponding to the set $M$ with $p = s_M$.
    Otherwise, the negative valuation of $s_M$ for agents in $\Gamma$ would lead to $p$ not performing an \NS deviation.
    However, if $\pi'(p)$ contains at most $2h$ set agents, then each of the remaining set agents not contained in  $\pi'(p)$ form coalitions of \cref{enum:unified_reduction_4} in $\pi'$. 
    Hence, $\pi'$ would not contain a coalition of \cref{enum:unified_reduction_3}.
    Hence, $\pi'(p)$ must contain at least $2h+1$ set agents and, therefore, be of \cref{enum:unified_reduction_5}. 
    But then, the abandoned coalition is still of \cref{enum:unified_reduction_5} after the deviation, and the welcoming partition $\pi''(p)$ is of \cref{enum:unified_reduction_4}.
    This concludes the consideration of the case $p = s_M$.

    Next, note that $\alpha$ can only ever have an incentive to deviate to join $\beta$. 
    However, if $\beta$ is in a coalition of \cref{enum:unified_reduction_7}, then $\alpha$ cannot deviate because of \Cref{obs:unified_reduction_no_connecting_coalition}, and otherwise, $\beta$ is already in the same coalition as $\alpha$. 
    It follows that $\alpha$ never deviates and, therefore, $p \neq \alpha$.

    In case that $p = \beta$, the abandoned coalition cannot be of \cref{enum:unified_reduction_6}, as then, $p$ would already have a utility of $\nicefrac{nT_A}{n} \geq T_A$ in $\pi'$, while $\gamma$ is the only outside agent that $p$ \emph{might} have a positive valuation for, with $\vf_p(\gamma) \leq T_A$. 
    Thus, the abandoned coalition must be of \cref{enum:unified_reduction_7} and, after the deviation, $\pi''(p) \setminus \{p\} = \{\beta, \gamma\} \setminus \{\beta\} = \{\gamma\} \subseteq A$, which is of \cref{enum:unified_reduction_8}.
    Moreover, deviating from $\{\beta,\gamma\}$, $p = \beta$ can only have incentive to join $\alpha$, i.e., the welcoming coalition must be of \cref{enum:unified_reduction_5} containing $\Lambda \subseteq N_\mathcal{M}$. 
    However, if $|\Lambda| > 2h$ then $\sum_{d \in \pi''(p)} \vf_p(d) \leq (2h+1) \cdot nT_A - (2h+1) \cdot nT_A \leq 0$, and $p$ does not have an incentive to deviate. 
    In addition, all set agents outside of this coalition must be in a coalition of \cref{enum:unified_reduction_4}, and there are at most $h$ such coalitions.
    Thus, we have $|\Lambda| \ge 2h$ and, therefore,
    $|\Lambda| = 2h$ must hold.
    We conclude that the welcoming coalition is of \cref{enum:unified_reduction_6} after the deviation.

    Finally, let us consider the case where $p \in A$. 
    If $p = \gamma$, 
    then, by \Cref{lem:unified_reduction_gamma_doesnt_deviate}, $\pi'(p)\neq \{\beta,\gamma\}$. 
    Hence, whether $p = \gamma$ or not, $p$ must abandon a coalition of \cref{enum:unified_reduction_8}, which remains such a coalition (if nonempty after the deviation).
    Also, $\beta$ then is in a coalition of \cref{enum:unified_reduction_6}.
    By \Cref{obs:unified_reduction_no_connecting_coalition}, $p$ cannot join a coalition containing an agent outside of $A$.
    Therefore, the welcoming coalition must be of \cref{enum:unified_reduction_8} as well.
\end{proof}

Next, we show the defining behavior of the constructed instance, namely that whether agent $\beta$ can deviate from $\{\beta, \gamma\}$ is directly corresponding to whether the source instance of \RXC is a Yes-instance.

\begin{lemma}\label{lem:unified_reduction_correspondance_to_RX3C}
    There exists an execution of the $\chi$ dynamics of $(G, \startpart)$ where agent $\beta$ can deviate from $\{\beta, \gamma\}$ if and only if the source instance $\mathcal{I} = (\mathcal{U}, \mathcal{M})$ of \RXC is a Yes-instance.
    Moreover, if $\mathcal I$ is a No-instance, then $\{\beta, \gamma\}$ is part of every occurring partition in every execution of the $\chi$ dynamics of $(G, \startpart)$.
\end{lemma}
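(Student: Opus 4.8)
The plan is to reduce the event ``$\beta$ leaves $\{\beta,\gamma\}$'' to a combinatorial condition on the coalition of $\alpha$, and then to identify that condition with the existence of an exact cover. The key input is \Cref{lem:unified_reduction_coalitions_from_dynamics}: its analysis of the case $p=\beta$ shows that the only coalition $\beta$ can ever move to when abandoning $\{\beta,\gamma\}$ is one of the form $\{\alpha\}\cup\Lambda\cup D$ with $\Lambda\subseteq N_{\mathcal M}$ and $|\Lambda|=2h$. Thus some execution lets $\beta$ deviate if and only if some reachable partition has the coalition of $\alpha$ containing exactly $2h$ set agents, i.e.\ exactly $h$ set agents have left it. The same lemma guarantees that every set agent sits either in the unique coalition of $\alpha$ or in a coalition $\{g,x_e,x_f,x_{e'},s_M\}$ with $M=\{e,f,e'\}$ (a Type~IV coalition), so the task reduces to deciding whether $h$ such coalitions can coexist.

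For the forward direction, suppose $\beta$ deviates in some execution. In the partition just before, $\alpha$'s coalition holds exactly $2h$ set agents, so the other $h$ set agents occupy $h$ distinct Type~IV coalitions (each Type~IV coalition contains exactly one set agent). Each such coalition consumes one of the only $h$ grouping agents and the three element agents of its set $M$; since the coalitions of a partition are pairwise disjoint and distinct set agents correspond to distinct sets, the induced sets $M_1,\dots,M_h$ are pairwise disjoint. As $h$ pairwise disjoint $3$-subsets of the $3h$-element universe $\mathcal U$ must exhaust $\mathcal U$, they form an exact cover and $\mathcal I$ is a Yes-instance.

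For the backward direction, I would turn an exact cover $\{M_1,\dots,M_h\}$ into an explicit execution. For each $i$ and each $e\in M_i$, let $x_e$ leave $\{x_e,r_e\}$ and join the coalition grown around $g_i$: the abandoned $r_e$ is indifferent and the incumbent element agents value $x_e$ positively, so the move is a \CIS deviation (in \FHGs and \MFHGs the averaged utilities of the incumbents weakly increase, which still suffices). Once $\{g_i\}\cup\{x_e:e\in M_i\}$ is assembled, let $s_{M_i}$ leave $\alpha$'s coalition to join it: $s_{M_i}$ is indifferent towards every agent it abandons, the three element agents value $s_{M_i}$ positively, and $s_{M_i}$ strictly gains, so this too is a \CIS deviation. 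After all $h$ set agents have moved, $\alpha$'s coalition retains $3h-h=2h$ set agents, and \Cref{lem:unified_reduction_beta_can_deviate} certifies that $\beta$ can then perform its deviation. Because $\CIS\devimplies\chi$, each of these moves is a $\chi$ deviation, so the sequence is a legitimate execution of the $\chi$ dynamics.

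Finally, the ``moreover'' claim follows by contraposition of the forward direction together with the blocking lemmas: if $\mathcal I$ is a No-instance then $\beta$ can never leave $\{\beta,\gamma\}$ in any execution, by \Cref{lem:unified_reduction_gamma_doesnt_deviate} neither can $\gamma$, and by \Cref{obs:unified_reduction_no_connecting_coalition} no third agent can join $\{\beta,\gamma\}$ (this would create a second coalition mixing an agent of $A$ with one of $N\setminus A$). An induction on the number of performed deviations then shows $\{\beta,\gamma\}$ persists in every reachable partition. I expect the main obstacle to lie in the backward direction's bookkeeping: one must verify that the element- and set-agent moves remain \CIS deviations simultaneously for \ASHGs, \FHGs, and \MFHGs, where utilities are averaged and the consent of incumbents must be rechecked, and one must order the moves so that every \CIS condition holds exactly when invoked; by contrast, the counting argument at the heart of the equivalence is clean once \Cref{lem:unified_reduction_coalitions_from_dynamics} is available.
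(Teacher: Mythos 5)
Your proposal is correct and follows essentially the same route as the paper's proof: the forward direction extracts the exact cover from the coalition-type classification of \Cref{lem:unified_reduction_coalitions_from_dynamics} (you merely spell out the disjointness counting the paper calls ``easy to see''), the backward direction constructs the same explicit \CIS-deviation sequence (element agents join grouping agents, then set agents follow, then $\beta$ deviates via \Cref{lem:unified_reduction_beta_can_deviate}), and the ``moreover'' part combines \Cref{obs:unified_reduction_no_connecting_coalition} and \Cref{lem:unified_reduction_gamma_doesnt_deviate} exactly as the paper does. One small phrasing slip worth fixing: when $s_{M_i}$ leaves $\alpha$'s coalition, the \CIS consent condition concerns the abandoned agents' (zero) valuations for $s_{M_i}$, not $s_{M_i}$'s indifference towards them, though the needed fact holds just the same.
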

\begin{proof}
    Note that, since $\{\beta, \gamma\} \in \startpart$ and because of \Cref{obs:unified_reduction_no_connecting_coalition}, no other agent can ever deviate to join $\{\beta, \gamma\}$.
    Additionally, because of \Cref{lem:unified_reduction_gamma_doesnt_deviate}, agent $\gamma$ can never abandon $\{\beta, \gamma\}$. 
    Thus, it suffices to show that $\beta$ can deviate from $\{\beta, \gamma\}$ in some execution of the $\chi$ dynamics if and only if the source instance $\mathcal{I} = (\mathcal{U}, \mathcal{M})$ of \RXC is a Yes-instance.

    ($\Rightarrow$) Assume that $\beta$ can deviate from $\{\beta, \gamma\}$. 
    By \Cref{lem:unified_reduction_coalitions_from_dynamics}, this deviation must result in a coalition of \cref{enum:unified_reduction_6}, i.e., a coalition $C = \{\alpha, \beta\} \cup \Lambda \cup D$ with $\Lambda \subseteq N_\mathcal{M}$ and $|\Lambda| = 2h$. But then, there must be exactly $h$ set agents who left their initial coalition with $\alpha$ to join a coalition of \cref{enum:unified_reduction_4}. Since such a coalition consists of some grouping agent and all element agents that correspond to the relevant sets, it is easy to see that these set agents correspond to an exact cover of $\mathcal{U}$ with $h$ sets from $\mathcal{M}$.

    ($\Leftarrow$) Assume that $\mathcal{I}$ is a Yes-instance, i.e., there is a set $\mathcal{M}' = \{M_1, \ldots, M_h\}$ such that $\bigcup_{M \in \mathcal{M}'} M = \mathcal{U}$. Note that, since $\CIS \devimplies \chi$, every $\CIS$ deviation also is a $\chi$ deviation. 
    We thus provide a sequence of \CIS deviations, followed by a single $\chi$ deviation of $\beta$ to join the coalition of $\alpha$. 
    For all \CIS deviations, we will argue that (i) it is an \NS deviation, (ii) the favor-in set of the abandoned coalition is empty, 
    and (iii) the favor-out set of the joined coalition is empty. 
    The sequence of deviations can now be given as follows (unless otherwise specified, within each step, deviations are performed in an arbitrary order): 
    \begin{enumerate}[1.]
        \item For each $i\in [h]$ 
        and $e \in M_i$, agent $x_e$ deviates to join the coalition of grouping agent $g_i$. 
        Since $M'$ is an exact cover of $\mathcal{U}$ of size~$h$, no agent in $N_\mathcal{U}$ is asked to join two different agents in $\Gamma$. 
        These are \CIS deviations since:
        \begin{enumerate}[(i)]
            \item agent $x_e$ increases her utility from $0$ to at least $\nicefrac{1}{2}$,
            \item agent $x_e$ leaves the coalition $\{x_e, r_e\}$, and $\vf_{r_e}(x_e) = 0$, and
            \item the coalition of $g_i$ is a subset of $\{g_i\} \cup \{x_{e'}\}_{e' \in M_i}$, where $g_i$ is indifferent of $x_e$'s deviation, and the agents $\{x_{e'}\}_{e' \in M_i}$, in case of an \ASHG or \FHG, are strictly in favor of the deviation, or indifferent in case of an \MFHG.
        \end{enumerate}
        
        \item For each $i\in [h]$, 
        agent $s_{M_i}$ joins the coalition of $g_i$. 
        These are \CIS deviations since:
        \begin{enumerate}[(i)]
            \item agent $s_{M_i}$ increases her utility from $0$ to at least $\nicefrac{(-2nT_A + 3nT_A)}{5} \geq T_A$, 
            \item agent $s_{M_i}$ 
            leaves a coalition that is a subset of $\{\alpha\} \cup N_\mathcal{M} \cup D$, where all these agents have a valuation of $0$ for $s_M$ and utility of $0$ for the whole coalition and are hence indifferent to the deviation in all game classes, and
            \item the coalition of $g_i$ is $\{g_i, x_e, x_f, x_g\}$ with $M_i = \{e, f, g\}$, where $g_i$ is indifferent to the deviation, and the set agents strictly increase their utility from at most $3$ to at least $\nicefrac{nT_A}{5} \geq T_A$.
        \end{enumerate}

        \item Agent $\beta$ performs a $\chi$ deviation to join the coalition of $\alpha$, which, at this point in the dynamics, is of the form $\{\alpha\} \cup \Lambda \cup D$, where $\Lambda \subseteq N_\mathcal{M}$ with $|\Lambda| = 2h$. This is a $\chi$ deviation, as shown in \Cref{lem:unified_reduction_beta_can_deviate}. \qedhere
    \end{enumerate}
\end{proof}

Finally, we show that the $\chi$ dynamics of the subgame restricted to the agents in $N \setminus A$ must converge when starting from the partition $\startpart$.

\begin{lemma}\label{lem:unified_reduction_finite_deviations}
    In every execution of the $\chi$ dynamics of $(G, \startpart)$, every agent in $N \setminus A$ can only perform a finite number of deviations.
\end{lemma}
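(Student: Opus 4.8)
The plan is to bound the number of deviations separately for each kind of agent in $N \setminus A = N_{\mathcal{M}} \cup N_{\mathcal{U}} \cup R \cup \Gamma \cup D \cup \{\alpha, \beta\}$, using the classification of reachable coalitions from \Cref{lem:unified_reduction_coalitions_from_dynamics}. Since $\chi \devimplies \NS$, every deviation strictly increases the deviator's utility, so it suffices to argue that along any execution each agent has only finitely many improving moves available. Most agent types will deviate a bounded number of times for simple structural reasons; the only genuine difficulty is ruling out infinite oscillation of the element agents, for which I would introduce a potential function.

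First I would dispose of the agents that move at most a constant number of times. As already noted in the proof of \Cref{lem:unified_reduction_coalitions_from_dynamics}, agents in $\Gamma \cup R$ are indifferent among all coalitions they can occupy and hence never perform an \NS deviation, while $\alpha$ only ever wants to join $\beta$, which is impossible by \Cref{obs:unified_reduction_no_connecting_coalition} whenever they are apart. A dummy agent $d \in D$ positively values only $\beta$; it starts in the central coalition and either already shares it with $\beta$ (once that coalition is of \Cref{enum:unified_reduction_6}) or else $\beta$ is unreachable inside $\{\beta,\gamma\}$, so $d$ never deviates. For $\beta$, the sole improving move is the transition from a coalition of \Cref{enum:unified_reduction_7} to one of \Cref{enum:unified_reduction_6} (joining $\alpha$); afterwards $\beta$ attains utility at least $T_A$, while the only outside agent it could value positively is $\gamma$ with $\vf_\beta(\gamma) \leq T_A$ and unreachable by \Cref{obs:unified_reduction_no_connecting_coalition}, so $\beta$ deviates at most once. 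Finally, each set agent $s_M$ can only move from a coalition of \Cref{enum:unified_reduction_5} to one of \Cref{enum:unified_reduction_4}; as argued in \Cref{lem:unified_reduction_coalitions_from_dynamics} such a coalition is absorbing for $s_M$, and re-entering a coalition of \Cref{enum:unified_reduction_5} would drop its utility to $0$, so each set agent likewise deviates at most once.

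It remains to bound the deviations of the element agents, which is the crux, since an element agent may in principle migrate repeatedly between coalitions of \Cref{enum:unified_reduction_3} of growing size. Here I would track the potential $\Psi(\pi) := \sum_{C \in \pi} |C \cap N_{\mathcal{U}}|^2$, which is integer-valued and bounded by $|N_{\mathcal{U}}|^2$. Crucially, $\Psi$ is inert under every deviation analysed in the previous paragraph, as those moves never change the element-content of any coalition; hence $\Psi$ changes only when an element agent moves. By \Cref{lem:unified_reduction_coalitions_from_dynamics}, such a move leaves a coalition of \Cref{enum:unified_reduction_2} or \Cref{enum:unified_reduction_3} and enters one of \Cref{enum:unified_reduction_3}, and in all three game classes one checks that a strictly improving migration \emph{between} coalitions of \Cref{enum:unified_reduction_3} forces the destination to hold at least as many element agents as the source: writing $a \geq 1$ for the source count and $b \geq a$ for the destination count gives $\Delta\Psi = 2(b-a+1) \geq 2 > 0$. (In \MFHG such a migration is never improving, since every coalition of \Cref{enum:unified_reduction_3} yields utility exactly $1$ to its element agents, so the claim holds vacuously.) The only remaining moves are the at most $|N_{\mathcal{U}}|$ first departures from a coalition of \Cref{enum:unified_reduction_2}; such a move leaves $\Psi$ unchanged precisely when the target grouping coalition is empty, and strictly increases it otherwise. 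Thus all but at most $|N_{\mathcal{U}}|$ element-agent deviations strictly raise the bounded potential $\Psi$, so element agents deviate only finitely often, and summing over all agent types completes the argument.

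The step I expect to be the main obstacle is establishing the uniform monotonicity of $\Psi$ for the element agents: one must verify, separately for the additively separable, fractional, and modified fractional utility formulas, that ``strictly improving'' entails the size inequality $b \geq a$, and carefully isolate the single possibly neutral first move per element agent. The surrounding bookkeeping — that $\Psi$ is unaffected by all non-element deviations, and that $\beta$ and the set agents are absorbed after one move — is routine given \Cref{lem:unified_reduction_coalitions_from_dynamics} and \Cref{obs:unified_reduction_no_connecting_coalition}.
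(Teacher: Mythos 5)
Your proof is correct and follows essentially the same strategy as the paper's: bound the non-element agents of $N \setminus A$ by a constant number of deviations each, then handle the element agents with a bounded quadratic potential that improving migrations must (weakly) increase. The only difference is cosmetic: the paper uses $\Phi(\bar\pi) = \sum_{g \in \Gamma} |\bar\pi(g)|^2$, which increases \emph{strictly} under every element-agent deviation (since abandoned Type~II coalitions contain no grouping agent), whereas your $\Psi(\pi) = \sum_{C \in \pi} |C \cap N_{\mathcal{U}}|^2$ is merely nondecreasing on the at most $|N_{\mathcal{U}}|$ first departures from Type~II coalitions, which you correctly isolate and count separately.
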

\begin{proof}
    With the arguments given in \Cref{lem:unified_reduction_coalitions_from_dynamics}, one can verify that no agent in $\Gamma \cup R \cup D \cup \{\alpha\}$ ever has an incentive to deviate.
    Similarly, agent $\beta$ has no incentive to abandon a coalition of \cref{enum:unified_reduction_6} and can, therefore, deviate at most once.
    Moreover, a set agent in $N_\mathcal{M}$ does not have an incentive to deviate from a coalition of \cref{enum:unified_reduction_4}, and hence deviates at most once.

    It thus suffices to consider deviations of element agents.
    Given a partition $\bar \pi$ of $N$, consider the following potential function:
    $$
        \Phi(\bar \pi) := \sum_{g \in \Gamma} |\bar \pi(g) |^2.
    $$

    Intuitively, $\Phi$ describes the sum of squared coalition sizes of all grouping agents. Due to \Cref{lem:unified_reduction_coalitions_from_dynamics}, it is easy to see that for any $\bar \pi$ that resulted from $\pi$ through the $\chi$ dynamics, the value $\Phi(\bar \pi)$ is at most $|\Gamma| \cdot (5-1)^2 = 16h$.
    
    We claim that each deviation of an element agent strictly increases the value of potential function $\Phi$. 
    As every other agent in $N \setminus A$ can perform at most one deviation, this immediately implies the statement. 

    Let $\pi'$ be a partition that resulted from $\pi$ through an execution of the $\chi$ dynamics, and let $x \in N_\mathcal{U}$ be an element agent that performs a deviation $\pi' \overset{x}{\rightarrow_\chi} \pi''$ for some $\pi''$.
    
    If $x$ deviates from a coalition of \cref{enum:unified_reduction_2}, an increase of $\Phi$ is immediate. 
    Next, a deviation of $x$ from a coalition of \cref{enum:unified_reduction_4} would result in a forbidden coalition type, so it cannot happen. 
    It remains to consider deviations of $x$ from coalitions of \cref{enum:unified_reduction_3}.
    
    First, $x$ has no incentive to deviate from a coalition of \cref{enum:unified_reduction_3} to a coalition of \cref{enum:unified_reduction_1}, as she receives strictly positive utility for any of the former and a utility of~$0$ for the latter. 
    Further, as a coalition of \cref{enum:unified_reduction_4} cannot be formed through a deviation of $x$, this leaves a deviation of $x$ from a coalition $C$ of \cref{enum:unified_reduction_3} to a different coalition of \cref{enum:unified_reduction_3}. 
    Note that $x$ only increases her utility via such a deviation in case $|\pi'(x)| < |\pi''(x)|$. 
    Let $i := |\pi'(x)|$, and let $j := |\pi''(x)|$. 
    Then it holds that
    \begin{align*}
        & \Phi(\pi'') - \Phi(\pi') \\
        = & (|\pi'(x) \setminus \{x\}|^2 + |\pi''(x)|^2)- (|\pi'(x)|^2 + |\pi''(x) \setminus \{x\}|^2) \\
        = & ((i-1)^2 + j^2) - (i^2 + (j-1)^2) \\
        = & 2 \cdot (j-i)
        > 0\text.
    \end{align*}
    This completes the proof.
\end{proof}

\subsection{Hardness Results}

We are now ready to leverage our reduction to prove our hardness results stated in \Cref{thm:unified_NCD,thm:unified_PCD}.
First, we prove the theorem for possible convergence by utilizing the reduced games from \Cref{sec:hardnessproof:construction} as gadgets in a larger construction.

\unifiedPCD*

\begin{figure*}[ht]
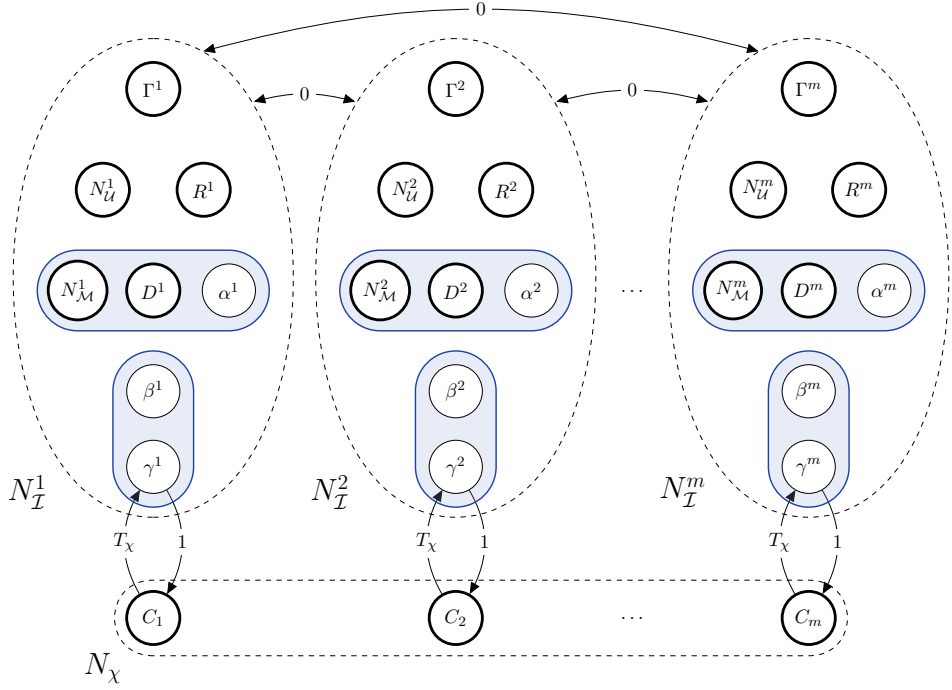

    \centering
    \resizebox{.85\textwidth}{!}{
        \tikzfig{figures/PCD_hardness_overview}
        }
    \caption{Illustration of the reduction for the proof of \Cref{thm:unified_PCD}.}
    \label{fig:PCD_reduction}
\end{figure*}

\begin{proof}
    Consider an \ASHG, \FHG, or \MFHG $G_\chi = (N_\chi, \vf_\chi)$ and a starting partition $\pi_{\chi}$ such that  every execution of the $\chi$ dynamics of $(G_{\chi}, \pi_{\chi})$ cycles.
    Let $m := |\pi_\chi|$ be the number of coalitions in $\pi_\chi$ and assume that $\pi_{\chi} = \{C_1,\dots, C_m\}$.
    For our hardness reduction, we will utilize several copies of the game constructed in \Cref{sec:hardnessproof:construction}.
    An illustration is provided in \Cref{fig:PCD_reduction}.
    Specifically, we consider one copy for each coalition in $\pi_{\chi}$.
    All coalitions introduce a disjoint set of agents except for the gadget agents that share $N_\chi$ and only have an individual dedicated agent.
    
    Formally, consider an \RXC instance $\mathcal{I} = (\mathcal{U}, \mathcal{M})$.
    We construct a reduced game $G = (N,\vf)$ where $N = N_\chi\cup \bigcup_{i\in [m]}N_{\mathcal I}^i$.
    For $i\in [m]$, $N_{\mathcal I}^i = N_{\mathcal M}^i \cup N_{\mathcal U}^i \cup R^i \cup \Gamma^i \cup D^i \cup \{\gamma^i\}$ where all sets are chosen according to the reduced instance corresponding to $\mathcal I$ constructed in \Cref{sec:hardnessproof:construction}.

    Valuations among agents in $N_\chi$ are according to $\vf_\chi$.
    Next, define $T_\chi := 1 + |N_\chi|\sum_{p, d \in N_\chi} |\vf_p(d)|$.
    For $i\in [m]$ and $p\in N_\chi$, we have 
    $$\vf_{\gamma^i}(p) = \begin{cases}
        1 & p \in C_i\\
        0 & p\in N_\chi \setminus C_i
    \end{cases}$$
    and
    $$
    \vf_{p}(\gamma^i) = \begin{cases}
        T_\chi & p \in C_i\text,\\
        0 & p\in N_\chi \setminus C_i\text.
    \end{cases}\text.$$
    Moreover, among the agent set $N_{\mathcal I}^i\cup N_\chi$, we define valuations according to the construction in \Cref{sec:hardnessproof:construction}, where we identify the variable subset of agents with $A^i = \{\gamma^i\}\cup N_\chi$.
    Note that these sets are not disjoint: we have $\bigcup_{i\in [m]}A^i = N_\chi$.
    However, the valuations are still well defined because we assume the given valuations among agents $A^i$ as just defined.
    Finally, for $1\le i < j\le m$, $p\in N_{\mathcal I}^i$, and $d \in N_{\mathcal I}^j$, we define $\vf_p(d) = 0$.

    We set the starting partition $\startpart$ to the union of the starting partition defined in \Cref{sec:hardnessproof:construction} for the agents in $N^i$ and $\pi_\chi$ for the agents in $N_\chi$, i.e.,
    
    \begin{align*}
        \startpart &= \pi_\chi \cup \bigcup_{i\in [m]} \{ \{g^i\} \mid g^i \in \Gamma^i\} \cup \{\{x^i_e, r^i_e\} \mid e \in \mathcal{U}\} \\
        &\cup \{\{D^i \cup N_{\mathcal{M}}^i \cup \{\alpha^i\}\}, \{\beta^i, \gamma^i\}\}\text.        
    \end{align*}

    For $i\in [m]$, the subgame of $G$ induced by the agent set $N_{\mathcal I}^i\cup N_\chi$ is identical to the reduced instance constructed in \Cref{sec:hardnessproof:construction} and the starting partition restricted to this agent set is identical to the respective starting partition.
    Note that the partition $\startpart$ is individually rational for all agents in $N_{\mathcal I}^i$ and, by \Cref{lem:unified_reduction_coalitions_from_dynamics}, remains individually rational for these agents while deviations only happen among coalitions among $N_{\mathcal I}^i\cup N_\chi$.
    However, while coalitions for these agents are individually rational, they have no incentive to join the coalition of an agent in $N_{\mathcal I}^j$ for $j\in [m]\setminus \{i\}$.
    Hence, throughout any execution of the $\chi$ dynamics of $(G,\startpart)$, agents in $N_{\mathcal I}^i$ and $N_{\mathcal I}^j$ will not form joint coalitions.
    Hence, we can apply all of our results from \Cref{sec:hardnessproof:behavior}, especially \Cref{lem:unified_reduction_coalitions_from_dynamics,lem:unified_reduction_correspondance_to_RX3C,lem:unified_reduction_finite_deviations}, to the subgame induced by $N_{\mathcal I}^i\cup N_\chi$.

    We now claim that the $\chi$ dynamics of $(G, \startpart)$ can converge if and only if the source instance $\mathcal{I}$ of \RXC is a Yes-instance.

    ($\Rightarrow$) Assume that $\mathcal{I}$ is a No-instance.
    Let $i\in [m]$. 
    By \Cref{lem:unified_reduction_correspondance_to_RX3C}, in any execution of the $\chi$ dynamics of $(G, \startpart)$, $\{\gamma^i,\beta^i\}$ will remain a coalition throughout.
    But then, the agents in $N_\chi$ must remain in coalitions among themselves. 
    As a consequence, the $\chi$ dynamics in the subgame $(G_\chi, \pi_\chi)$ must cycle, and thus, must also cycle in $(G, \startpart)$.

    ($\Leftarrow$) Assume that $\mathcal{I}$ is a Yes-instance. Then, as shown in \Cref{lem:unified_reduction_correspondance_to_RX3C}, there exists a sequence of $\chi$ deviations that results in all agents in $\{\gamma^i\mid i\in [m]\}$ being in their respective singleton coalitions.

    Next, for $i\in [m]$, we let each agent $\gamma_i$ deviate to her corresponding coalition $C_i \in \pi_\chi$. 
    These are \CIS deviations (and thus $\chi$ deviations), since $\gamma_i$ leaves her singleton coalition, has positive valuations for all agents in $C_i$ (and hence strictly increases her utility), and since the utility of all agents in $C_i$ strictly increases in all game classes based on the definition of $T_\chi$. 
    It is easy to see that, after these deviations, no agent in $N_\chi$, nor an agent in $\{\gamma^i\mid i\in [m]\}$ can increase their utility by performing any further deviation.
    Further, by \Cref{lem:unified_reduction_finite_deviations}, each agent in $N_{\mathcal{I}}^i$ can only perform a finite number of deviations. 
    Hence, the $\chi$ dynamics of $(G, \startpart)$ must converge subsequently.
\end{proof}

Next, we prove our theorem for necessary convergence.

\unifiedNCD*

\begin{proof}
    Assume that there exists a game $G_\chi = (N_\chi,\vf_\chi)$ and partition $\pi_\chi$ with the properties of the statement of the theorem.
    We show \NP-hardness of the complement problem, i.e., whether it is possible for the $\chi$ dynamics to cycle. 

    Given an \RXC instance $\mathcal{I} = (\mathcal{U}, \mathcal{M})$, consider the constructed instance $(G, \startpart)$ from \Cref{sec:hardnessproof:construction}. 
    We now specify the subgame by setting $A = N_{\chi}$ and all valuations among the agents in $N_{\chi}$ are defined as in $\vf_{\chi}$. 
    Further, let the dedicated agent $\gamma$ in $A$ be $a$, where $\{a\} \in \pi_\chi$, and we set the subpartition $\pi_A$ that was referenced in \Cref{eq:startpart} in \Cref{sec:hardnessproof:construction} to $\pi_\chi \setminus \{a\}$. 

    We claim that the $\chi$ dynamics of the constructed instance $(G, \pi)$ can cycle if and only if the source instance $\mathcal{I}$ of \RXC is a Yes-instance.

    ($\Rightarrow$) Assume that $\mathcal{I}$ is a No-instance. Then, by \Cref{obs:unified_reduction_no_connecting_coalition,,lem:unified_reduction_correspondance_to_RX3C}, agent $a = \gamma$ must remain in the coalition $\{\beta, a\}$. Now, again because of \Cref{obs:unified_reduction_no_connecting_coalition}, all agents in $N_\chi \setminus \{a\}$ must remain in coalitions among themselves, and, by definition, the $\chi$ dynamics of the sub-game $G_\chi$ when starting on $\pi$ must converge. Together with \Cref{lem:unified_reduction_finite_deviations}, this directly implies that the $\chi$ dynamics of $(G, \pi)$ must converge.
    
    ($\Leftarrow$) Assume that $\mathcal{I}$ is a Yes-instance. Then, because of \Cref{lem:unified_reduction_correspondance_to_RX3C}, in the $\chi$ dynamics of $G$, agent $\beta$ can deviate to join $\alpha$. But then, agent $a$ is left in her singleton coalition, and the agents in $N_\chi$ are partitioned as in $\pi_\chi$. Now, by definition of $G_\chi$, the $\chi$ dynamics can cycle.
\end{proof}

\section{Proof of Theorem~\ref{thm:consequences}}\label{app:voting}

In this section, we present the full proof of \Cref{thm:consequences} restated as follows.

\consequences* 

We want to apply \Cref{thm:unified_NCD,thm:unified_PCD}, and, therefore, have to construct games with the desired properties.

We further distinguish whether for the relevant stability notion $\chi$ it holds that $(\qout, 1)$-\SVS $\devimplies \chi$ or $(1, \qin)$-\SVS $\devimplies \chi$.
We start with the former
and construct two games for the respective applications of \Cref{thm:unified_NCD,thm:unified_PCD}.

\begin{lemma}\label{lem:SVS_quota_out_no_convergence}
    Let $\chi$ be a stability notion such that $\chi \devimplies \NS$ and $(\qout, 1)$-\SVS  $\devimplies \chi$ for some $\qout \in [0, 1[$. 
    Then, there exists an \ASHG, \FHG, and \MFHG $G = (N, \vf)$, and partition $\pi$ of $N$ such that the $\chi$ dynamics on $G$ must cycle when starting on $\pi$.
\end{lemma}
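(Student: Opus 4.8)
The plan is to construct a single \ASHG, \FHG, or \MFHG $G$ together with a starting partition $\pi$ in which, at every reachable partition, there is \emph{exactly one} available \NS deviation, this deviation is in fact a $(\qout,1)$-\SVS deviation (hence a $\chi$ deviation, since $(\qout,1)$-\SVS $\devimplies \chi$), and performing it maps the configuration to a relabeling of itself. Concretely, I would equip $G$ with an automorphism $\sigma$ (a permutation of the agents preserving all valuations) and arrange that the configuration carries a distinguished ``active'' coalition, so that the unique \NS deviation from $\pi$ produces $\sigma(\pi) \neq \pi$, shifting the active position forward. Since $\chi \devimplies \NS$, the only $\chi$ deviation from any reachable partition is this forced one; applying $\sigma$ repeatedly, every execution is driven around the finite orbit $\pi, \sigma(\pi), \sigma^2(\pi), \dots$ and is therefore infinite. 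This is exactly what it means for the $\chi$ dynamics to cycle.

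The engine of the cycle is a \emph{directed} preference cycle among a set of mover agents, and asymmetry here is essential: for symmetric valuations the utilitarian welfare is an exact potential for \NS deviations, so \NS (and hence $\chi$) dynamics would converge. I would therefore use valuations in which a mover $a$ strictly prefers to chase the next mover, while the agent it abandons would rather keep it. The abandoned coalition thus contains a member of $\Fin$, so the forced deviation strictly decreases that agent's utility and is \emph{not} a \CIS deviation; this is precisely the freedom granted by $\qout < 1$, and is what lets us escape the convergence that holds for \CIS (\Cref{obs:CISconvergence}). To certify the first $(\qout,1)$-\SVS condition, I would dilute the single member of $\Fin$ in the abandoned coalition with $s := \ceil{\qout/(1-\qout)}$ \emph{blocker} agents that strictly dislike $a$ (hence lie in $\Fout$), yielding a favour-out fraction of $s/(s+1) \ge \qout$; note that $s$ is finite because $\qout < 1$. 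For the second condition ($\qin = 1$), the welcoming coalition must contain no member of $\Fout$, which I would guarantee by letting every agent that $a$ joins either favour $a$ in or be indifferent, padding with gadget agents whose valuation function is the $0$-function so that they are indifferent to every deviation and hence never vote against a deviation and never deviate themselves.

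The main obstacle is twofold. First, I must verify that the intended deviation is the \emph{only} \NS deviation at every partition in the orbit: the active mover must strictly improve by chasing forward, each blocker must be trapped in a locally optimal coalition so that it does not itself deviate despite disliking $a$, and the abandoned pursuer must be unable to improve until its own turn comes. Establishing this requires a careful case analysis over all agents and all reachable coalitions, organized using $\sigma$ so that it suffices to check a single representative step. Second, the same numeric valuations must realize all the required strict inequalities simultaneously for \ASHGs, \FHGs, and \MFHGs; since the latter two normalize by coalition size, the sign of a utility change can depend on cardinalities. I would resolve this by choosing the valuations for the chased and welcoming agents to be large relative to all other weights, and by using the $0$-valued gadget agents purely to equalize the relevant coalition sizes, so that in all three models the mover's gain dominates while the blockers' and the pursuer's dispositions are preserved. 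Once uniqueness of the available \NS deviation and simultaneous validity across the three classes are established, the orbit argument of the first paragraph yields that every execution is infinite, completing the proof.
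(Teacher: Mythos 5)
Your proposal matches the paper's proof essentially step for step: the paper likewise builds a cyclically symmetric game (deviating agents $a_0,\dots,a_{3t+1}$ with indices read modulo $3t+2$, anchored by indifferent $0$-valuation grouping agents) in which every reachable partition admits exactly one \NS deviation, that deviation is a $(\qout,1)$-\SVS deviation whose abandoned coalition contains a single $\Fin$ member (the pursuer $a_{i-1}$) diluted by $t-1 \ge \tout = \left\lceil \nicefrac{\qout}{(1-\qout)} \right\rceil$ blockers---exactly your $s$---while the welcoming coalition is unanimously weakly in favor, and performing the deviation shifts all indices by $2t+1$, which is precisely your automorphism-orbit argument for inevitable cycling. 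The paper also handles your second obstacle the same way, choosing dominating weights ($T = t^2 - t$, valuations $T+2t$, $-T$, $t$, $-1$) so that one set of valuations realizes all required strict inequalities simultaneously for \ASHGs, \FHGs, and \MFHGs.
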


\begin{proof}
    \begin{figure}[t]
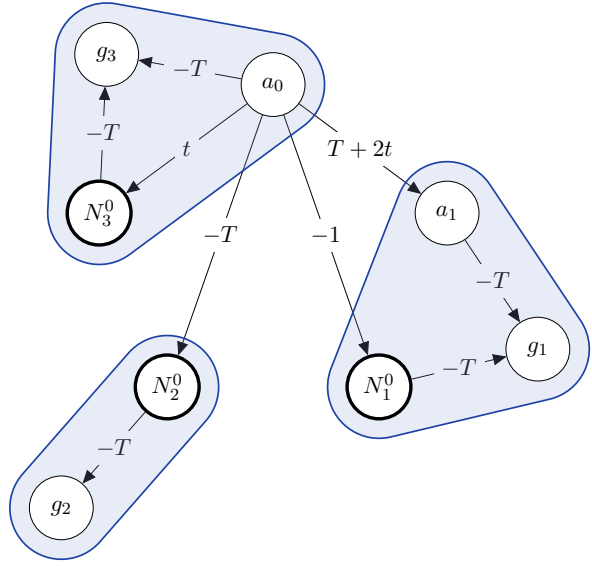

        \centering
        \resizebox{.5\textwidth}{!}{
            \tikzfig{figures/SVS_quota_out_no_convergence}
            }
        \caption{Illustration of our construction in the proof of \Cref{lem:SVS_quota_out_no_convergence}. 
        A bold node represents a group of agents. 
        A weighted directed edge from an agent $a$ to an agent $b$ represents the valuation $\vf_a(b)$ and, in the case of a group of agents, applies to all group members individually. 
        The initial partition $\pi$ is indicated in blue. 
        We omit all valuations of agents in $A \setminus \{a_0\}$ to other agents in $A$, as well as the valuations from agents in $A$ to agents in $\Gamma$ that are not part of the same coalition in $\pi$.}
        \label{fig:SVS_quota_out_no_convergence}
    \end{figure}

    Since $\qout < 1$, there exists $t'\in \mathbb N$ such that $t'\ge \qout (1 + t')$. Let $\tout := \min\{t'\in \mathbb N\colon t'\ge \qout (1 + t')\}$.
    In other words, $\tout$ is the smallest number of agents required to favor some inside agent to leave so that this agent can leave if exactly one agent is against this deviation. 
    Now, define $$t := \max\{3, \tout + 1\}\quad \text{and} \quad T := t^2 - t\text.$$

    We construct a game $G = (N, \vf)$ with agent set $N = A \cup \Gamma$, where $A = \{ a_0, \ldots, a_{3t + 1} \}$ is a set of $m := 3t + 2$ \emph{deviating agents}, and $\Gamma = \{g_1, g_2, g_3\}$ is a set of \emph{grouping agents}. 
    In this proof, we read indices modulo $m$ mapping to the representative in $\{0,\dots,m-1\}$.
    For every integer $0 \le i \le 3t + 1$, let
    \begin{itemize}
        \item $N_1^i := \{ a_{i + j} \mid j \in \{2, \ldots, t\}\}$,
        \item $N_2^i := \{ a_{i + j} \mid j \in \{t+1, \ldots, 2t+1\}\}$, and
        \item $N_3^i := \{ a_{i + j} \mid j \in \{2t+2, \ldots, 3t+1\}\}$.
    \end{itemize}
    We will use these sets to define partitions.
    They essentially subdivide the deviating agents into three intervals of agents that encompass all deviating agents except for $a_i$ and $a_{i + 1}$.
    It holds that $|N_1^i| = t-1$, $|N_2^i| = t+1$, and $|N_3^i| = t$.
    
    For $0\le i \le 3t+1$, agent $a_i$ has the following valuations: 
    \begin{enumerate}[1.]
        \item Let $\vf_{a_i}(a_{i + 1}) = T+2t$.
        \item For each $a \in N_1^i$, let $\vf_{a_i}(a) = -1$.
        \item For each $a \in N_2^i$, let $\vf_{a_i}(a) = -T$.
        \item For each $a \in N_3^i$, let $\vf_{a_i}(a) = t$.
        \item For each $g \in \Gamma$, let $\vf_{a_i}(g) = -T$.
    \end{enumerate}
    All other valuations (i.e., the outgoing valuations of agents in $\Gamma$) are $0$.
    We illustrate our construction in \cref{fig:SVS_quota_out_no_convergence}.

    For $0\le i\le 3t+1$, define partition $\pi_i = \{C_1^i, C_2^i, C_3^i\}$ where $C_1^i = \{a_{i+1}, g_1\} \cup N_1^i$, $C_2^i = \{g_2\} \cup N_2^i$, and $C_3^i = \{a_i, g_3\} \cup N_3^i$.
    We claim that in $\pi_i$, agent $a_i$ has a $(\qout, 1)$-\SVS deviation to join $C_1^i$, whereas there exists no other \NS deviation (including by $a_i$).
    Therefore, there exists a unique $\chi$ deviation in $\pi_i$ as $\chi \devimplies \NS$ and $(\qout, 1)$-\SVS  $\devimplies \chi$.
    
    Note that the deviation by $a_i$ to join $C_1^i$ results in the partition $\pi_{i+2t+1}$, which is identical in terms of valuations up to shifting indices by $2t+1$.
    Hence, proving our claim establishes that the dynamics starting from $\pi_0$ must cycle.
    Without loss of generality, we prove our claim for the case $i=0$, and set $\pi := \pi_0$, and $C_j := C_j^i$ for $j\in [3]$.

    First, we will show that $a_0$ can perform a $(\qout, 1)$-\SVS deviation to join coalition $C_1$. 
    Observe that $\sum_{a \in C_3\setminus \{a_0\}} \vf_{a_0}(a) = t \cdot t - T = t$, while $\sum_{a \in C_1} \vf_{a_0}(a) = T+2t - T + (t-1) \cdot (-1) = t+1$.
    Hence, since $|C_3| = |C_1 \cup \{a_0\}|$, in all game classes, it holds that $a_0$ strictly prefers $C_1\cup\{a_0\}$ over $C_3$.
    
    Now, let $\bar N_3^0 = N_3^0 \setminus \{a_{3t+1}\}$ and consider $a \in \bar N_3^0$.
    Then, $\sum_{b \in C_3\setminus\{a\}} \vf_a(b) \geq T + t - T + (t-1) \cdot (-1) = 1$, but $\vf_a(a_0) = -1$. Hence, in all game classes, it holds that $a$ prefers $C_1 \setminus \{a_0\}$ over $C_1$. 
    Moreover, $g_3$ is indifferent over all coalitions, and only $a_{3t+1}$ votes against the deviation of $a_0$. 
    Hence, by the choice of~$t$, $a_0$ is allowed to abandon $C_3$. 
    
    Next, let us consider the welcoming coalition $C_1$. 
    Observe that, again, $g_1$ is indifferent between all coalitions. Further, in the case of an \ASHG and since all agents in $\{a_1\} \cup N_1^0$ have valuation $t$ for $a_0$, they will never object to $a_0$ joining. Otherwise, in case of an \FHG or \MFHG, given some $1\le i\le t-1$, 
    observe that $\uf_{a_i}(C_1) = \frac{T+2t - T + (i-1) \cdot t - (t-2-(i-1))}{\delta} = \frac{i \cdot (t+1) + 1}{\delta} \leq \frac{t \cdot (t+1)}{\delta}$, for some $\delta \in \{t+1, t+2\}$, and thus, $\uf_{a_i}(C_1) \leq t$. Then, since $\vf_{a_i}(a_0) = t$, no agent in $\{a_1\} \cup N_1^0 \setminus \{a_t\}$ objects to $a_0$ joining. Finally, it holds that $\sum_{b \in C_1\setminus \{a_t\}} \vf_{a_t}(b) = -T + (t-1) \cdot t = 0$, and, as $\vf_{a_t}(a_0) = t$, agent $a_t$ will also not object. 
    Together, the deviation of $a_0$ is a $(\qout, 1)$-\SVS deviation.

    It remains to show that no other \NS deviation is available.
    First, it is clear that $a_0$ cannot perform any different deviation, as she is individually rational in $C_1$ and has strictly negative utility for $C_2$. 
    Further, no grouping agent $g \in \Gamma$ can ever perform an \NS deviation, as they have a utility of~$0$ for all coalitions. 

    Next, for an agent in $a \in N_3^0$, it holds that $\sum_{b \in C_3\setminus \{a\}} \vf_a(b) \geq T+2t - T - t = t$. On the other hand, observe that $\sum_{b \in C_1} \vf_a(b) \leq 2 \cdot (-T) + (t-1) \cdot (-1) < 0$, and $\sum_{b \in C_2} \vf_a(b) \leq 2 \cdot (-T) + t \cdot t = -2t^2 + 2t + t^2 = 2t - t^2$, and, since $t \geq 3$, it follows that $\sum_{b \in C_2} \vf_a(b) < 0$. 
    Thus, $a$ has no incentive to deviate in any of the considered game classes.

    For an agent $a \in N_1^0$, it holds that $\sum_{b \in C_1} \vf_a(b) \geq (t-1) \cdot t - T = 0$, and it holds that $\sum_{b \in C_1} \vf_{a_1}(b) = T+2t - T + (t-1) \cdot (-1) \geq t$. 
    By contrast, for any agent $a \in N_1^0 \cup \{a_1\}$, we have $\sum_{b \in C_3} \vf_a(b) \leq 2 \cdot (-T) + t \cdot t < 0$, using $t\ge 3$ once again, and $\sum_{b \in C_2} \leq 2 \cdot (-T) + T + 2t + t \cdot (-1) = t - T < 0$. Thus, no agent in $C_1$ has an incentive to deviate.

    Finally, for the agents in $N^0_2$, it holds that $\sum_{b \in C_2\setminus \{a_{t+1}\}} \vf_{a_{t+1}}(b) = T+2t -T + (t-1) \cdot (-1) \geq t$, and $\sum_{b \in C_2\setminus \{a_{2t+1}\}} \vf_{a_{2t+1}}(b) = -T + t \cdot t = t$, with this value being at least as high for any other agent in $N_2^0$. 
    However, for $a \in N_2^0$, we have $\sum_{b \in C_3} \vf_a(b) \leq T + 2t + t \cdot (-1) + 2 \cdot (-T) = 2t - T = 3t - t^2 < 0$, using $t \geq 3$, and agent $a$ has no \NS deviation to join $C_3$. 
    Finally, it holds that $\sum_{b \in C_1} \vf_a(b) \leq t \cdot t - T = t$, and $a$ has no incentive to deviate in all game classes, since $|C_1 \cup \{a\}| = |C_2|$.

    This shows that no other \NS deviation exists, which completes the proof.
\end{proof}

The game constructed in the previous proof is constructed in a way that the first deviating agent could also start in a singleton coalition.
Then, the first deviation is still a permissible $\chi$ deviation, and the game can still cycle.
Moreover, if this agent is removed from the game, the $\chi$ dynamics can be shown to quickly converge.
In this way, we can also construct the game required to apply \Cref{thm:unified_NCD}.

\begin{lemma}\label{lem:SVS_quota_out_no_convergence_adapted}
    Let $\chi$ be a stability notion such that $\chi \devimplies \NS$ and $(\qout, 1)$-\SVS  $\devimplies \chi$ for some $\qout \in [0, 1[$. 
    Then, there exists an \ASHG, \FHG, and \MFHG $G = (N, \vf)$, and partition $\pi$ of $N$ that contains a singleton coalition $\{a\} \in \pi_\chi$, such that the $\chi$ dynamics can cycle on $(G_\chi, \pi_\chi)$, but must converge on $(G_\chi - a, \pi_\chi - a)$.
\end{lemma}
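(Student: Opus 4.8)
The plan is to reuse the game $G=(N,\vf)$ constructed in the proof of \Cref{lem:SVS_quota_out_no_convergence} \emph{verbatim}, changing only the starting partition so that the first deviator $a_0$ begins on its own. Concretely, set $a:=a_0$ and let $\pi$ be obtained from $\pi_0$ by replacing the coalition $C_3^0=\{a_0,g_3\}\cup N_3^0$ with the two coalitions $\{g_3\}\cup N_3^0$ and $\{a_0\}$, keeping $C_1^0$ and $C_2^0$ as before. Then $\{a_0\}\in\pi$ is the required singleton coalition.

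For the cycling part I would first check that $a_0$ can still open the cycle from $\pi$. Leaving the singleton $\{a_0\}$ to join $C_1^0$ is a Nash deviation, since $a_0$ improves from utility $0$ to the strictly positive value it obtains in $C_1^0$ (the value computed in \Cref{lem:SVS_quota_out_no_convergence}). As $a_0$ abandons a singleton, its favour-out set is empty and the $\qout$-constraint holds vacuously, while the in-constraint at $C_1^0$ (quota $1$) is satisfied exactly as in \Cref{lem:SVS_quota_out_no_convergence}; hence the move is a $(\qout,1)$-\SVS deviation, and therefore a $\chi$ deviation because $(\qout,1)\text{-\SVS}\devimplies\chi$. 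The resulting partition equals the one reached after $a_0$'s deviation from $\pi_0$, namely $\pi_{2t+1}$ up to a relabeling of the (pairwise symmetric) grouping agents, from which the cycling execution of \Cref{lem:SVS_quota_out_no_convergence} continues unchanged. Thus the $\chi$ dynamics of $(G,\pi)$ can cycle.

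The substantial direction is that the $\chi$ dynamics of $(G-a_0,\pi-a_0)$ necessarily converges, where $\pi-a_0=\{C_1^0,C_2^0,\{g_3\}\cup N_3^0\}$. This partition is \emph{not} $\chi$-stable: once $a_0$ is deleted, $a_{2t+1}$ gains a valid $(\qout,1)$-\SVS deviation to follow its successor $a_{2t+2}$ into $\{g_3\}\cup N_3^0$, so deviations genuinely occur and one must bound all of them. I would first establish a reachability invariant in the spirit of \Cref{lem:unified_reduction_coalitions_from_dynamics}: in every reachable partition each coalition is a contiguous interval $\{a_\ell,\dots,a_r\}$ of the path $a_1,\dots,a_{3t+1}$ together with at most one grouping agent, and these intervals have bounded length. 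This should hold because the only large positive valuations are the successor weights $\vf_{a_i}(a_{i+1})=T+2t$ forming that path, whereas grouping agents and $N_2$-agents carry the prohibitive weight $-T$; hence every beneficial single-agent move keeps coalitions interval-shaped, and an interval longer than about $t$ forces a $-T$ penalty on its leftmost member and violates individual rationality.

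The crux is termination along \emph{every} execution. Here I would exploit that deleting $a_0$ cuts the cyclic chasing structure $a_0\to a_1\to\cdots\to a_{3t+1}\to a_0$ into the acyclic path $a_1\to\cdots\to a_{3t+1}$: with $a_0$ absent, $a_{3t+1}$ loses its high-value successor and can no longer drive the dynamics around the cycle, so no infinite sequence of deviations can be sustained. I expect to formalize this with a bounded potential (or a lexicographic monovariant) on the reachable interval-partitions that strictly improves with each $\chi$, hence \NS, deviation, giving a polynomial step bound and thus the ``quick'' convergence promised in the statement. The main obstacle is designing such a potential so that it simultaneously accounts for all move types --- an agent following its successor forward, an agent dropping to a singleton, and abandoned agents re-merging into a successor-block --- since naive edge-counting quantities change sign across these cases, and the argument must hold for all deviation orders rather than one convenient execution. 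As in \Cref{lem:SVS_quota_out_no_convergence}, the whole analysis is meant to be uniform over \ASHGs, \FHGs, and \MFHGs.
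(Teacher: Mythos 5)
Your cycling half is correct and coincides with the paper's: same game from \Cref{lem:SVS_quota_out_no_convergence}, same modification of the starting partition (extract $a_0$ into a singleton), and the same two observations that leaving a singleton trivializes the out-quota while the in-support at $C_1^0$ is unchanged, so $a_0$'s move remains a $(\qout,1)$-\SVS deviation and the execution then merges into the cycling one (your remark that the resulting partition is $\pi_{2t+1}$ up to permuting the interchangeable grouping agents is accurate).

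The convergence half, however, has a genuine gap: the termination argument is exactly the piece you leave open. You reduce it to an interval-shape invariant plus a potential function over all reachable partitions, and then concede that designing the potential is an unsolved obstacle --- but that is the entire content of this direction, so the proof is incomplete. The machinery is also unnecessary, because the game of \Cref{lem:SVS_quota_out_no_convergence} was engineered so that in every reachable partition \emph{at most one} \NS deviation exists; the paper's proof simply traces this forced chain, which dies after two steps. Concretely, from $\pi - a_0$ the unique \NS deviation is $a_{2t+1}$ joining $\{g_3\} \cup N_3^0$ (which you correctly spotted); afterwards the unique \NS deviation is $a_t$ joining the coalition of $g_2$; and in the resulting partition the would-be next deviator $a_{3t+1}$ has no incentive, since her only large positive valuation is $\vf_{a_{3t+1}}(a_0) = T + 2t$ and $a_0$ has been deleted, while the case analysis of \Cref{lem:SVS_quota_out_no_convergence} already rules out every other \NS deviation. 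Since $\chi \devimplies \NS$, every execution is a prefix of this length-two chain and hence finite --- no potential, no invariant, no polynomial bound over a large reachable space is needed (the lemma itself only asserts necessary convergence, not a step bound, so you have also inflated the goal). Finally, note that the invariant reasoning you sketch is itself shaky: \NS deviations need not preserve individual rationality of non-deviators, so ``an interval longer than about $t$ forces a $-T$ penalty on its leftmost member and violates individual rationality'' does not exclude such coalitions from arising; it only implies that the harmed agent would then have a deviation of her own. The fix is to replace the open-ended potential plan with a direct verification of uniqueness of the \NS deviation at each of the (at most three) partitions reachable from $\pi - a_0$.
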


\begin{proof}
    Consider the game $G = (N, \vf)$ constructed in the proof of \Cref{lem:SVS_quota_out_no_convergence}. 
    We alter $\pi^0$ to obtain a starting partition $\pi'$ by placing agent $a_0$ in her singleton coalition while leaving all other agents untouched, i.e., we define 
    $\pi' = \{\{a_0\}\}\cup \{C_1,C_2,C_3\}$ with $C_1 = \{a_1, g_1\} \cup N_1^0$, $C_2 = \{g_2\} \cup N_2^0$, and $C_3 = \{g_3\} \cup N_3^0$.
    We claim that the altered partition satisfies the assertion.

    First, to see that the $\chi$ dynamics of $(G, \pi')$ can cycle, observe that, in the first deviation of the $\chi$ dynamics of $(G, \pi^0)$, agent $a_0$ had a strictly positive utility for the welcoming coalition. 
    Thus, $a_0$ can still increase her utility by deviating from her singleton coalition to join $C_1$, and the deviation is furthermore still a $(\qout, 1)$-\SVS deviation, as the support in coalition $C_1$ remains unchanged. 
    But then, the resulting partition is identical to a partition in the cycling dynamics in the proof of \Cref{lem:SVS_quota_out_no_convergence} after the first deviation, and, going forward, the $\chi$ dynamics must cycle. 

    Next, we claim that the $\chi$ dynamics of $(G - a_0, \pi' - \{a_0\})$ must converge.
    First, one can verify that, from the initial state, only agent $a_{2t+1}$ can perform an \NS deviation to join $C_3$. 
    Afterwards, only agent $a_{t}$ can perform an \NS deviation to join $C_2$. 
    However, in the resulting partition, agent $a_{3t+1}$ cannot deviate to join $C_1$ (as she could in the original construction), because the positive valuation for agent $a_0$ is missing as an incentive.
    In fact, based on the arguments given in the proof of \Cref{lem:SVS_quota_out_no_convergence}, it is easy to see that no other \NS deviation is possible at this stage.
    Hence, the \NS dynamics, and, therefore, the $\chi$ dynamics converges.
\end{proof}

We continue with constructing games for deviation concepts implying $(1, \qin)$-\SVS deviations.
The constructed example is similar, and even simpler, as we now have deviating agents that only alternate between two coalitions.

\begin{lemma}\label{lem:SVS_quota_in_no_convergence}
    Let $\chi$ be a stability notion such that $\chi \devimplies \NS$ and $(1, \qin)$-\SVS  $\devimplies \chi$ for some $\qin \in [0, 1[$. 
    Then, there exists an \ASHG, \FHG, and \MFHG $G = (N, \vf)$, and partition $\pi$ of $N$ such that the $\chi$ dynamics on $G$ must cycle when starting on $\pi$.
\end{lemma}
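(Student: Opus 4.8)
The plan is to mirror the construction of \Cref{lem:SVS_quota_out_no_convergence}, but to \emph{dualize} the roles of the abandoned and welcoming coalitions. Since now $\qout = 1$, an agent may leave a coalition exactly when \emph{no} remaining member wants her to stay, i.e.\ when $\Fin(\pi(a),a) = \emptyset$; thus the ``leaving'' side becomes unconditional and all the delicacy is pushed onto the ``joining'' side, where $\qin < 1$ permits a deviation into a coalition over the objection of a single member. I would first set $\tin := \min\{t' \in \NN : t' \geq \qin(t'+1)\}$, the least number of supporters that outweighs one objector under the $\qin$-quota, and choose a cycle length $m$ and a large weight $T$ that are big relative to $\tin$.

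Next I would build $G$ on a cyclically arranged set of deviating agents $a_0, \dots, a_{m-1}$ (indices read modulo $m$) together with a small number of inert grouping agents of valuation $0$, spanning two coalitions. The decisive valuations would be: each $a_i$ has a large positive valuation (scaling with $T$) for the neighbouring agent into whose coalition she is meant to move, so that this move is an \NS deviation and is strictly better for her than her singleton; the welcoming coalition contains exactly $\tin$ agents with a positive valuation for $a_i$ (the \emph{supporters}) and exactly one agent with a negative valuation for $a_i$ (the \emph{objector}); and every agent in the coalition $a_i$ abandons has valuation at most $0$ for her, so that $\Fin$ of the abandoned coalition is empty and leaving is free. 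I would then specify a family of partitions $\pi_0, \pi_1, \dots$ that are pairwise isomorphic under an index shift composed with a relabelling of the (interchangeable) grouping agents, each exhibiting a single ``seam'' at which exactly one agent has an incentive to move.

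The core claim, proved exactly as in \Cref{lem:SVS_quota_out_no_convergence} by checking every agent type in all three game classes, is that in $\pi_i$ the unique \NS deviation is $a_i$ moving to the neighbouring coalition: leaving passes the $\qout = 1$ test because the abandoned coalition holds no supporter of $a_i$, and joining passes the $\qin$-test because the $\tin$ supporters outweigh the single objector, $\tin \geq \qin(\tin+1)$. As this move turns the objector into the next agent with an incentive to deviate, the resulting partition is isomorphic to some $\pi_{i'}$, so iterating rotates the seam around the whole cycle and back to $\pi_0$; hence the dynamics cannot terminate. Since $\chi \devimplies \NS$, the forced \NS move is the only candidate $\chi$ deviation, and since $(1,\qin)$-\SVS $\devimplies \chi$ it is in fact a legal $\chi$ deviation, so the $\chi$ dynamics of $(G,\pi_0)$ must cycle. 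As before, I would take all decisive valuations large compared with the coalition sizes so that the averaging denominators of \FHGs and \MFHGs never flip the sign of the relevant comparisons, handling the three classes at once.

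I expect the main obstacle to be reconciling the support structure with free leaving within a rotating two-coalition layout: the supporters must like the mover (to admit her under the quota against an objector), yet an agent must never carry her supporters into the coalition she later abandons, since any such supporter would lie in $\Fin$ and block her departure under $\qout = 1$. Getting the cyclic valuation pattern and the exact composition of the two coalitions so that ``who likes the mover'' on the joining side and ``who must not like the mover'' on the leaving side are simultaneously correct \emph{for every shift}, while keeping the deviating agent unique, is precisely the delicate bookkeeping; it is the analogue of, but — as the paper remarks — simpler than, the three-coalition calibration carried out for the $\qout$ case.
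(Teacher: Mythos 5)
Your skeleton coincides with the paper's: the same threshold $\tin = \min\{t'\in\NN : t' \ge \qin(1+t')\}$, a cyclic set of deviating agents plus indifferent grouping agents spanning two coalitions, a rotating seam at which exactly one \NS deviation exists, the observation that the join's objector becomes the next deviator, isomorphism of successive partitions up to an index shift composed with swapping the interchangeable grouping agents, and weights large enough that the \FHG/\MFHG denominators never flip the decisive comparisons. But the lemma is a bare existence claim, so its proof \emph{is} the explicit game, and that is exactly what you defer: you yourself flag ``getting the cyclic valuation pattern and the exact composition of the two coalitions \ldots{} correct for every shift'' as the remaining obstacle. That is not a loose end; it is the entire content. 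Worse, the one mechanism you do commit to --- each $a_i$ holding a large positive valuation of order $T$ for a member of the coalition she joins --- is a ``pull'' design that runs into the very problem you identify. In a cycle every agent must deviate again later, but an agent drawing utility $\approx T$ from a still-present coalition member has no \NS deviation; the rotation would thus have to be choreographed so that the valued member always departs first, a constraint your sketch never enforces, and large pulls simultaneously endanger the uniqueness of the \NS deviation elsewhere in the partition.

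The paper resolves this with a ``push'' instead: taking $t = \max\{3, \tin+1\}$ and $m = 2t+1$ deviating agents, \emph{all} positive valuations equal $1$ (towards the two grouping agents and towards the $t-1$ agents in $N_2^i$), and the only negative valuation is $\vf_{a_i}(a_{i+t+1}) = -t$ towards the ``antipodal'' agent. At the seam partition $\pi_i$, agent $a_i$ shares a coalition with her disliked antipode, pinning her utility at exactly $0$, while the other coalition offers $+1$ via its grouping agent; the index arithmetic automatically places her $t-1 \ge \tin$ supporters (each valuing her at $1$) in the welcoming coalition against the single objector $a_{i+t}$, and every member of the abandoned coalition values her at $0$ while having \emph{strictly positive} utility. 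This last point exposes a genuine flaw in your sufficient condition for free leaving: requiring only that members of the abandoned coalition have ``valuation at most $0$'' for the mover does not give $\Fin(\pi(a_i),a_i) = \emptyset$ in an \FHG or \MFHG. A member with valuation $0$ for the mover but strictly negative total utility strictly prefers the mover to \emph{stay} (her departure shrinks the denominator and makes the average worse), so that member lies in the favour-in set and blocks the departure under $\qout = 1$; the same can even happen with a strictly negative valuation if the rest of the coalition is bad enough. You would need, as the paper arranges, a sign condition on the members' total utilities in addition to their valuations for the mover.
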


\begin{proof}
    \begin{figure}
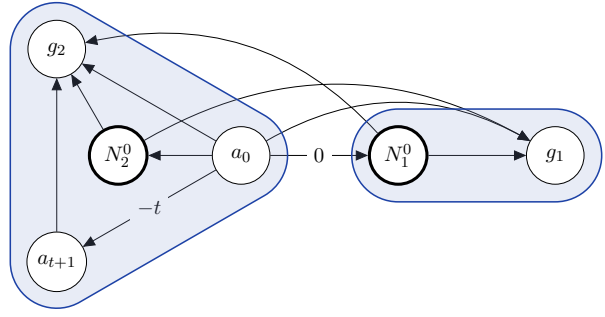

        \centering
        \resizebox{.5\textwidth}{!}{
            \tikzfig{figures/SVS_quota_in_no_convergence}
            }
        \caption{Illustration of our construction in the proof of \Cref{lem:SVS_quota_in_no_convergence}. 
        A bold node represents a group of agents. 
        A weighted directed edge from an agent $a$ to an agent $b$ represents the valuation $\vf_a(b)$ and, in the case of a group of agents, applies to all group members individually. 
        Unweighted edges represent a valuation of~$1$.
        The initial partition $\pi$ is indicated in blue. 
        We omit all valuations of agents in $A \setminus \{a_0\}$ to other agents in $A$.
        }
        \label{fig:SVS_quota_in_no_convergence}
    \end{figure}

    We start by defining a threshold similar to the proof of \Cref{lem:SVS_quota_out_no_convergence}.
    Since $\qin < 1$, there exists $t'\in \mathbb N$ such that $t'\ge \qin (1 + t')$. 
    Let $\tin := \min\{t'\in \mathbb N\colon t'\ge \qin (1 + t')\}$.
    In other words, $\tin$ is the smallest number of agents required to favor some outside agent to join a coalition so that this agent can join if exactly one agent of the coalition is against this deviation. 
    We define $$t := \max\{3, \tin + 1\}\text.$$

    We construct a game $G = (N, \vf)$ with agent set $N = A \cup \Gamma$, where $A = \{ a_0, \ldots, a_{2t} \}$ is a set of $m := 2t + 1$ \emph{deviating agents}, and $\Gamma = \{g_1, g_2\}$ is a set of \emph{grouping agents}. 
    As in the proof of \Cref{lem:SVS_quota_out_no_convergence}, we read indices modulo $m$ mapping to the representative in $\{0,\dots,m-1\}$.
    
    For every integer $0\le i \le 2t$, define
    \begin{itemize}
        \item $N_1^i := \{ a_{i +_{[n]} j} \mid j \in \{1, \ldots, t\}\}$ and
        \item $N_2^i := \{ a_{i +_{[n]} j} \mid j \in \{t+2, \ldots, 2t\}\}$.
    \end{itemize}

    We will use these sets to define partitions.
    They essentially subdivide the deviating agents into two intervals of agents that encompass all deviating agents except for $a_i$ and $a_{i+t+1}$.
    It holds that $|N_1^i| = t$ and $|N_2^i| = t-1$.
    
    For $0\le i\le 2t$, agent $a_i$ has the following valuations: 
    \begin{enumerate}[1.]
        \item Let $\vf_{a_i}(a_{i + t+1}) = -t$.
        \item For each $a \in N_2^i$, let $\vf_{a_i}(a) = 1$.
        \item For each $g \in \Gamma$, let $\vf_{a_i}(g) = 1$.
    \end{enumerate}
    All other valuations are $0$. 
    We remark that the two grouping agents are indifferent over all possible coalitions.
    The construction is illustrated in \Cref{fig:SVS_quota_out_no_convergence}.

    For $0\le i\le 2t$, define partition $\pi_i = \{C_1^i,C_2^i\}$ where $C_1^i = \{g_1\} \cup N_1^i$, and $C_2^i = \{a_i, a_{i + t+1}, g_2\} \cup N_2^i$.
    We claim that in $\pi_i$, agent $a_i$ has a $(1,\qin)$-\SVS deviation to join $C_1^i$, whereas there exists no other \NS deviation (including by $a_i$).
    Therefore, there exists a unique $\chi$ deviation in $\pi_i$ as $\chi \devimplies \NS$ and $(1,\qin)$-\SVS  $\devimplies \chi$.
    
    Note that the deviation by $a_i$ to join $C_1^i$ results in the partition $\pi_{i+t}$, which is identical in terms of valuations up to shifting indices by $t$.
    Hence, proving our claim establishes that the dynamics starting from $\pi_0$ must cycle.
    Without loss of generality, we prove our claim for the case $i=0$, and set $\pi := \pi_0$, and $C_j := C_j^i$ for $j\in [2]$.

    First, we show that $a_0$ has a $(1, \qin)$-\SVS deviation to join coalition $C_1$.
    Indeed, observe that $\sum_{a \in C_2\setminus \{a_0\}} \vf_{a_0}(a) = (t-1) + 1 - t = 0$, while $\sum_{a \in C_1} \vf_{a_0}(a) = t \cdot 0 + 1 = 1$, and agent $a_0$ strictly prefers $C_1$ over $C_2$ in all game classes.
    
    Next, observe that all agents in $N_2^0 \cup \{a_{t+1}\}$ have a utility strictly greater than $0$ for coalition $C_2$, but a valuation of~$0$ for agent $a_0$, and hence do not object to agent $a_0$ leaving.
    
    Further, for $1\le i\le t-1$, it holds that $\vf_{a_i}(a_0) = 1$. 
    Hence, $a_i$ is in favor of $a_0$ joining in the case of an \ASHG. 
    Otherwise, in case of an \FHG or \MFHG, we have $\uf_{a_i}(C_1) = \frac{i-1}{\delta} < \frac{i}{\delta+1} = \uf_{a_i}(C_1\cup\{a_0\})$ with $\delta \in \{t, t+1\}$.
    It follows that $a_i$ is also in favor of $a_0$ joining. 
    Moreover, only $a_t$ is against the deviation. 
    Hence, by the choice of~$t$, the deviation of $a_0$ is a $(1, \qin)$-\SVS deviation.

    It remains to show that no other \NS deviation is possible.
    First, as $a_0$ is in an individually rational coalition, she cannot deviate to her singleton coalition.

    Moreover, for every $i\in [2t]$, 
    agents $a_i$ and $a_{i + t + 1}$ are in different coalitions.
    Thus, $a_i$ is in an individually rational coalition, and cannot receive a positive utility by deviating to the other nonempty coalition. 
    Therefore, $a_i$ cannot perform an \NS deviation.
\end{proof}

Once again, we can modify the example but extracting the first deviating agent from the initial partition and place her in a singleton coalition to obtain a game suitable for applying \Cref{thm:unified_NCD}.

\begin{lemma}\label{lem:SVS_quota_in_no_convergence_adapted}
    Let $\chi$ be a stability notion such that $\chi \devimplies \NS$ and $(1,\qin)$-\SVS  $\devimplies \chi$ for some $\qin \in [0, 1[$. 
    Then, there exists an \ASHG, \FHG, and \MFHG $G = (N, \vf)$, and partition $\pi$ of $N$ that contains a singleton coalition $\{a\} \in \pi_\chi$, such that the $\chi$ dynamics can cycle on $(G_\chi, \pi_\chi)$, but must converge on $(G_\chi - a, \pi_\chi - a)$.
\end{lemma}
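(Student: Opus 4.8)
The plan is to mirror the proof of \Cref{lem:SVS_quota_out_no_convergence_adapted} and reuse the game $G = (N, \vf)$ built in the proof of \Cref{lem:SVS_quota_in_no_convergence}. I would obtain the starting partition $\pi'$ from $\pi_0$ by extracting the first deviator $a_0$ into her own singleton coalition while leaving all other agents in place, i.e., $\pi' = \{\{a_0\}, C_1^0, C_2^0 \setminus \{a_0\}\}$ where $C_1^0 = \{g_1\} \cup N_1^0$ and $C_2^0 \setminus \{a_0\} = \{a_{t+1}, g_2\} \cup N_2^0$. Setting $a := a_0$, it then remains to verify the two assertions: that the $\chi$ dynamics of $(G, \pi')$ can cycle, and that the $\chi$ dynamics of $(G - a_0, \pi' - a_0)$ must converge.

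For the first assertion, I would show that $a_0$ can replay the opening move of the cycle of \Cref{lem:SVS_quota_in_no_convergence}. Leaving her singleton to join $C_1^0$ is an \NS deviation, since $a_0$ has utility $0$ in her singleton but strictly positive utility for $C_1^0$ (she values $g_1$ at $1$ and is indifferent to the agents of $N_1^0$). It is moreover a $(1, \qin)$-\SVS deviation: the favour-out requirement is vacuous because the abandoned singleton yields empty favour-in and favour-out sets, while the favour-in requirement on the welcoming coalition $C_1^0$ is exactly the one already verified in \Cref{lem:SVS_quota_in_no_convergence} for $\pi_0 \overset{a_0}{\rightarrow} \pi_t$, as the deviator and the welcoming coalition coincide and $\Fin$ depends only on the latter. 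Since $(1, \qin)$-\SVS $\devimplies \chi$, this is a $\chi$ deviation, and the partition it produces coincides (up to the symmetric relabelling of the two indifferent grouping agents) with the state reached after the first deviation of the cycle. Hence the dynamics can continue exactly as in that cycle and must cycle.

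For the second assertion I would argue the stronger claim that $\pi' - a_0$, which is just $\pi_0$ restricted to $N \setminus \{a_0\}$, already admits no \NS deviation, so that the dynamics converge in zero steps. For every $i \in [2t] \setminus \{t\}$, the unique disliked agent $a_{i+t+1}$ of $a_i$ differs from $a_0$ and still lies in the coalition opposite to $a_i$; thus $a_i$ is individually rational and, by the argument closing the proof of \Cref{lem:SVS_quota_in_no_convergence}, can only reach a nonpositive valuation sum by moving to the coalition containing its enemy, so it has no profitable deviation in any of the three game classes (the comparison is purely one of sign: nonnegative current utility against a nonpositive alternative). The single exception is $a_t$, whose enemy is exactly the removed agent $a_0$; here a direct computation shows that $a_t$ keeps a strictly larger utility in its current coalition $C_1^0$ (valuation sum $t$) than in the only alternative $C_2^0 \setminus \{a_0\}$ (valuation sum $1$) in \ASHGs, \FHGs, and \MFHGs. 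Since the grouping agents are indifferent and $a_0$ is gone, $\pi' - a_0$ is \NS stable, hence $\chi$ stable.

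The one genuinely nonautomatic point is the agent $a_t$: removing her enemy $a_0$ strictly \emph{increases} the attractiveness of the alternative coalition for her, so unlike all other agents she cannot be dismissed by the blanket ``removal only weakens incentives'' reasoning and must be checked by hand. The remaining work --- confirming that deleting $a_0$ never lifts the valuation sum of an agent's target coalition above that of its current coalition for the agents $a_1, \ldots, a_{t-1}$ that positively value $a_0$ --- is routine, since for those agents $a_0$ sat in the target coalition and its removal only lowers the target's value.
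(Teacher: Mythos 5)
Your proposal is correct and follows essentially the same route as the paper's proof: reuse the game of \Cref{lem:SVS_quota_in_no_convergence}, extract $a_0$ into a singleton, observe that her opening deviation remains a $(1,\qin)$-\SVS deviation (the abandoned-coalition condition becoming vacuous) so the dynamics can re-enter the cycle, and verify that $\pi' - a_0$ is already an \NS (hence $\chi$) partition, with $a_t$ as the single agent requiring a hand computation (valuation sum $t$ versus $1$, with the size comparison settling all three game classes) exactly as in the paper. Your explicit remarks --- that matching the post-deviation state to the cycle may involve swapping the two interchangeable grouping agents, and that removing $a_0$ only lowers the target-coalition value for the agents $a_1,\dots,a_{t-1}$ that valued her positively --- make precise two points the paper's proof passes over silently, but they do not constitute a different argument.
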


\begin{proof}
    Consider the game $G = (N, \vf)$ constructed in the proof of \Cref{lem:SVS_quota_in_no_convergence}. 
    We alter $\pi^0$ to obtain a starting partition $\pi'$ by placing agent $a_0$ in her singleton coalition while leaving all other agents untouched, i.e., we define 
    $\pi' = \{\{a_0\}\}\cup \{C_1,C_2\}$ with $C_1 = \{g_1\} \cup N_1^0$ and $C_2 = \{a_t+1,g_2\} \cup N_2^0$.
    We claim that the altered partition satisfies the assertion.

    First, to see that the $\chi$ dynamics on $(G, \pi')$ can cycle, observe that $\uf_{a_0}(C_1\cup \{a_0\}) > 0$. 
    Now, by virtue of the first deviation of $a_0$ to $C_1$ being a $(1, \qin)$-\SVS deviation when $a_0$ was abandoning $C_2$, this deviation is still a $(1, \qin)$-\SVS deviation when $a_0$ abandons her singleton coalition. 
    Afterwards, due to the arguments given in the proof of \Cref{lem:SVS_quota_in_no_convergence}, the $\chi$-dynamics must cycle. 

    Next, on $(G - a_0, \pi' - a_0)$, observe that for every $i\in [2t]\setminus \{t\}$, agents $a_i$ and $a_{i+t+1}$ are still in different coalitions and, therefore, $a_i$ has no \NS deviation, as argued in the proof of \Cref{lem:SVS_quota_in_no_convergence}.
    Further, for agent $a_t$, it holds that $\sum_{a \in C_1\setminus \{a_t\}} \vf_{a_t}(a) = t \geq 3$, while $\sum_{a \in C_2} \vf_{a_t}(a) = 1$, while $|C_1| < |C_2 \cup \{a_t\}|$. 
    Thus, $a_t$ strictly prefers to stay in $C_1$ and has no \NS deviation.
    Hence, $\pi' - a_0$ is an \NS partition and, therefore, a $\chi$ partition on $G - a_0$, and the $\chi$
    dynamics converge without a single deviation.
\end{proof}

\section{Missing Proofs in Section~\ref{sec:CIS}}\label{app:CIS}

In this section, we present the missing proof concerning contractual individual stability.

\subsection{Proof of \Cref{thm:CIS_IR_NP}}\label{app:CIS_IR_NP}

In this section, we present the full proof of \Cref{thm:CIS_IR_NP}.
We provide a reduction from \problemname{Independent Set (\INDSET)}, defined as follows. 

\problemdefijcai
{Independent Set (\INDSET)}
{An undirected graph $G = (V, E)$ and a positive integer $k$.}
{Does $G$ admit an independent set $I\subseteq V$ of size at least $k$, i.e.,$|I|\ge k$ and $\{v,w\}\notin E$ for all $v,w\in I$?}

It is well-known that \INDSET is \NP-complete \citep{Karp72a}.

\CisIrNp* 

\begin{proof}
    \begin{figure}[ht!]
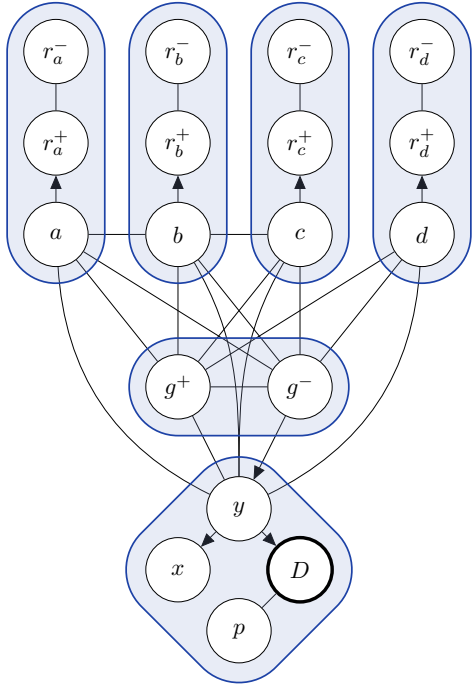

        \centering
        \resizebox{.4\textwidth}{!}{
            \tikzfig{figures/CIS_IR_NP}
            }
        \caption{Illustration of reduction in \Cref{thm:CIS_IR_NP}. The reduced instance for the source instance $(\mathcal{G} = (V, E), k)$ is displayed, where $V = \{a, b, c, d\}$, $E = \{\{a, c\}, \{a, d\}, \{b, d\}, \{c, d\}\}$, and $k = 2$.
        An edge from agent $a$ to agent $b$ represents the valuation $\vf_a(b) = f^+(n)$. We note that the bold node $D$ represents the whole set $D$ of agents, and we omit the positive valuations among all agents in $D$. All other omitted edges represent the valuation $f^-(n)$. The initial coalitions according to $\pi$ are displayed by blue boxes.}
        \label{fig:CIS_IR_NP}
    \end{figure}

    We reduce from \INDSET. 
    Let $(\mathcal{G} = (V, E), k)$ be an instance of \INDSET, and without loss of generality, assume that $k \geq 3$. 
    We construct a game $G = (N, \vf)$ with a set $N = N_V \cup R_V \cup D \cup \{g^+, g^-, x, y, p\}$ of agents.
    There, $N_V = \{a_v\}_{v \in V}$ is a set of \emph{vertex agents}, $R_V = \{r_v^+, r_v^-\}_{v \in V}$ is a set of \emph{restricting vertex agents}, and $D = \{d_i\}_{i \in [k-1]}$ is a set of \emph{dummy agents}. 
    We define valuations as follows:
    \begin{enumerate}[1.]
        \item Let $\vf_y(a) = f^+(n)$ for all $a \in D \cup N_V \cup \{x, g^+\}$.
        \item Let $\vf_d(d') = \vf_d(p) = \vf_p(d) = f^+(n)$ for all $d, d' \in D$ with $d \neq d'$.
        \item Let $\vf_{g^+}(a) = \vf_{g^-}(a) = f^+(n)$ for all $a \in N_V \cup \{y\}$, and let $\vf_{g^+}(g^-) = \vf_{g^-}(g^+) = f^+(n)$.
        \item Let $\vf_{a_v}(a_w) = f^+(n)$ for all $v, w \in V$ with $v \neq w$ and $\{v, w\} \not\in E$.
        \item Let $\vf_{a_v}(r_v^+) = \vf_{r_v^+}(r_v^-) = \vf_{r_v^-}(r_v^+) = f^+(n)$.
        \item Let all other valuations be $f^-(n)$.
    \end{enumerate}
    We set the initial partition $\pi = \{\{x, y, p\} \cup D, \{\{a_v, r_v^+, r_v^-\} \mid v \in V\}, \{g^+, g^-\}\}$, i.e., we place agents $g^+$ and $g^-$ together, agents $x$, $y$, and $p$ with $D$, and all vertex agents with their corresponding restricting vertex agents. 
    We will sometimes refer to the coalition that contains agents $g^+$ and $g^-$ as the \emph{grouping coalition} $\Gamma$. 
    We depict the schematic of our construction in \Cref{fig:CIS_IR_NP}.\footnote{Note that our reduction assumes that $k\ge 3$, but we chose an example with $k = 2$ to avoid overloading the illustration.}

    Note that $\pi$ is neither a \CIS partition, as any vertex candidate can deviate to join $g^+$, nor individually rational, as, e.g., $x$ would rather be in her singleton coalition.

    Further, since $G$ only contains one negative and one positive valuation value, we can make the following observation.

    \begin{observation}\label{obs:CIS_IR_NP_fin_fout}
        Given an agent $a \in N$, a coalition $C$, and an agent $c \in C \setminus \{a\}$, for the game classes \ASHG and \FHG, it holds that $c \in \Fin(C, a)$ if $\vf_c(a) = f^+(n)$, and $c \in \Fout(C, a)$ otherwise. For the game class \MFHG, it holds that $c \in \Fin(C, a)$ if $\vf_c(a) = f^+(n)$ and $\exists c' \in C: \vf_c(c') = f^-(n)$ or $C \setminus \{a\} = \{c\}$, and $c \in \Fout(C, a)$ if $\vf_c(a) = f^-(n)$ and $\exists c' \in C: \vf_c(c') = f^+(n)$ or $C \setminus \{a\} = \{c\}$.
    \end{observation}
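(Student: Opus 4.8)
The plan is to reduce the whole statement to a single sign computation in each game class, exploiting that every valuation of $G$ takes one of only two values, with $f^+(n) > 0 > f^-(n)$. I would fix $a$, $C$, and $c \in C \setminus \{a\}$, and note that whether $c \in \Fin(C,a)$ or $c \in \Fout(C,a)$ is decided by the sign of $\uf_c(C \cup \{a\}) - \uf_c(C \setminus \{a\})$: positive places $c$ in $\Fin(C,a)$, negative in $\Fout(C,a)$, and zero means $c$ is indifferent and lies in neither set. I would write $s := |C \setminus \{a\}| \ge 1$ and $S := \sum_{b \in (C \setminus \{a\}) \setminus \{c\}} \vf_c(b)$, and record the crucial bookkeeping fact that $S$ is a sum of exactly $s-1$ terms, each in $\{f^-(n), f^+(n)\}$, so that $(s-1) f^-(n) \le S \le (s-1) f^+(n)$.

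For \ASHGs the two compared coalitions differ only in whether $a$ is present, so $\uf_c(C \cup \{a\}) - \uf_c(C \setminus \{a\}) = \vf_c(a)$, and the sign is simply that of $\vf_c(a)$; this gives the \ASHG claim at once. For \FHGs a direct calculation yields
$$\uf_c(C \cup \{a\}) - \uf_c(C \setminus \{a\}) = \frac{S + \vf_c(a)}{s+1} - \frac{S}{s} = \frac{s\,\vf_c(a) - S}{s(s+1)},$$
so the sign is that of $s\,\vf_c(a) - S$. If $\vf_c(a) = f^+(n)$ then $s\,\vf_c(a) = s f^+(n) > (s-1) f^+(n) \ge S$, so $c \in \Fin(C,a)$; if $\vf_c(a) = f^-(n)$ then symmetrically $s f^-(n) < (s-1) f^-(n) \le S$, so $c \in \Fout(C,a)$. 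The point that makes the \FHG characterisation as clean as the \ASHG one is that the coefficient $s$ of $\vf_c(a)$ strictly exceeds the number $s-1$ of summands in $S$, so the extremal value of $S$ can never neutralise $\vf_c(a)$, and indifference is impossible.

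For \MFHGs I would first dispose of the singleton case $C \setminus \{a\} = \{c\}$ (i.e.\ $s = 1$): here $\uf_c(C \setminus \{a\}) = \uf_c(\{c\}) = 0$ and $\uf_c(C \cup \{a\}) = \vf_c(a)$, so $c \in \Fin(C,a) \Leftrightarrow \vf_c(a) = f^+(n)$ and $c \in \Fout(C,a) \Leftrightarrow \vf_c(a) = f^-(n)$, matching the ``$C \setminus \{a\} = \{c\}$'' clause. For $s \ge 2$ the denominators are $s$ and $s-1$, and
$$\uf_c(C \cup \{a\}) - \uf_c(C \setminus \{a\}) = \frac{S + \vf_c(a)}{s} - \frac{S}{s-1} = \frac{(s-1)\,\vf_c(a) - S}{s(s-1)},$$
whose sign is that of $(s-1)\,\vf_c(a) - S$. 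Now the coefficient $s-1$ of $\vf_c(a)$ equals the number of summands in $S$, so when $\vf_c(a) = f^+(n)$ we get $(s-1)f^+(n) - S \ge 0$, with equality exactly when every term of $S$ equals $f^+(n)$; hence $c \in \Fin(C,a)$ iff, in addition, some $c' \in C$ satisfies $\vf_c(c') = f^-(n)$ (note $\vf_c(a) = f^+(n)$ rules out $c' = a$, so existence over $C$ coincides with existence over $(C \setminus \{a\}) \setminus \{c\}$). The case $\vf_c(a) = f^-(n)$ together with $\Fout(C,a)$ is entirely symmetric, yielding the stated condition.

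I expect the only genuinely delicate step to be the \MFHG case, and specifically getting the indifference boundary right: because \MFHG averages over $|C| - 1$ neighbours, the coefficient of $\vf_c(a)$ matches the number of terms in $S$, so a strict preference can collapse to indifference precisely when all of $c$'s relevant valuations coincide. Verifying that this boundary is captured exactly by the additional ``$\exists c' \in C$'' clauses (and by the separate singleton case) is the crux; the \ASHG and \FHG parts are routine sign calculations that structurally cannot produce indifference.
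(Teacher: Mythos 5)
Your proof is correct: the paper states \Cref{obs:CIS_IR_NP_fin_fout} without any explicit proof, justifying it only by the remark that the game contains just one positive and one negative valuation value, and your sign computations of $\uf_c(C \cup \{a\}) - \uf_c(C \setminus \{a\})$ in each game class supply exactly the routine verification the paper leaves implicit. You also correctly isolate and resolve the only delicate point, namely the indifference boundary in \MFHGs where the coefficient $|C \setminus \{a\}| - 1$ of $\vf_c(a)$ matches the number of summands, which is precisely what the ``$\exists c' \in C$'' clauses and the singleton case in the statement account for.
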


    We now begin our in-depth analysis of the \CIS dynamics of the construced instance $(G, \pi)$ by proving several claims. 

    \begin{claim}\label{claim:CIS_IR_NP_D_x_p_no_dev}
        No agent in $D \cup \{x, p\}$ can perform a CIS deviation abandoning coalition $D \cup \{x, y, p\}$.
    \end{claim}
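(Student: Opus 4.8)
The plan is to combine the characterization of \CIS deviations with \Cref{obs:CIS_IR_NP_fin_fout}. Recall that \CIS coincides with $(1,1)$-\SVS, so a deviation of an agent $a$ away from her coalition $C$ is permitted only if the abandoned-side quota is met with $\qout = 1$, which forces $\Fin(C, a) = \emptyset$: no agent remaining in $C$ may strictly prefer $a$ to stay. Writing $C^* := D \cup \{x, y, p\}$ for the coalition in question, it therefore suffices to exhibit, for every agent $a \in D \cup \{x, p\}$, a single witness $b \in \Fin(C^*, a)$. Note that this makes the deviation infeasible regardless of whether it would benefit the deviator, so the \NS part of the \CIS condition can be ignored.

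First I would split into the three types of deviator. If $a = x$, then $b = y$ is a witness, since $\vf_y(x) = f^+(n)$. If $a = p$, then any dummy agent $d_i \in D$ is a witness, since $\vf_{d_i}(p) = f^+(n)$; such an agent exists because $|D| = k - 1 \geq 1$ (using $k \geq 3$). If $a = d_i \in D$, then $b = y$ is again a witness, as $\vf_y(d_i) = f^+(n)$. In each case $b \in C^* \setminus \{a\}$ and $\vf_b(a) = f^+(n)$.

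It then remains to confirm, via \Cref{obs:CIS_IR_NP_fin_fout}, that each such $b$ indeed lies in the favor-in set. For \ASHGs and \FHGs this is immediate from $\vf_b(a) = f^+(n)$. For \MFHGs the Observation additionally requires that $b$ values some agent of $C^*$ at $f^-(n)$ (or that $C^* \setminus \{a\}$ is the singleton $\{b\}$, which does not occur here since $k \geq 3$ forces $|C^*| = k+2 \geq 5$). This extra condition holds for every chosen witness: for $b = y$ one has $\vf_y(p) = f^-(n)$ with $p \in C^*$, and for a dummy witness $b = d_i$ one has $\vf_{d_i}(x) = f^-(n)$ with $x \in C^*$. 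Hence $b \in \Fin(C^*, a)$ in all three game classes, so $\Fin(C^*, a) \neq \emptyset$ and no \CIS deviation of $a$ abandoning $C^*$ exists.

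I do not anticipate a real obstacle; the only point demanding care is the \MFHG case, where favor-in membership is not decided by the sign of $\vf_b(a)$ alone but also hinges on $b$ having a negative valuation within the coalition, so each witness must be verified to satisfy the supplementary condition of \Cref{obs:CIS_IR_NP_fin_fout}.
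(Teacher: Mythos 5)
Your proof is correct and follows essentially the same route as the paper: exhibit, for each potential deviator in $D \cup \{x,p\}$, a favor-in witness via \Cref{obs:CIS_IR_NP_fin_fout} (agent $y$ for deviators in $D \cup \{x\}$, a dummy agent for $p$), which empties no $\Fin$ set and thus blocks any \CIS deviation. The only cosmetic difference is the negative valuation used to satisfy the \MFHG side condition for the dummy witness ($\vf_{d_i}(x) = f^-(n)$ in your argument versus $\vf_d(y) = f^-(n)$ in the paper), both of which are valid.
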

    \begin{claimproof}
        Observe that agent $y$ has valuation $\vf_y(a) = f^+(n)$ for all agents in $a \in D \cup \{x\}$, and $\vf_y(p) = f^-(n)$. 
        Hence, by \cref{obs:CIS_IR_NP_fin_fout}, it holds that $y \in \Fin(D \cup \{x, y, p\}, a)$, and no such agent $a$ can perform a \CIS deviation from $D \cup \{x, y, p\}$. 
        Further, since $k \geq 3$, there exists an agent $d \in D$.
        Using \cref{obs:CIS_IR_NP_fin_fout} again, and since $\vf_d(p) = f^+(n)$ and $\vf_d(y) = f^-(n)$, one can see that $p$ cannot deviate.
    \end{claimproof}

    \begin{claim}\label{claim:CIS_IR_NP_R_no_dev}
        Let $v \in V$. Then no restricting vertex agent $r_v \in \{r_v^+, r_v^-\}$ can deviate from her initial coalition $\{a_v, r_v^+, r_v^-\}$, or from the coalition $\{r_v^+, r_v^-\}$.
    \end{claim}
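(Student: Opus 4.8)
The plan is to exploit the fact that $r_v^+$ and $r_v^-$ value \emph{only each other} positively, so that whenever they share a coalition, neither can strictly benefit from leaving. Since in both configurations named in the statement the two restricting vertex agents sit together, I would show that no single-agent deviation of $r_v$ is even an \NS deviation, and then invoke $\CIS \devimplies \NS$ to conclude that it is certainly not a \CIS deviation. By the symmetry of rule~5, I would fix $r := r_v^+$ and treat $r_v^-$ by an identical argument.

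First I would read off from the valuation rules that the sole positive valuation of $r$ is $\vf_r(r_v^-) = f^+(n) > 0$, while $\vf_r(b) = f^-(n) < 0$ for every other agent $b$ (in particular $\vf_r(a_v) = f^-(n)$). From this I would compute $r$'s current utility in the two relevant coalitions. In $\{r_v^+, r_v^-\}$ she obtains $f^+(n)$ in an \ASHG or \MFHG and $f^+(n)/2$ in an \FHG, all strictly positive. In $\{a_v, r_v^+, r_v^-\}$ she obtains $f^+(n) + f^-(n)$, $(f^+(n)+f^-(n))/3$, or $(f^+(n)+f^-(n))/2$ depending on the game class; each of these is nonnegative because the theorem assumes $f^+(n) \ge |f^-(n)|$.

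The crucial structural point is that any single-agent deviation of $r$ leaves $r_v^-$ behind in the abandoned coalition, so the welcoming coalition never contains $r_v^-$. As $r$ has negative valuation for every agent other than $r_v^-$, her utility in the welcoming coalition is at most $0$: exactly $0$ if she moves to her own singleton, and strictly negative if she joins any nonempty coalition of other agents (in all three game classes). Comparing this with her nonnegative current utility, no such deviation strictly increases her utility; hence it is not an \NS deviation, and therefore not a \CIS deviation.

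I expect the only delicate point to be the boundary case $f^+(n) = |f^-(n)|$, in which $r$'s utility in $\{a_v, r_v^+, r_v^-\}$ equals $0$ and a move to her singleton also yields $0$. Here I would stress that an \NS deviation (and a fortiori a \CIS deviation) requires a \emph{strict} increase in utility, so equal utilities do not license a deviation. This keeps the argument uniformly valid across \ASHGs, \FHGs, and \MFHGs, and completes the proof of the claim.
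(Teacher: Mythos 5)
Your proposal is correct and follows essentially the same route as the paper's proof: you observe that $r_v^+$ and $r_v^-$ each have nonnegative utility in both named coalitions (since $f^+(n) \ge |f^-(n)|$), that any single-agent deviation leaves the positively-valued partner behind so the welcoming coalition yields utility at most $0$ (strictly negative if nonempty), and hence no \NS deviation---and a fortiori no \CIS deviation---exists. Your explicit treatment of the boundary case $f^+(n) = |f^-(n)|$ is a welcome extra precision over the paper's more terse phrasing, but the underlying argument is identical.
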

    \begin{claimproof}
        Observe that both $r_v^+$ and $r_v^-$ get nonnegative utility from both coalitions $\{a_v, r_v^+, r_v^-\}$ and $\{r_v^+, r_v^-\}$, since $\vf_{r_v^+}(r_v^-) = \vf_{r_v^-}(r_v^+) = f^+(n)$, and $f^+(n) \geq |f^-(n)|$. Hence, neither agent can deviate to her singleton coalition. Further, since both agents $r_v^+$ and $r_v^-$ have valuation $f^-(n)$ for all agents in $N \setminus \{r_v^+, r_v^-\}$, any deviation to another nonempty coalition would strictly decrease their utility. 
    \end{claimproof}

    \begin{claim}\label{claim:CIS_IR_NP_V_dev_iff}
        A vertex agent $a_v \in N_V$ can \CIS-deviate from her initial coalition $\{a_v, r_v^+, r_v^-\}$ to join a nonempty coalition $C \subseteq \{g^+, g^-\} \cup N_V \cup \{y\}$ if and only if there is no $a_w \in N_V \cap C$ with $\{v, w\} \in E$. Further, $a_v$ cannot deviate to her singleton coalition, a coalition $\{r_w^-, r_w^+\}$ for some $w \in V \setminus \{v\}$, or any coalition $C' \in \pi \setminus \{\{a_v, r_v^+, r_v^-\}\}$.
    \end{claim}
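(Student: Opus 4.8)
The plan is to unfold a \CIS deviation --- that is, a $(1,1)$-\SVS deviation --- into its three defining requirements and test each against the construction, reading off the favour-sets via \Cref{obs:CIS_IR_NP_fin_fout}. A deviation of $a_v$ out of $\{a_v, r_v^+, r_v^-\}$ into a target $C$ is a \CIS deviation exactly when (a) $a_v$ strictly improves, (b) $\Fin(\{a_v, r_v^+, r_v^-\}, a_v) = \emptyset$, so that no abandoned agent insists on keeping her, and (c) $\Fout(C \cup \{a_v\}, a_v) = \emptyset$, so that no welcoming agent objects to her. I would first settle condition~(b) uniformly: both restricting agents value $a_v$ negatively, so by \Cref{obs:CIS_IR_NP_fin_fout} they lie in the favour-out set and never in the favour-in set, whence $\Fin(\{a_v, r_v^+, r_v^-\}, a_v)$ is empty for \emph{every} contemplated deviation. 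This leaves conditions (a) and (c) to analyse per target.

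For the biconditional in the first sentence, the crux is that (c) is precisely the adjacency condition. Each grouping agent, the agent $y$, and each vertex agent $a_w$ with $\{v,w\}\notin E$ values $a_v$ positively, so none of them enters the favour-out set; a neighbour $a_w$ with $\{v,w\}\in E$ values $a_v$ negatively and therefore objects. Hence $\Fout(C\cup\{a_v\},a_v)=\emptyset$ if and only if $N_V\cap C$ contains no neighbour of $v$, which yields the ``only if'' direction at once (a blocking neighbour rules out the deviation). For the ``if'' direction it remains to confirm improvement~(a): I would evaluate $\uf_{a_v}(C\cup\{a_v\})$ against the baseline $\uf_{a_v}(\{a_v, r_v^+, r_v^-\})$ in each of the \ASHG, \FHG, and \MFHG cases, using that absent a neighbour every member of $C$ contributes favourably to $a_v$, so that the deviation is strictly improving by elementary inequalities built on $f^+(n)\ge|f^-(n)|$.

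For the negative assertions of the second sentence, I would exhibit, for each forbidden target, a violation of (a) or (c). The singleton coalition gives $a_v$ utility $0$, which does not exceed her nonnegative baseline, so (a) fails. A leftover pair $\{r_w^+, r_w^-\}$ with $w\neq v$ contains two agents that value $a_v$ negatively and thus lie in the favour-out set (and the resulting utility is negative in any case), so (c) fails. For the remaining coalitions of $\pi$, namely $\{x,y,p\}\cup D$ and $\{a_w, r_w^+, r_w^-\}$ with $w\neq v$, an agent strictly disfavouring $a_v$ --- the agent $x$ in the former, a restricting agent in the latter --- is always present and blocks the deviation through the favour-out set.

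The step I expect to be the main obstacle is condition~(c) for \MFHGs, where favour-out membership is not determined by the sign of a single valuation but by the more intricate averaged criterion of \Cref{obs:CIS_IR_NP_fin_fout}. There I would have to argue that a neighbour present in $C$ still genuinely objects --- because it retains a positively valued co-member in $C$, or is isolated within $C$ --- or else that its presence already destroys $a_v$'s improvement, so that (a) and (c) together still reproduce the adjacency characterisation; dually, one must check that no non-adjacent member is spuriously pushed into the favour-out set. The improvement computation in the fractional classes, where utilities are size-sensitive averages, is the secondary point demanding care, but it reduces to elementary inequalities.
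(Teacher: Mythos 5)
Your decomposition of a \CIS deviation into (a) strict improvement, (b) $\Fin(\{a_v,r_v^+,r_v^-\},a_v)=\emptyset$, and (c) $\Fout(C,a_v)=\emptyset$, with all favour-sets read off via \Cref{obs:CIS_IR_NP_fin_fout}, is exactly the paper's route, and your uniform dispatch of (b) and your case analysis of the forbidden targets match the published proof step for step. Indeed, you are more explicit than the paper at the one delicate spot: for the ``only if'' direction the paper simply writes that $\vf_{a_w}(a_v)=f^-(n)$ and cites \Cref{obs:CIS_IR_NP_fin_fout}, which silently applies only the \ASHG/\FHG branch of that observation to \MFHGs, whereas you correctly note that \MFHG favour-out membership additionally requires a positively valued co-member (or a singleton welcoming coalition).

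However, the obstacle you flag is genuine, and the dichotomy you propose to resolve it --- ``the neighbour genuinely objects, or else its presence already destroys $a_v$'s improvement'' --- is not exhaustive for arbitrary $C$, so your plan cannot be completed as sketched. Concretely, in an \MFHG take $C=\{a_w,a_{u_1},a_{u_2}\}$ where $w$ is adjacent to each of $v,u_1,u_2$ while $v,u_1,u_2$ are pairwise non-adjacent (an induced star, which \INDSET instances may certainly contain): then $a_w$ values every member of $C\cup\{a_v\}$ at $f^-(n)$, so her averaged utility is $f^-(n)$ with or without $a_v$ and she is \emph{indifferent}, hence not in $\Fout(C,a_v)$; meanwhile $u_1$ and $u_2$ favour $a_v$ in, and $a_v$'s utility rises strictly from $(f^+(n)+f^-(n))/2$ to $(2f^+(n)+f^-(n))/3$, since $f^+(n)\geq|f^-(n)|$ and $f^-(n)<0$. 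So for such $C$ the deviation \emph{is} permitted despite the neighbour, and the biconditional fails; note that the paper's own proof shares this gloss. The repair is contextual rather than local: every welcoming coalition a vertex agent actually faces in the dynamics contains $g^+$ and $g^-$, whom a neighbouring vertex agent values positively (as the paper's ``if'' direction itself presupposes), and this activates the \MFHG branch of \Cref{obs:CIS_IR_NP_fin_fout} and restores $a_w$ as a blocker --- a restriction neither you nor, strictly speaking, the paper makes explicit. A second, smaller slip: as the blocker for the target $D\cup\{x,y,p\}$ you choose $x$, but $x$ values \emph{every} agent at $f^-(n)$, so in an \MFHG she is indifferent to $a_v$ joining and does not block; the paper's witness $p$, who values the (nonempty, since $k\geq 3$) set $D$ positively while disliking $a_v$, works in all three game classes.
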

    \begin{claimproof}
        To see that $a_v$ can deviate to such a coalition $C$ that contains no such agent $a_w$, observe that, by \cref{obs:CIS_IR_NP_fin_fout}, and since $\vf_c(a_v) = f^+(n)$ for all $c \in C$, it holds that $\Fout(C, a_v) = \emptyset$. Further, since $a_v$ has positive valuation for all (at least one) agents in $C$, but dislikes $r_{v}^-$ in her original coalition, she strictly improves her utility in all game classes.
        
        Second, assume that $C$ contains such an agent $a_w$. Then, since $\vf_{a_w}(a_v) = f^-(n)$, again by \cref{obs:CIS_IR_NP_fin_fout}, agent $a_v$ cannot join $C$.
        Finally, to see that $a_v$ cannot deviate to her singleton coalition, observe that, since $f^+(n) \geq |f^-(n)|$, agent $a_v$ has positive utility for her initial coalition $\{a_v, r_v^+, r_v^-\}$. Further, $a_v$ cannot join coalition $D \cup \{x, y, p\}$, since, e.g., $\vf_p(a_v) = f^-(n)$, and $a_v$ cannot join $\{r_w^+, r_w^-\}$ or $\{a_w, r_w^+, r_w^-\}$ with $w \in V \setminus \{v\}$ by \cref{obs:CIS_IR_NP_fin_fout} since, e.g., $\vf_{r_w^+}(a_v) = f^-(n)$ but $\vf_{r_w^+}(r_w^-) = f^+(n)$.
    \end{claimproof}

    \begin{claim}\label{claim:CIS_IR_NP_y_dev_iff}
        Let $N_{V}' \subseteq N_V$ be a subset of $m \in \NN$ vertex agents, and let $\Tilde{\pi}$ be the partition that resulted from $\pi$ after all agents in $N_{V}'$ joined the grouping coalition $\Gamma$. Then, $y$ can deviate to join $\Gamma$ if and only if $m \geq k$, and otherwise, $y$ cannot deviate to join any coalition in $\emptyset \cup \Tilde{\pi} \setminus \{\Tilde \pi(y)\}$.
    \end{claim}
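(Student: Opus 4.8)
The plan is to reduce the claim to the single candidate coalition $\Gamma \cup \{y\}$ and to treat the \NS requirement and the two \CIS voting requirements separately, exploiting that $\CIS \devimplies \NS$ so that any non-improving move is automatically excluded. Since only vertex agents moved in forming $\tilde\pi$, we have $\tilde\pi(y) = \{x,y,p\} \cup D$, and I first record $y$'s relevant weights: by rule~1 agent $y$ values $x$ and each $d \in D$ at $f^+(n)$ and (by default) values $p$ at $f^-(n)$, so the weight of $\tilde\pi(y)$ is $k f^+(n) + f^-(n)$ over a coalition of size $k+2$. For the target $\Gamma \cup \{y\} = \{g^+, g^-\} \cup N_V' \cup \{y\}$, agent $y$ values $g^+$ and each of the $m$ vertex agents at $f^+(n)$ and $g^-$ at $f^-(n)$, so the weight is $(m+1) f^+(n) + f^-(n)$ over a coalition of size $m+3$.

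The heart of the argument is to show that this move is improving for $y$ exactly when $m \ge k$, uniformly across \ASHGs, \FHGs, and \MFHGs. The decisive observation is that at $m = k-1$ the welcoming coalition has weight $k f^+(n) + f^-(n)$ and size $k+2$, i.e.\ it matches $\tilde\pi(y)$ exactly in both statistics, so $\uf_y(\Gamma \cup \{y\}) = \uf_y(\tilde\pi(y))$ and the move is not strictly improving. It then suffices to prove that $\uf_y(\Gamma \cup \{y\})$ is strictly increasing in $m$: for \ASHGs the weight grows by $f^+(n) > 0$ per added vertex agent, while for \FHGs and \MFHGs one differentiates $\tfrac{(m+1) f^+(n) + f^-(n)}{m+3}$ and $\tfrac{(m+1) f^+(n) + f^-(n)}{m+2}$ with respect to $m$ and checks that the numerators $2 f^+(n) - f^-(n)$ and $f^+(n) - f^-(n)$ are positive, which holds because $f^-(n) \le 0$. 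Hence $\uf_y(\Gamma \cup \{y\}) > \uf_y(\tilde\pi(y))$ iff $m \ge k$, which gives both directions of the \NS part of the claim and, in particular, shows that when $m < k$ the move to $\Gamma$ is not even an \NS deviation.

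It remains to check the voting conditions for the improving move and to exclude all other targets of $y$. For the abandoned coalition, \Cref{obs:CIS_IR_NP_fin_fout} gives $\Fin(\tilde\pi(y), y) = \emptyset$, since none of $x$, $p$, or the agents in $D$ values $y$ at $f^+(n)$, so consent to leave is unanimous. For the welcoming coalition, every member of $\Gamma$ values $y$ at $f^+(n)$ — the grouping agents $g^+, g^-$ by rule~3 and the vertex agents by construction — so \Cref{obs:CIS_IR_NP_fin_fout} yields $\Fout(\Gamma \cup \{y\}, y) = \emptyset$, i.e.\ consent to join is unanimous. Finally, when $m < k$ no other move of $y$ is even an \NS deviation: her singleton yields utility $0$, strictly below her current positive utility; joining some $\{a_v, r_v^+, r_v^-\}$ or $\{r_v^+, r_v^-\}$ only adds negatively valued restricting agents and cannot beat $k f^+(n) + f^-(n)$ because $k \ge 3$; and the move to $\Gamma$ fails by the computation above. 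I expect the main obstacle to be the \NS step, namely confirming that the coalition-size bookkeeping makes the threshold land at exactly $m = k$ \emph{simultaneously} in all three models; the matched boundary at $m = k-1$ together with strict monotonicity in $m$ is precisely what delivers this. A secondary care point is the correct application of \Cref{obs:CIS_IR_NP_fin_fout} in the \MFHG case, where an agent can be indifferent rather than strictly favouring.
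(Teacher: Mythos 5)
Your proof is correct and shares the paper's overall skeleton---use \Cref{obs:CIS_IR_NP_fin_fout} to make consent automatic on both sides (no member of $\Tilde\pi(y)$ values $y$ at $f^+(n)$, every member of $\Gamma$ does), thereby reducing the claim to the \NS condition, and then exclude the remaining targets exactly as the paper does---but your handling of the threshold itself is a genuinely different route. The paper writes both utilities with a unified offset $s \in \{1,2\}$ and explicitly computes the difference $\uf_y(\Tilde{\pi}(g^+) \cup \{y\}) - \uf_y(\Tilde{\pi}(y))$, factoring it as $(m+1-k)$ times a positive quantity; you instead observe that at $m = k-1$ the welcoming coalition matches $\Tilde\pi(y)$ in both weight $k f^+(n) + f^-(n)$ and size $k+2$, forcing equal utilities in all three classes, and then prove strict monotonicity of $\uf_y(\Gamma \cup \{y\})$ in $m$ per class (increments of $f^+(n)$ for \ASHGs, derivative numerators $2f^+(n)-f^-(n)$ and $f^+(n)-f^-(n)$ for \FHGs and \MFHGs). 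This buys a cleaner argument: the threshold lands at $m=k$ simultaneously in all three models without cross-denominator algebra, and your monotonicity step needs only $f^+(n) > 0 > f^-(n)$, not the hypothesis $f^+(n) \geq |f^-(n)|$ that the paper's factored form invokes; your sum $k f^+(n) + f^-(n)$ is also the literally correct one, whereas the paper writes $k f^+(n) - f^-(n)$, a sign slip it carries consistently through its computation. Two minor divergences are worth noting. First, for the coalitions $\{r_v^+, r_v^-\}$ and $\{a_v, r_v^+, r_v^-\}$ the paper blocks $y$ via consent (the restricting agent lands in $\Fout$ by \Cref{obs:CIS_IR_NP_fin_fout}), whereas you exclude them by a utility comparison; your terse ``because $k \geq 3$'' does check out (e.g., cross-multiplying in the \FHG case yields $(3k-2)f^+(n) - 2k f^-(n) > 0$), but the consent argument is shorter and avoids the comparison entirely. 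Second, your justification that the vertex agents value $y$ at $f^+(n)$ ``by construction'' is precisely what the paper's own proof asserts (``liked by all agents in $\Tilde\pi(g^+)$''), yet the valuation rules as listed, read literally, assign $\vf_{a_v}(y) = f^-(n)$ via the default rule; this is evidently a transcription issue in the paper's construction (without $\vf_{a_v}(y) = f^+(n)$, the vertex agents would block $y$ from ever joining $\Gamma$), so you inherit this assumption from the intended construction rather than introduce a gap of your own.
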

    \begin{claimproof}
        Due to \cref{obs:CIS_IR_NP_fin_fout}, and because $y$ is ``disliked'' by all agents in $\Tilde\pi(y) = \pi(y)$, and ``liked'' by all agents in $\Tilde{\pi}(g^+)$, no agent will ever object to $y$ performing such a deviation. Hence, we will focus our argument on whether $y$ can improve her utility. 
    
        Observe that $\sum_{c \in \Tilde\pi(y)} \vf_y(c) = k \cdot f^+(n) - f^-(n)$, and $\sum_{c \in \Tilde{\pi}(g^+)} \vf_y(c) = (m+1) \cdot f^+(n) - f^-(n)$. Then, for the game class \ASHG, the statement is immediate. Otherwise, for the game classes \FHG and \MFHG, it holds that $\uf_y(\Tilde{\pi}(y)) = \frac{k \cdot f^+(n) - f^-(n)}{k + s}$ and $\uf_y(\Tilde{\pi}(g^+) \cup \{y\}) = \frac{(m+1) \cdot f^+(n) - f^-(n)}{m+s+1}$ with $s = 2$ and $s = 1$, respectively. Then, it holds that:
        \begin{align*}
            & \uf_y(\Tilde{\pi}(g^+) \cup \{y\}) - \uf_y(\Tilde{\pi}(y)) \\
            = & \frac{(m+1) \cdot f^+(n) - f^-(n)}{m+s+1} - \frac{k \cdot f^+(n) - f^-(n)}{k + s} \\
            = & \frac{(m+1) \cdot (s \cdot f^+(n) + f^-(n)) - k \cdot (s \cdot f^+(n) + f^-(n))}{(m+s+1) \cdot (k + s)} \\
            = & (m+1 - k) \cdot \frac{s \cdot f^+(n) + f^-(n)}{(m+s+1) \cdot (k + s)}.
        \end{align*}
        Now, since $s, k > 0$, $m \geq 0$, and $f^+(n) \geq |f^-(n)|$, we have $\frac{s \cdot f^+(n) + f^-(n)}{(m+s+1) \cdot (k + s)} > 0$, and hence, $\uf_y(\Tilde{\pi}(g^+) \cup \{y\}) > \uf_y(\Tilde{\pi}(y))$ if and only if $m \geq k$.

        Now, to see that $y$ cannot deviate to her singleton coalition, observe that, since $\sum_{c \in \Tilde \pi(y)} \vf_y(c) = k \cdot f^+(n) - f^-(n) > 0$, it holds that $\uf_y(\pi(y)) > 0$ for all game classes. Finally, by \cref{obs:CIS_IR_NP_fin_fout}, $y$ cannot deviate to join a coalition $\{r_v^+, r_v^-\}$ or $\{a_v, r_v^+, r_v^-\}$ for some $v \in V$, since, e.g., $\vf_{r_v^+}(y) = f^-(n)$ but $\vf_{r_v^+}(r_v^-) = f^+(n)$.
    \end{claimproof}

    We claim that there is a sequence of \CIS deviations on $(G, \pi)$ that leads to an individually rational \CIS partition if and only if there is a size-$k$ independent set in~$\mathcal{G}$. 

    ($\Rightarrow$) Assume that there is a sequence of \CIS deviations on $(G, \pi)$ that leads to an individually rational \CIS partition. Observe that no agent in $\{g^+, g^-\}$ has ever has incentive to deviate, as they both like each other in their original coalition, but, due to the behaviour described in the above claims, the agents in $N_V \cup \{y\}$ (i.e., the only other agents that are liked by $g^+$ and $g^-$) can only deviate to join $\{g^+, g^-\}$. Now, because $x$ is not individually rational in the original coalition, and due to \cref{claim:CIS_IR_NP_D_x_p_no_dev,,claim:CIS_IR_NP_R_no_dev}, it is easy to see that $y$ needs to deviate away to join the grouping coalition. However, as shown in \cref{claim:CIS_IR_NP_y_dev_iff}, this can only happen if there is a size-$k$ independent set in $\mathcal{G}$.

    ($\Leftarrow$) Assume that there is a size-$k$ independent set $I \subseteq V$. Then, by \cref{claim:CIS_IR_NP_V_dev_iff}, all agents in $\{a_v \in N_V\}_{v \in I}$ can deviate to join $G$. Afterwards, due to \cref{claim:CIS_IR_NP_y_dev_iff} and since $|I| = k$, agent $y$ can deviate to join $\Gamma$. Finally, let $x$ deviate from $D \cup \{x, p\}$ to her singleton coalition, which is a \CIS deviation, since $\vf_x(c) = \vf_c(x) = f^-(n)$ for all $c \in D \cup \{p\}$. 
    
    Now, to see all agents have nonnegative utility in the resulting coalition, observe that all agents in $D \cup \{p\}$ like each other, all restricting agents in a coalition $\{r_v+, r_v^-\}$ (i.e., $v \in I$) like each other, and all agents in a coalition $\{a_v, r_v+, r_v^-\}$ (i.e. $v \not\in I'$) have positive utility for their coalitions, since $f^+(n) \geq f^-(n)$. Further, the statement trivially holds for agents $g^+$, $g^-$, and for all agents that have deviated. Now, due to \cref{obs:CISconvergence,,obs:CIS_dyn_maintain_IR}, it is clear that the \CIS dynamics must converge to an individually rational \CIS partition.
\end{proof}

\subsection{Proof of \Cref{claim:CIS_dyn_shortcut_singleton_exactly_one}}

\CISshortcut*

\begin{proof} 
    We prove the statement by induction over the number of deviations. 
    Clearly, the statement holds for the base case of the singleton partition. 
    Now, assume that the statement holds for all executions of the \CIS dynamics of length $k$.
    Then, in case the next deviation is performed by an agent that has previously deviated, it cannot be to her singleton coalition, as each \CIS deviation must strictly increase the deviator's utility, and weakly increase that of everyone else.
    Hence, the assertion follows after the $(k+1)$st deviation.
    
    Otherwise, in case the subsequent deviation is performed by an agent that has never deviated, observe that such an agent can only deviate from her singleton coalition.
    Indeed, it is easy to see that 
    as long as such an agent is in a coalition that contains an agent that \CIS-deviated to join her, she is prohibited from leaving via a \CIS deviation. 
    But then, she must join a different, nonempty coalition, which, by the induction hypothesis, contains exactly one agent that never deviated. 
    Then, the abandoned coalition is empty, and the welcoming coalition is joined by an additional agent who now performed a deviation. 
    Hence, the statement holds for all \CIS deviation sequences of length $k+1$. 
\end{proof}

\subsection{Proof of \Cref{thm:CIS_EXP}}\label{app:CIS_EXP}

In this section, we present the full proof of \Cref{thm:CIS_EXP} restated as follows.

\CisExp* 

\begin{proof}
    \begin{figure*}[ht]
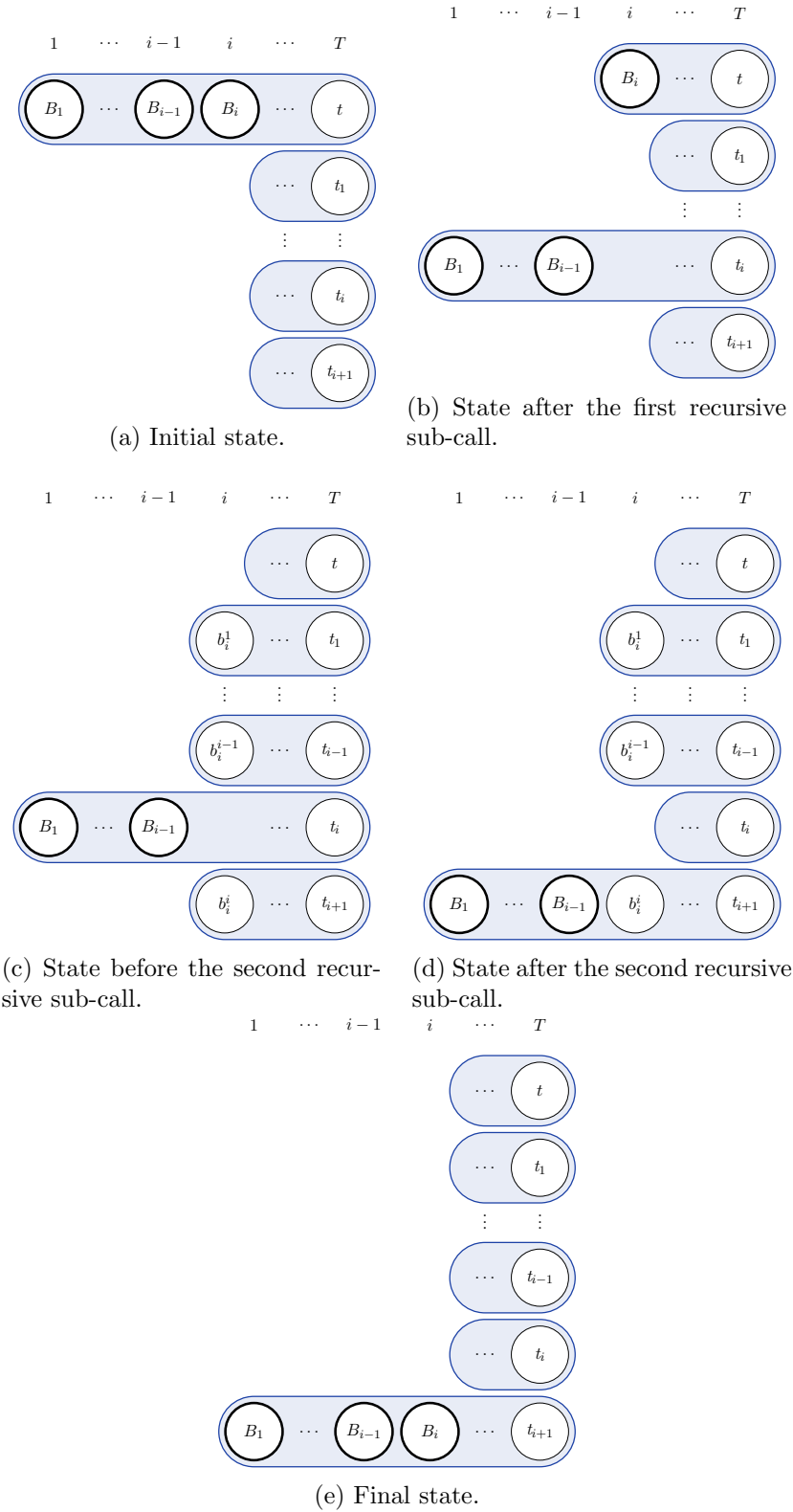

        \centering
        \begin{subfigure}[b]{0.32\textwidth}
            \centering
            \resizebox{1\textwidth}{!}{
                \tikzfig{figures/CIS_EXP_STEP1}
            }
            \caption{Initial state.}
            \label{subfig:CIS_EXP1}
        \end{subfigure}
        \hspace{0.001\textwidth}
        \begin{subfigure}[b]{0.32\textwidth}
            \centering
            \resizebox{1\textwidth}{!}{
                \tikzfig{figures/CIS_EXP_STEP2}
            }
            \caption{State after the first recursive sub-call.}
            \label{subfig:CIS_EXP2}
        \end{subfigure}
        
        \vskip\baselineskip
        
        \begin{subfigure}[b]{0.32\textwidth}
            \centering
            \resizebox{1\textwidth}{!}{
                \tikzfig{figures/CIS_EXP_STEP3}
            }
            \caption{State before the second recursive sub-call.}
            \label{subfig:CIS_EXP3}
        \end{subfigure}
        \hspace{0.01\textwidth}
        \begin{subfigure}[b]{0.32\textwidth}
            \centering
            \resizebox{1\textwidth}{!}{
                \tikzfig{figures/CIS_EXP_STEP4}
            }
            \caption{State after the second recursive sub-call.}
            \label{subfig:CIS_EXP4}
        \end{subfigure}
        \hspace{0.01\textwidth}
        \begin{subfigure}[b]{0.32\textwidth}
            \centering
            \resizebox{1\textwidth}{!}{
                \tikzfig{figures/CIS_EXP_STEP5}
            }
            \caption{Final state.}
            \label{subfig:CIS_EXP5}
        \end{subfigure}
        \caption{Illustration of the recursive procedure of \Cref{thm:CIS_EXP}. The individual sub-figures display the partition of the constructed game after certain steps of the recursive procedure. Bold nodes refer to groups of agents.}
        \label{fig:CIS_EXP}
    \end{figure*}

    Let $k \in \NN$ be a natural number with $k \geq 5$. We will construct an \ASHG $G$ such that the size of $G$ is polynomially bounded in $k$, whereas, in the $\chi$ dynamics on $G$ when starting from the singleton partition, there exists a sequence of at least $2^{k-1}$ deviations. Let $G = (N, \vf)$ be a game with a set $N = B \cup T$ of agents, where $B = \bigcup_{i \in [k]} B_i$ with $B_i = \{b_i^j\}_{j \in [i]}$ for each $i \in [k]$ is a set of \emph{counting agents}, and $T = \{t_i\}_{i \in [k+1]}$ is a set of \emph{track agents}.

    Let $f: \NN \rightarrow \NN$ be a function with $f(x) = x^{kx}$. We define valuations as follows: 
    \begin{enumerate}[1.]
        \item For each $b \in B$ and $t_i \in T$, let $\vf_b(t_i) = i$.
        \item For each $b_i \in B_i$ and $b_{j} \in B_{j}$ with $i < j$, let $\vf_{b_i}(b_{j}) = f(j)$.
        \item Let all other valuations be $0$.
    \end{enumerate}

    Note that $G$ consists of $|T| + \sum_{i \in [k]}|B_i| \leq k + 1 + \sum_{i \in [k]}k = k + 1 + k^2$ agents. Further, the maximum utility in $G$ is $f(k) = k^{k^2}$, and can thus be represented with $\log_2(k^{k^2}) = k^2 \cdot \log_2(k)$ bits. Hence, the space required to represent $G$ is polynomial in $k$; specifically, it holds that $|G| \in \bigO(k^5)$.

    The high-level idea is that we count up in binary, where a counting agent from group $B_i$ contributes not $2^i$ to the total score, but $f(i) = i^{ki}$. Further, after each step, we will ``collapse'' all counting agents from particular groups onto a single track, which ensures that the polynomial number of tracks suffices to represent an exponential number of different values.

    As the name suggests, we will use the track agents to form coalitions in the form of ``tracks'', where at each point during the relevant part of the dynamics, there exist exactly $k+1$ coalitions. We often denote to the coalition that some agent $t_i \in T$ belongs to by \emph{track-$t_i$ coalition} $C_{t_i}$ or simply by \emph{track-$i$ coalition} $C_{i}$. 
    
    Given a coalition $C \subseteq N$, and an agent $a \in N$, we denote to the value $\uf_a(C) - (\uf_a(C) \mod (k+1))$ by the \emph{truncated utility} of $a$ for $C$. Since no two track agents can ever be part of the same coalition, and as $f(2) = 2^{2k} > k+1$, the truncated utility value directly corresponds to a counting agents utility for a coalition when not taking the track agents into account. Given an $i \in [k]$ and a boolean compactor relation $\vartriangleleft \subseteq \NN \times \NN$, we define $B_{\vartriangleleft i} = \{ B_j \in B \mid j \vartriangleleft i\}$ (e.g., we write $B_{\leq i}$ to denote the set $\{B_j \in B \mid j \leq i\}$).

    In the following, we will show that, when starting from the singleton partition, in the $\chi$-dynamics on $G$, there is a sequence of at least $2^{k-1}$ deviations. Since $\CIS \devimplies \chi$, it suffices to show that each performed deviation is a \CIS deviation.

    Observe that, as there are no negative valuations, no agent will ever block an agent from joining a coalition, and only agents in $B_i$ can be blocked from leaving a coalition if it contains some agent in $B_j$ with $j < i$. Thus, when arguing that a deviation is a \CIS deviation, we will focus solely on showing that the latter part is not satisfied and that it is an \NS deviation.

    Starting from the singleton partition, we first let all agents in $B_{<k}$ deviate to join the track-$t_1$ coalition, and each agent in $B_k$ to join a unique track-$j$ coalition with $j > 1$. Then, we call the following recursive procedure with parameter $i = k-1$ with respect to the set $\{t_j\}_{j > 1}$. We illustrate the procedure in \Cref{fig:CIS_EXP}.

    Let $i \in [k]$ and assume that the following conditions hold: 
    \begin{enumerate}[I.]
        \item all agents in $B_{\leq i}$ belong to the track-$t$ coalition $C_t$ for some arbitrary $t \in T$, and
        \item there exists a set $\Tilde{T} \subseteq T \setminus \{t\}$ of $i+1$ track agents, such that for each $b \in B_{\leq i}$ and $\Tilde{t}, \Tilde{t}' \in \Tilde{T}$, it holds that
        \begin{enumerate}[(a)]
            \item $b$ has the same truncated utility for the two track coalitions $C_{\Tilde{t}}, C_{\Tilde{t}'}$, and
            \item $b$ has a strictly higher utility for $C_{\Tilde{t}}$ than for her current coalition $C_t$.
        \end{enumerate}
    \end{enumerate}

    Then, our procedure will produce a partition such that all agents in $B_{>i}$, as well as the track agents in $T$ remain the coalitions they were in before the procedure was called on $i$, and all agents in $B_{\leq i}$ move to the track coalition $C_{\Tilde{t}_{\mathrm{max}}}$ where $\Tilde{t}_{\mathrm{max}} \in \Tilde{T}$ and such that no agent has a higher valuation for any other agent in $\Tilde{T}$ (i.e., the track agent with the highest index among those in $\Tilde{T}$). 

    First, let $\Tilde{T}' = \Tilde{T} \setminus \{t_{\mathrm{max}}\}$ be a collection of $i$ track agents, and observe that all above conditions are also satisfied for $i-1$ with respect to $\Tilde{T}'$. Hence, in case $i > 1$, we call the procedure recursively on $i-1$. Afterwards, as per the above assumption, all agents in $B_{<i}$ belong to some coalition $C_{\Tilde{t}'}$ for a $\Tilde{t}' \in \Tilde{T}'$, and let $\Tilde{T}'' = \Tilde{T} \setminus \{\Tilde{t}'\}$.
    
    Second, we let each agent $b_i \in B_i$ deviate to a unique track-$\Tilde{t}''$ coalition for some $\Tilde{t}'' \in \Tilde{T}''$. To verify that these are indeed \CIS deviations, observe that any agent with nonzero valuation for any agent in $B_i$ is part of the track-$\Tilde{t}'$ coalition and $b$ strictly increases her utility as per condition II.b. 
    
    Next, in case $i > 1$, we recursively call the procedure again on $i-1$ with respect to the set $\Tilde{T}''$, where conditions I., and II.a trivially follow from the fact that they initially held for $i$ with respect to $\Tilde{T}$. To see that II.b also holds, observe that, as II.a initially held for $i$ with respect to $\Tilde{T}$, the truncated utility of any agent $b_j \in B_{<i}$ for a track-$\Tilde{t}''$ coalition with $\Tilde{t}'' \in \Tilde{T}''$ with respect to only the agents in $B_{> i}$ must be exactly $f(i)$ higher than that for her current track-$\Tilde{t}'$ coalition. In addition, agent $b_j$ gets at most $k+1$ further utility points from the track agents, and $\sum_{j \in [i-1]} j \cdot f(j) \leq (i-1)^2 \cdot f(i-1) \leq k^2 \cdot f(i-1)$ further utility points from the counting agents in $B_{<i}$. Therefore, with $i\geq 2$ and $k \geq 5$, we can follow that 
    \begin{align*}
        & \uf_{b_j}(C_{\Tilde{t}''}) - \uf_{b_j}(C_{\Tilde{t}'}) \\
        \geq \quad & f(i) - (k+1 + k^2 \cdot f(i-1)) \\
        = \quad & k^{ki} - (k+1 + k^2 \cdot k^{k(i-1)}) \\
        \geq \quad & k^{k} \cdot k^{k(i-1)} - (k^2 + k^2 \cdot k^{k(i-1)}) \\
        \geq \quad & k^{5} \cdot k^{k(i-1)} - (k^4 \cdot k^{k(i-1)}) \\
        \geq \quad & k^{k(i-1)} \\
        \geq \quad & 1.
    \end{align*}
    Hence, condition II.b is also satisfied for $i-1$ with respect to $\Tilde{T}''$.

    Finally, as per the above assumption, after the recursive call has terminated, all agents in $B_{< i}$ belong to coalition $C_{\Tilde{t}_{\mathrm{max}}}$. Hence, as per condition II.b, and based on how $\Tilde{t}_{\mathrm{max}}$ was chosen, all agents in $B_i$ (apart from the one that already belongs to $C_{\Tilde{t}_{\mathrm{max}}}$) can perform \CIS deviations to join $C_{\Tilde{t}_{\mathrm{max}}}$. Thus, our claim about the final partition after the recursive procedure on $i$ terminated was justified.

    \smallskip 

    One can verify that the specified conditions hold for the initial call on $i = k-1$ with respect to $\{t_j\}_{j > 1}$. Further, each recursive call requires at least one \CIS deviation, and, in case $i \in [2, k-1]$, induces two recursive calls on $i-1$. Thus, the sequence requires at least $2^{k-1}$ deviations, which concludes the proof.
\end{proof}

\subsection{Proof of \Cref{thm:CIS_FFPT_asymm_0}}\label{app:CISffpt}

\CISffpt* 

\begin{proof}
    Define $X := \{a \in N \mid \exists b \in N : \vf_a(b) > 0 \land \vf_b(a) = 0\}$ as the set of agents defining $s(G)$.
    Moreover, let $Y := N \setminus X$ be the set of the remaining agents.

    We claim that the statement holds for any execution of the \CIS dynamics that is constructed according to the following three phases:

    \begin{enumerate}[1.]
        \item Iterate through the agents in $Y$ in an arbitrary order. For each agent $y \in Y$, let $y$ perform a \CIS deviation if possible.
        \item Iterate through the agents in $X$ in an arbitrary order. For each agent $x \in X$, let $x$ perform a \CIS deviation to join an agent in $Y$ if possible, where we choose the deviation that maximizes $x$'s utility in case there are multiple such deviations.
        \item Perform arbitrary \CIS deviations until a \CIS partition is reached.
    \end{enumerate}

    We begin by showing the following two auxiliary claims.

    \begin{claim}\label{claim:CIS_FFPT_asymm_0_Y_only_one_deviation}
        After the first phase, no agent in $Y$ can perform a further \CIS deviation.
    \end{claim}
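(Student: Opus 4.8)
The plan is to exploit two features of \CIS dynamics started from the singleton partition, together with the defining property of $Y$. First I would record the characterization of a \CIS deviation of an agent $a$ from $\pi(a)$ to a welcoming coalition $C$ (not containing $a$) in an \ASHG: $a$ must strictly gain, every $b \in \pi(a) \setminus \{a\}$ must satisfy $\vf_b(a) \leq 0$ (nobody in the abandoned coalition wants $a$ to stay, i.e.\ $\Fin(\pi(a),a)=\emptyset$), and every $b \in C$ must satisfy $\vf_b(a) \geq 0$ (nobody in the welcoming coalition objects, i.e.\ $\Fout(C\cup\{a\},a)=\emptyset$). From this I obtain the standard monotonicity fact that every \CIS deviation weakly increases the utility of every non-deviator; hence, starting from singletons, each $\uf_a(\pi(a))$ is non-decreasing and in particular always $\geq 0$.

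The engine of the proof is a \emph{locking} phenomenon specific to $Y$. If $y \in Y$ shares a coalition with some $b$ having $\vf_b(y) > 0$, then $b \in \Fin(\pi(y),y)$, so $y$ cannot \CIS-leave; and if in addition $\vf_y(b) > 0$, then symmetrically $b$ cannot \CIS-leave either, so $b$ remains and keeps $y$ locked forever. The $Y$-property converts ``being better off'' into exactly this situation: (a) whenever $y \in Y$ performs a \CIS deviation into $C$, then (using $\uf_y \geq 0$) $y$ improves to a strictly positive utility, so $C$ contains some $b$ with $\vf_y(b) > 0$; the welcoming condition gives $\vf_b(y) \geq 0$, and $y \in Y$ forbids $\vf_b(y)=0$, hence $\vf_b(y) > 0$ and $b$ is a mutual friend that locks $y$; (b) whenever a $Y$-agent $b$ \CIS-joins a momentary singleton $\{w\}$, monotonicity forces $\vf_b(w) > 0$, the welcoming condition gives $\vf_w(b) \geq 0$, and $b \in Y$ forbids $\vf_w(b)=0$, so $w$ and $b$ again lock each other. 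In particular, an agent that has never deviated yet sits in a multi-agent coalition is the anchor of its original singleton, and by (b) is permanently locked by the first joiner.

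With these tools I would fix $y \in Y$ and split on its history at the end of Phase~1. If $y$ ever deviated during Phase~1, then by (a) it is permanently locked, so it admits no further deviation. Otherwise $y$ never moved, and its coalition is its original singleton together with agents who joined it; if any agent ever joined $\{y\}$, the first such joiner is a Phase-1 (hence $Y$) agent and by (b) is a mutual friend locking $y$, so again $y$ is stuck.

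The remaining, and main, obstacle is the case where $y$ stays a lone singleton of utility $0$ throughout Phase~1: here I must rule out that rearrangements by later-processed agents ever make some coalition attractive to $y$. I would argue by minimality. Since $y$ did not deviate at its turn, no coalition was \CIS-joinable and strictly beneficial for the utility-$0$ singleton $y$ at that moment; suppose some later deviation, say of an agent $w$ moving from $P$ to $Q$, is the \emph{first} to create such a coalition. The newly attractive coalition is either $P \setminus \{w\}$ or $Q \cup \{w\}$. The first is impossible: to leave $P$ the agent $w$ would have to be unlocked, but $w$ has not deviated before, so if $P$ is a multi-agent coalition then $w$ is its anchor and is locked, while $P=\{w\}$ leaves nothing. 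In the second case the welcoming condition forces all of $Q \cup \{w\}$ to value $y$ nonnegatively; one checks $\vf_y(w) \leq 0$ (otherwise, using the $Y$-property as in (a), $w$ would already be an available singleton mutual friend at $y$'s turn, so $y$ would have deviated), whence $\vf_y(Q) \geq \vf_y(Q\cup\{w\}) > 0$, so $Q$ was already \CIS-joinable and beneficial for $y$ one step earlier, contradicting minimality. Hence no attractive coalition ever appears and $y$ cannot deviate after Phase~1. The delicate point throughout is the bookkeeping behind ``the first joiner of a singleton must like the anchor'', which rests on $\uf \geq 0$ and is precisely what makes every anchor permanently locked.
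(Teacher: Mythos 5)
Your overall architecture is sound, and its first half coincides with the paper's: your locking lemma (mutual positive pairs created by phase-1 deviations, using the welcoming condition plus the defining property of $Y$ to upgrade $\vf_b(y) \geq 0$ to $\vf_b(y) > 0$) is exactly how the paper disposes of every agent that sits in a nonsingleton coalition after the first phase. The lone-singleton case is where you diverge, and where there is a genuine gap. Your Case A --- the newly attractive coalition is $P \setminus \{w\}$ after $w$ leaves $P$ --- is dismissed with the assertion that ``$w$ has not deviated before, so if $P$ is a multi-agent coalition then $w$ is its anchor and is locked.'' That assertion is unjustified and false in general: after the first phase, agents in $X$ may deviate repeatedly (this is precisely why the surrounding theorem budgets up to $s(G)^{s(G)}$ steps for the third phase), and an agent $x \in X$ who joined a coalition in the second phase is not locked there, because the members she values positively may value her with $0$ and hence do not block her exit ($\Fin(\pi(x),x) = \emptyset$ is compatible with $x$ having strictly gained by joining). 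So $w$ can perfectly well be a non-anchor member of a multi-agent coalition $P$ who legally leaves, and your argument then says nothing about why $P \setminus \{w\}$ cannot become joinable and beneficial for $y$.

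The case is not vacuous and needs the same device you used in Case B (which is sound, and is essentially the paper's key trick localized): since $\uf_y(P \setminus \{w\}) > 0$, pick $b \in P\setminus\{w\}$ with $\vf_y(b) > 0$; the welcoming condition for $y$'s hypothetical join gives $\vf_b(y) \geq 0$, and $y \in Y$ upgrades this to $\vf_b(y) > 0$; then trace where $b$ was at $y$'s phase-1 turn --- either a singleton (so $y$ would have deviated at her turn, contradiction) or inside a phase-1 multi-agent coalition, whose members are all permanently locked, which must then be reconciled with $b$'s later membership and the earlier non-attractiveness of that coalition. The paper avoids your whole minimality scaffolding: it takes the hypothetical welcoming coalition $C$, splits it into the part $C'$ assembled before $y$'s phase-1 turn (where $\uf_y(C') \leq 0$ must hold, else $y$ would have joined then --- the final welcoming condition already certifies $\vf_b(y) \geq 0$ for all $b \in C' \subseteq C$), extracts an agent $a \in C \setminus C'$ with $\vf_y(a) > 0$, and argues that $a$ must have been a singleton at $y$'s turn (any agent in a phase-1 multi-agent coalition is locked and can never enter $C$ afterwards), whence $y$ could have joined $\{a\}$ at her turn --- a direct contradiction requiring no induction over deviations. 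Your Case B reproduces this idea correctly; Case A as written does not, so the proof is incomplete until Case A is redone along these lines.
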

    \begin{claimproof}
        First, observe that no agent that is placed in a nonsingleton coalition after the first phase can perform any further \CIS deviation. 
        This is because, for an agent $a \in N$ such that some $y \in Y$ deviated to join $a$'s singleton coalition, it must have hold that $\vf_y(a) > 0$.
        In addition, since $y \not \in X$, it holds that $\vf_a(y) \neq 0$.
        Hence, since $a$ approved of $y$ joining, it follows that $\vf_a(y) > 0$. 
        Hence, $y$ and $a$ block each other from leaving.
        Similarly, for any $y'\in Y$ that joined some nonsingleton coalition $C$, there must have been an $a' \in C$ with $\vf_{y'}(a') > 0$. 
        Hence, $\vf_{a'}(y') > 0$ as above and $a'$ blocks $y'$ from leaving.

        Next, assume for contradiction that there is an agent $y \in Y$ that is in her singleton coalition after the first phase but can perform a \CIS deviation to join some coalition $C$ at some later point in the dynamics. 
        Let $C' \subseteq C$ be the agents that joined $C$ before $y$'s turn in the first phase. 
        If $\uf_y(C') > 0$, then $y$ could have joined $C'$ at her turn in the first phase. Hence, $\uf_y(C') \leq 0$ must hold and there must be some agent $a \in C \setminus C'$ such that $\vf_y(a) > 0$. 
        Note that if $a$ had ended in a nonsingleton coalition in the first phase, she would not have been able to perform another deviation, and the coalition $C$ would never have formed.  
        Hence, $a$ must have been in her singleton coalition throughout the first phase.
        Thus, when $y$ had her turn in the first phase, she could have joined $a$, a contradiction to $y$ ending up in a singleton coalition after the first phase.
    \end{claimproof}

    \begin{claim}\label{claim:CIS_FFPT_asymm_0_Y_stays_single}
        Each agent in $Y$ that is in her singleton coalition after the second phase will stay in her singleton coalition for the remaining dynamics.
    \end{claim}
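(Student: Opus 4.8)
The plan is to combine two facts that are already available. First, the proof of \Cref{claim:CIS_FFPT_asymm_0_Y_only_one_deviation} shows not only that no $Y$-agent deviates after the first phase, but also that no agent sitting in a \emph{non-singleton} coalition after the first phase ever deviates again (such agents are mutually blocking and hence frozen). Consequently, every deviation in the second and third phases is performed by an $X$-agent that is a singleton at the end of the first phase; call these the \emph{active} agents. Second, every \CIS deviation weakly increases the utility of every agent — the deviator strictly gains, while the abandoned and welcoming coalitions consent — so along the constructed execution each agent's utility is non-decreasing, and since we start from the singleton partition it is always nonnegative. These are the two engines of the argument.

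Fix $y \in Y$ that is a singleton after the second phase. I would first show that $\{y\}$ is a singleton \emph{throughout} the second phase. Since $y$ never deviates, $\{y\}$ can only grow by another agent joining it. It is a singleton already at the end of the first phase: otherwise $y$ would lie in a non-singleton coalition whose members are frozen by the fact noted above, so $y$ would keep company forever, contradicting $y$ being a singleton later. During the second phase, if any active agent $z$ joined $\{y\}$, then — as each active agent deviates at most once in the second phase — $z$ would still be with $y$ at the end of that phase, again a contradiction. Hence $\{y\}$ stays singleton for the whole second phase. By the same frozen-coalition reasoning and because second-phase deviations may only target coalitions containing a $Y$-agent, an active agent's own coalition is still its singleton at the moment of its own second-phase turn (nothing can have attached to $\{z\}$ beforehand).

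For the main step, suppose towards a contradiction that $y$ becomes non-singleton during the third phase, and let $z$ be the first agent to join $\{y\}$. Then $z$ is active, $\{y\}$ is still singleton at the instant of the join, and $z$ strictly benefits, so $\uf_z$ rises to $\uf_z(\{y,z\}) = \vf_z(y)$; in particular $\vf_z(y) > 0$, and $y$'s consent gives $\vf_y(z) \ge 0$. Now rewind to $z$'s turn in the second phase: there $z$ is a singleton and $\{y\}$ is a singleton, so joining $\{y\}$ is an available \CIS deviation onto a $Y$-agent that strictly improves $z$ (from $0$ to $\vf_z(y) > 0$). The second phase therefore forces $z$ to perform \emph{some} utility-maximizing such deviation, leaving $z$ with utility at least $\vf_z(y)$. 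By monotonicity $\uf_z$ never drops below this value, so just before the purported third-phase join we have $\uf_z \ge \vf_z(y)$, contradicting the strict improvement $\vf_z(y) > \uf_z$ required to join $\{y\}$. Hence no agent ever joins $\{y\}$ in the third phase, and $y$ remains a singleton.

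The delicate part is the bookkeeping that makes the assertion ``joining $\{y\}$ was already an option at $z$'s second-phase turn'' rigorous: one must pin down that both $\{y\}$ and $z$ are singletons precisely at that turn, which is why the frozen-coalition structure, the one-deviation-per-agent rule of the second phase, and the second phase's restriction to $Y$-targets all have to be assembled before the monotonicity comparison can be invoked. Once those singleton facts are secured, the contradiction $\vf_z(y) > \uf_z \ge \vf_z(y)$ closes the argument immediately, and the restriction of the second phase to $Y$-targets is exactly what guarantees that the competing option ``join $\{y\}$'' was legal there in the first place.
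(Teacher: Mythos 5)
Your proof is correct and follows essentially the same route as the paper's: you derive the contradiction from the facts that $y$ was a singleton throughout the second phase, so joining $\{y\}$ was already an available option at the prospective joiner's phase-two turn, and that the utility-maximizing rule of phase two together with the weak monotonicity of utilities along \CIS dynamics then forecloses any later strictly improving join of $\{y\}$. Your extra bookkeeping (frozen nonsingleton coalitions after phase one, the joiner being a singleton at her own turn since phase-two deviations only target coalitions containing a $Y$-agent, and the explicit consent condition $\vf_y(z) \geq 0$) merely makes rigorous the steps that the paper's proof of \Cref{claim:CIS_FFPT_asymm_0_Y_stays_single} leaves implicit.
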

    \begin{claimproof}
        Assume for contradiction that some agent $y\in Y$ is in a singleton coalition at the end of the second phase but is joined by an agent in the third phase.
        By \Cref{claim:CIS_FFPT_asymm_0_Y_only_one_deviation}, the joining agent must be an agent $x \in X$.
        In addition, since $y$ is in a singleton coalition at the end of the second phase, she was in a singleton coalition throughout the second phase.
        Now, recall that in \CIS dynamics utilities weakly increase after each deviation.
        Hence, $x$ obtains a higher utility than at the end of the second phase.
        But then, when it was the turn of agent~$x$ in the second phase, she did not perform her best deviation, because joining $y$ was an option.
    \end{claimproof}
    
    Now, because of \Cref{claim:CIS_FFPT_asymm_0_Y_only_one_deviation}, all deviations in the third phase must be performed by agents in $X$. Further, because of \Cref{claim:CIS_FFPT_asymm_0_Y_stays_single}, each of these deviations must be to join one of the coalitions that contained at least one agent in $X$ after the second phase.
    Note that there are at most $s(G)$ such deviations.
    Hence, after the first and second phases are complete, there are at most $s(G)^{s(G)}$ unique partitions reachable in the \CIS dynamics. 
    Since the \CIS dynamics are acyclic, there can be at most $n$ deviations in the first two phases and $s(G)^{s(G)}$ in the third phase, until the dynamics converge.
\end{proof}

\end{document}